\documentclass[11pt]{article}
\usepackage{fix-cm}
\usepackage[dvipsnames]{xcolor}
\usepackage[letterpaper,margin=1.25in,marginparwidth=1in,marginparsep=0.1in]{geometry}
\usepackage{microtype}
\usepackage{mathtools,amssymb,amsthm,thmtools}
\usepackage{enumitem}
\setlist[enumerate]{label=(\roman*),ref=(\roman*)}
\usepackage{needspace}
\usepackage{newpxmath,newpxtext}
\let\epsilon\varepsilon %
\let\emptyset\varnothing %
\usepackage{hyperref}
\usepackage{tikz}
\usetikzlibrary{decorations.pathreplacing}

\definecolor{definitioncolor}{HTML}{B8342E}
\definecolor{cyc0}{HTML}{E3E3E3}
\definecolor{cyc1}{HTML}{A6A6A6}
\definecolor{cyc2}{HTML}{555555}
\definecolor{mylinkcolor}{HTML}{2B5B9C}
\colorlet{mycitecolor}{mylinkcolor}
\definecolor{myurlcolor}{HTML}{1B7A72}
\hypersetup{
  colorlinks=true,
  linkcolor=mylinkcolor,
  citecolor=mycitecolor,
  urlcolor=myurlcolor,
  pdftitle={Settling the Approximability of Submodular Facility Location}
}

\usepackage[paper,no patch,notion]{knowledge}
\knowledgestyle*{notion}{color=black}
\knowledgestyle*{intro notion}{emphasize,color=black}
\usepackage[ruled,vlined,linesnumbered]{algorithm2e}
\SetKwFor{For}{for}{}{}
\SetKwIF{If}{ElseIf}{Else}{if}{}{else if}{else}{}
\SetArgSty{textup}

\usepackage[capitalise,noabbrev]{cleveref}
\crefname{algocf}{Algorithm}{Algorithms}
\crefname{AlgoLine}{line}{lines}
\Crefname{AlgoLine}{Line}{Lines}
\crefname{fact}{Fact}{Facts}
\crefname{question}{Question}{Questions}
\crefname{enumi}{part}{parts}
\Crefname{enumi}{Part}{Parts}
\NewDocumentCommand\itemref{smm}{%
  \IfBooleanF{#1}{\nameCref{#2}~}\hyperref[#3]{\ref*{#2}\,\ref*{#3}}}
\newtheorem{theorem}{Theorem}
\newtheorem{lemma}[theorem]{Lemma}
\newtheorem{corollary}[theorem]{Corollary}
\newtheorem{proposition}[theorem]{Proposition}

\theoremstyle{definition}
\newtheorem{definition}[theorem]{Definition}
\newtheorem{remark}[theorem]{Remark}

\theoremstyle{plain}
\newcommand\restatedlemmanumber{}
\newtheorem*{restatedlemmainner}{Lemma~\restatedlemmanumber}
\NewDocumentEnvironment{restatedlemma}{mo}
  {\renewcommand\restatedlemmanumber{\ref{#1}}%
   \IfValueTF{#2}{\begin{restatedlemmainner}[#2]}{\begin{restatedlemmainner}}}
  {\end{restatedlemmainner}}

\knowledgenewrobustcmd\dist{\mathop{\cmdkl{\mathrm{dist}}}\nolimits}
\knowledgenewrobustcmd\OPT{\mathord{\cmdkl{\mathrm{OPT}}}}
\knowledgenewrobustcmd\Cost{\mathop{\cmdkl{\mathrm{cost}}}\nolimits}
\knowledgenewrobustcmd\BipGraph{\mathord{\cmdkl{G}}}
\knowledgenewrobustcmd\Nbr{\mathord{\cmdkl{N_G}}}
\knowledgenewrobustcmd\LiftNbr{\mathord{\cmdkl{N_{\widetilde G}}}}
\knowledgenewrobustcmd\CoverNum{\mathord{\cmdkl\tau}}
\knowledgenewrobustcmd\AbstractInputCost{\mathord{\cmdkl{g}}}
\knowledgenewrobustcmd\AbstractOrbit{\mathord{\cmdkl{P}}}
\knowledgenewrobustcmd\AbstractOrbitIndex{\mathord{\cmdkl{\omega}}}
\knowledgenewrobustcmd\AbstractGroup{\mathord{\cmdkl{\Gamma}}}
\knowledgenewrobustcmd\AbstractLabelMap{\mathord{\cmdkl{\pi}}}
\knowledgenewrobustcmd\AbstractCopies{\mathord{\cmdkl{R}}}
\knowledgenewrobustcmd\AbstractPenaltyWeight{\mathord{\cmdkl{\lambda}}}
\knowledgenewrobustcmd\AbstractCopyCount{\mathord{\cmdkl{\mu}}}
\knowledgenewrobustcmd\AbstractBlockAvg{\mathord{\cmdkl{\bar\mu}}}
\knowledgenewrobustcmd\AbstractImbalance{\mathord{\cmdkl{\xi}}}
\knowledgenewrobustcmd\AbstractHidCost{\mathord{\cmdkl{g_\pi}}}
\knowledgenewrobustcmd\AbstractRefCost{\mathord{\cmdkl{g_{\mathrm{ref}}}}}
\knowledgenewrobustcmd\AbstractInputOpt{\mathord{\cmdkl{\mathrm{OPT}_0}}}
\knowledgenewrobustcmd\AbstractSymOpt{\mathord{\cmdkl{\mathrm{OPT}_{\mathrm{sym}}}}}
\knowledgenewrobustcmd\AbstractHidOpt{\mathord{\cmdkl{\mathrm{OPT}_\pi}}}
\knowledgenewrobustcmd\AbstractRefOpt{\mathord{\cmdkl{\mathrm{OPT}_{\mathrm{ref}}}}}
\knowledgenewrobustcmd\AbstractQuotCost{\mathord{\cmdkl{g_{\mathrm{quot}}}}}
\knowledgenewrobustcmd\AbstractDist{\mathop{\cmdkl{\mathrm{dist}}}\nolimits}
\knowledgenewrobustcmd\AbstractBlock{\mathord{\cmdkl{B}}}
\knowledgenewrobustcmd\AbstractSplitPen{\mathord{\cmdkl{\psi}}}
\knowledgenewrobustcmd\SparseCost{\mathord{\cmdkl{\rho}}}
\knowledgenewrobustcmd\SparseObj{\mathord{\cmdkl{\Phi}}}
\knowledgenewrobustcmd\DiagSet{\mathord{\cmdkl{D}}}
\knowledgenewrobustcmd\Transv{\mathord{\cmdkl{\mathcal T}}}
\knowledgenewrobustcmd\Div{\mathord{\cmdkl{D}}}
\knowledgenewrobustcmd\JrpLiftCost{\mathord{\cmdkl{g}}}
\knowledgenewrobustcmd\JrpBaseCost{\mathord{\cmdkl{g_0}}}
\knowledgenewrobustcmd\JrpBaseOpt{\mathord{\cmdkl{\mathrm{OPT}_{\mathrm{base}}}}}
\knowledgenewrobustcmd\ItemFiber{\mathord{\cmdkl{P}}}
\knowledgenewrobustcmd\ServiceCost{\mathord{\cmdkl{g}}}
\knowledgenewrobustcmd\JrpCost{\mathord{\cmdkl{g}}}
\knowledgenewrobustcmd\Code{\mathord{\cmdkl{\mathsf{C}}}}
\knowledgenewrobustcmd\Enc{\mathord{\cmdkl{\mathsf{Enc}}}}
\knowledgenewrobustcmd\Dec{\mathord{\cmdkl{\mathsf{Dec}}}}
\knowledgenewrobustcmd\ListDec{\mathord{\cmdkl{\mathsf{ListDec}}}}
\knowledgenewrobustcmd\SymbolMap{\mathord{\cmdkl{\iota}}}
\makeatletter
\newcommand\AlignedIntro[1]{\ifmeasuring@#1\else\phantomsection\intro*#1\fi}
\makeatother
\knowledge{ground set}{notion}
\knowledge{assignment}{notion}
\knowledge{submodular allocation}{notion}
\knowledge{normalized}{notion}
\knowledge{cluster}{notion}
\knowledge{homogeneous}{notion}
\knowledge{value oracle}{notion}
\knowledge{adaptive}{notion}
\knowledge{expected-cost}{notion}
\knowledge{success-probability}{notion}
\knowledge{polynomial query bound}{notion}
\knowledge{bipartite graph}{notion}
\knowledge{neighborhood}{notion}
\knowledge{cover number}{notion}
\knowledge{degree}{notion}
\knowledge{d-regular}{notion}
\knowledge{witness}{notion}
\knowledge{minimizing witness}{notion}
\knowledge{cap}{notion}
\knowledge{triangle inequality}{notion}
\knowledge{fiber}{notion}
\knowledge{base graph}{notion}
\knowledge{cyclic lift}{notion}
\knowledge{lifted client}{notion}
\knowledge{lifted right vertex}{notion}
\knowledge{cyclic coordinate}{notion}
\knowledge{whole fibers}{notion}
\knowledge{lift}{notion}
\knowledge{interval submodular allocation}{notion}
\knowledge{window}{notion}
\knowledge{matching}{notion}
\knowledge{block family}{notion}
\knowledge{transversal}{notion}
\knowledge{selector}{notion}
\knowledge{diversity}{notion}
\knowledge{profile}{notion}
\knowledge{interval selector}{notion}
\knowledge{weighted diversity}{notion}
\knowledge{weighted interval-diversity instance}{notion}
\knowledge{permutation group}{notion}
\knowledge{invariant}{notion}
\knowledge{shift}{notion}
\knowledge{lifted instance}{notion}
\knowledge{symmetry gap}{notion}
\knowledge{matroid}{notion}
\knowledge{Lovasz extension}{notion}
\knowledge{rank function}{notion}
\knowledge{multilinear extension}{notion}
\knowledge{instance}{notion}
\knowledge{abstract symmetric instance}{notion}
\knowledge{abstract symmetry group}{notion}
\knowledge{abstract orbit}{notion}
\knowledge{abstract input facility cost function}{notion}
\knowledge{abstract input optimum}{notion}
\knowledge{abstract symmetric optimum}{notion}
\knowledge{abstract hidden optimum}{notion}
\knowledge{abstract reference optimum}{notion}
\knowledge{abstract hidden instance}{notion}
\knowledge{abstract reference instance}{notion}
\knowledge{abstract replication factor}{notion}
\knowledge{abstract penalty weight}{notion}
\knowledge{abstract threshold sets}{notion}
\knowledge{abstract quotient instance}{notion}
\knowledge{abstract quotient function}{notion}
\knowledge{abstract symmetric fractional assignment}{notion}
\knowledge{configuration}{notion}
\knowledge{configuration LP}{notion}
\knowledge{whole-orbit}{notion}
\knowledge{abstract replication}{notion}
\knowledge{abstract public block}{notion}
\knowledge{abstract copy class}{notion}
\knowledge{abstract label}{notion}
\knowledge{abstract copy}{notion}
\knowledge{abstract hidden partition}{notion}
\knowledge{abstract copy count}{notion}
\knowledge{abstract block average}{notion}
\knowledge{abstract imbalance}{notion}
\knowledge{abstract well balanced}{notion}
\knowledge{abstract split penalty}{notion}
\knowledge{abstract hidden function}{notion}
\knowledge{abstract reference function}{notion}
\knowledge{admissible}{notion}
\knowledge{local expansion}{notion}
\knowledge{sparse rank function}{notion}
\knowledge{first copy neighborhood}{notion}
\knowledge{first copy assignment}{notion}
\knowledge{linear code}{notion}
\knowledge{block length}{notion}
\knowledge{polynomially constructible}{notion}
\knowledge{encoder}{notion}
\knowledge{decoder}{notion}
\knowledge{list decoder}{notion}
\title{Settling the Approximability of\\
Submodular Facility Location}
\author{%
  Adam N. Elmachtoub\thanks{Department of Industrial Engineering and
  Operations Research, Columbia University, New York, NY, USA.}
  \and Bento Natura\footnotemark[1]
  \and Meike Neuwohner\thanks{CNRS \& Department of Computer Science
  (DI ENS), \'Ecole Normale Sup\'erieure, Paris, France.}}
\date{}
\begin{document}
\maketitle

\begin{abstract}
In submodular facility location (SFL), introduced by Svitkina and
Tardos (SODA 2006), clients $\cal C$ in a metric space are assigned to
facilities $\cal F$. A facility serving a set $S \subseteq \cal C$ of
clients incurs cost $g(S)$, where $g$ is a monotone submodular function
common to all facilities. The best known approximation factor for SFL
is $O(\log\log|\cal C|)$, due to Abbasi, Adamczyk, Bosch-Calvo, Byrka,
Grandoni, Sornat, and Tinguely (ICALP 2024), improving
on the $O(\log|\cal C|)$-approximation of Svitkina and Tardos. We show that this
is best possible up to a constant factor in the value-oracle model by showing that there exists no algorithm with only polynomially many queries to $g$ with approximation ratio $o(\log\log|\cal C|)$. This bound holds for algorithms with unlimited computation
time, and when $g$ is restricted to matroid rank functions.
We also show that the natural linear programming relaxation has an
integrality gap of $\Theta(\log\log|\cal C|)$. Our lower bound follows by constructing an extension of Vondr\'ak's symmetry-gap approach (FOCS 2009) for matroid rank functions. 
\end{abstract}
 
\newpage

\section{Introduction}\label{sec:strategy}\label{sec:model}
In metric facility location, clients in a metric space are assigned to
facilities; every client pays the distance to its facility, and every
facility pays a cost for serving its clients. Svitkina and
Tardos~\cite{ST10} introduced \emph{submodular facility
location} (SFL), in which the cost of a facility is a monotone
submodular function of the set of clients it serves. Submodularity models economies of
scale in serving clients together. We consider the homogeneous version,
in which all facilities share one \emph{facility cost function}, following
the terminology of Svitkina and Tardos; the authors of~\cite{Abbasi2024}
call these \emph{opening costs}. Thus, given clients $\mathcal C$ and
facilities $\mathcal F$ in a metric space and a monotone submodular
function $\ServiceCost\colon2^{\mathcal C}\to\mathbb R_{\geq0}$, the goal
is to assign every client to a facility so as to minimize the sum, over
all facilities $f$, of $\ServiceCost(C_f)$ and the distances from the
clients in $C_f$ to $f$, where $C_f$ is the set of clients assigned
to~$f$. We define SFL formally in \cref{sec:config-preliminaries}, as a
special case of an allocation problem with general connection costs. 

If the facility cost function takes the same positive value on every nonempty
set, SFL is metric uncapacitated facility location with uniform opening
costs. This special case is APX-hard: Guha and
Khuller~\cite{GuhaKhuller1999} showed that it cannot be
approximated in polynomial time within a factor $1.463$ unless
$\mathrm{NP}\subseteq\mathrm{DTIME}(n^{O(\log\log n)})$.
If the facilities have different cost functions, then the problem contains set cover, for which the factor $O(\log|\mathcal C|)$
is best possible unless
$\mathrm{P}=\mathrm{NP}$~\cite{Feige1998,DinurSteurer2014}.
Svitkina and Tardos \cite{ST10} gave an $O(\log|\mathcal C|)$ approximation, even
when every facility has its own submodular facility cost function, and a
constant factor approximation when all facilities share a facility cost function
given by a node-weighted tree whose leaves are the clients. They asked whether SFL admits a constant-factor
approximation. Abbasi, Adamczyk, Bosch-Calvo, Byrka, Grandoni, Sornat,
and Tinguely~\cite{Abbasi2024} improved the factor to
$O(\log\log|\mathcal C|)$ by rounding the configuration LP. Their
algorithm accesses the facility cost function through a value oracle,
the standard model for general submodular functions, and solves the
configuration LP using submodular minimization. We define this model
next.

\begin{definition}[Value-oracle model]\label{def:oracle}
The client and facility sets and their connection costs are public, i.e., known to the algorithm.
An exact \intro[value oracle]{value oracle} returns $\ServiceCost(C)$ for any queried $C\subseteq\mathcal C$.
Queries may be \intro[adaptive]{adaptive} (i.e., depending on previous answers)
and computation between queries is unrestricted. A
\intro[polynomial query bound]{polynomial query bound} means that, for
some polynomial $Q$, the algorithm makes at most
$Q(|\mathcal C|+|\mathcal F|)$ queries on every instance, regardless of
its random choices.
An algorithm achieves factor $\alpha$ with
\intro[success-probability]{success probability} $p$ if, on every
instance, it returns an assignment of cost at most $\alpha\,\OPT$ with
probability at least $p$ over its random choices.
\end{definition}

\subsection{Our results}

Our main result is a matching lower bound: in the value-oracle model, no algorithm with polynomially many queries achieves factor $o(\log\log|\mathcal C|)$. Hence the $O(\log\log|\mathcal C|)$-approximation of~\cite{Abbasi2024} is asymptotically optimal.

\begin{theorem}[Lower bound for SFL]\label{thm:quantitative}
There is a constant $\alpha_0>0$ such that no algorithm with a \kl{polynomial query
bound} achieves factor $\alpha_0\log\log|\mathcal C|$ with constant
\kl[success-probability]{success probability} on SFL instances with
matroid-rank facility cost functions.
\end{theorem}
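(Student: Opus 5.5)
The plan is a symmetry-gap argument in the style of Vondr\'ak, adapted to SFL, with an instance whose integrality gap is $\Theta(\log\log|\mathcal C|)$.

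\textbf{Base instance.} First, I would build a base instance (a bipartite graph $\BipGraph$ encoding which clients are cheaply connectable to which facilities, together with a matroid rank function $\AbstractInputCost$) with two properties:
\begin{enumerate}
\item the configuration LP has a fractional solution of cost $O(1)$;
\item every integral assignment costs $\Omega(\log\log|\mathcal C|)$.
\end{enumerate}
The natural template is a recursive or hierarchical ``multi-scale'' construction: $L=\Theta(\log\log|\mathcal C|)$ levels, with clients grouped at doubly-exponentially growing scales. The rank function rewards grouping clients at one scale, while distances penalize grouping at the wrong scale. This makes an integral solution pay $\Omega(1)$ per level, whereas a fractional solution spreads over levels at total cost $O(1)$. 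This construction also yields the integrality-gap claim.

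\textbf{Symmetrization.} Next, I would make the instance invariant under a group $\AbstractGroup$, acting for instance by cyclic lifts of $\BipGraph$, so that the symmetric optimum $\AbstractSymOpt$, i.e.\ the best solution of the quotient instance, is $\Omega(\log\log|\mathcal C|)$ times the optimum $\AbstractInputOpt$.

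\textbf{Hidden and reference functions.} Following Vondr\'ak, I would define two matroid rank functions on a replicated ground set.
\begin{itemize}
\item The hidden function $\AbstractHidCost$ depends on a secret random partition $\AbstractLabelMap$ of each orbit into copies. It admits a cheap solution of cost about $\AbstractInputOpt$, obtained by assigning along the hidden copies.
\item The reference function $\AbstractRefCost$ depends only on whole orbits, so its optimum $\AbstractRefOpt$ is at least about $\AbstractSymOpt$.
\end{itemize}
Both should be obtained from a common expression via truncation with a split penalty $\AbstractSplitPen$ weighted by $\AbstractPenaltyWeight$, measuring the imbalance $\AbstractImbalance$ of a query set across copies relative to the block average $\AbstractBlockAvg$. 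The two functions then agree on every well-balanced set.

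\textbf{Indistinguishability.} The key lemma is that any polynomial family of adaptive queries is, with high probability over $\AbstractLabelMap$, well balanced. This follows by a Chernoff bound over the random copy assignment plus a union bound over the polynomially many queries. The standard argument then applies: on all query answers the two oracles coincide. Hence any algorithm outputs, with high probability, the same assignment under both functions. That assignment costs at least $\AbstractRefOpt$ under the reference function, and by closeness of the functions it costs about as much under $\AbstractHidCost$, while $\AbstractHidOpt$ is much smaller. Yao's principle then rules out randomized algorithms with constant success probability.

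\textbf{Main obstacle.} I expect the hardest step to be keeping both modified functions matroid rank functions (monotone, submodular, integral unit-increment) while preserving two things simultaneously: that the hidden optimum stays $O(\AbstractInputOpt)$, and that the gap survives the replication. The replication factor must be polynomial so that $\log\log|\mathcal C|$ is preserved, yet large enough for the concentration argument to hold. My first attempt would be to define the functions as rank functions of the form $\min\bigl(\text{rank}, \text{truncation}\bigr)$ of transversal or sparse-rank matroids built from the lifted graph. I would then verify submodularity via the matroid union and truncation operations, which preserve matroidality.
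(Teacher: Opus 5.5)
Your hiding layer is essentially the paper's symmetry-gap theorem and is fine as a plan: replication into public blocks, a random hidden partition, a split penalty making the hidden and reference functions agree on well-balanced sets, concentration per fixed query, and a union bound after simulating against the reference oracle. One small fix: count the $|\mathcal F|$ output clusters as extra queries, so the returned assignment costs exactly the same under both oracles, not ``about'' the same.

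Polynomial replication is also not the obstacle you expect. The tolerance $\lambda=R\nu/|\mathcal C_0|$ grows linearly in $R$, so $R=\mathrm{poly}(|\mathcal C_0|)$ suffices once the slack $\nu$ is of order $|\mathcal C_0|/\mathrm{poly}$. The paper takes $\nu=|\mathcal C_0|/d$ and $R=d^4$, giving failure probability $2|\mathcal C_0|e^{-2d^2}$ per query.

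The genuine gap is everything specific to SFL. You never construct a symmetric instance with matroid-rank facility costs whose whole-orbit optimum $\mathrm{OPT}_{\mathrm{sym}}$ is $\Omega(\log\log|\mathcal C|)$ times the unrestricted optimum $\mathrm{OPT}_0$. The ``multi-scale hierarchy'' specifies no cost function, no distances and no cheap solution, so neither (i) nor (ii) is established. The order ``integrality-gap instance first, then symmetrize'' also has no mechanism: imposing a group action does not turn a configuration-LP gap into a gap between $\mathrm{OPT}_0$ and $\mathrm{OPT}_{\mathrm{sym}}$. The easy implication runs the other way: a symmetry gap yields a configuration-LP gap by symmetrizing an LP solution and passing to the quotient instance. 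A converse would need an invariant submodular function on a blown-up ground set under which the fractional solution becomes a cheap orbit-splitting integral assignment, and that is exactly the missing content.

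In the paper, that content is a single-scale set-cover gap, not $\Theta(\log\log|\mathcal C|)$ accumulated levels. Take $d$ independent random perfect matchings between two sides of size $m=d^{2d+6}$. With positive probability the cover number is at least $m\log d/(2d)$, a $\log d$ factor above the trivial $m/d$, and every set of at most $2K=2d^{d+3}$ right vertices has at least $(d-2)$ times as many neighbors. Lift cyclically: an edge $\{u,v\}\in M_a$ becomes $\{(u,t+a),(v,t)\}$ for $t\in\mathbb Z_d$. Let the shifts act, so orbits are fibers, use hop distance divided by $d$ and capped at $1$, and use the facility cost $\rho(\tilde U)=\min\{K,\min_{|\tilde V|\le 2K}(|\tilde U\cup N(\tilde V)|-(d-3)|\tilde V|)\}$. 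Sending the neighborhood of $(v,0)$ to $v$ splits every orbit and costs at most $4m$. Every whole-orbit assignment costs at least $d$ times the cover number, i.e.\ at least $m\log d/2$.

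The $\log\log$ appears only because $m=d^{\Theta(d)}$. Expansion must hold up to the cap, and the cap must be at least $d\cdot d^{d}$, since up to $d^{d}$ fibers lie within distance less than $1$ of a facility.

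Your idea for keeping matroid rank functions also needs repair. A pointwise minimum of two matroid rank functions is in general not submodular; take $\min\{|S\cap\{a\}|,|S\cap\{b\}|\}$. The paper instead uses partial minimization over witnesses, $h(S)=\min_A(|S\setminus\varphi^{-1}(A)|+r(A))$. For $\rho$, it handles the size restriction on witnesses via the cap and local expansion: these force minimizing witnesses of value below $K$ to have fewer than $K$ elements, so their unions and intersections stay admissible.
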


In particular, no algorithm with a
\kl{polynomial query bound} achieves factor $o(\log\log|\mathcal C|)$
with constant \kl[success-probability]{success probability}, and
SFL admits no constant-factor approximation in the
value-oracle model. 
The proof of \cref{thm:quantitative} actually gives a slightly stronger statement, namely a quasi-polynomial lower bound
on the number of queries required to achieve factor $\alpha_0\log\log|\mathcal C|$ with a 
\kl[success-probability]{success probability} whose reciprocal is polynomially bounded in the number of clients (\cref{rem:quasi-polynomial}). The same
instances considered in the proof also show that the
configuration LP has integrality gap $\Omega(\log\log|\mathcal C|)$
(\cref{thm:sfl-config-gap}), so the LP-based analysis
of~\cite{Abbasi2024} is tight as well.

The lower bound follows from a general \emph{symmetry-gap theorem}
(\cref{thm:symmetry-gap}),
in the spirit of Vondr\'ak's symmetry-gap method~\cite{Vondrak2013},
which we outline in \cref{sec:overview} and develop in
\cref{sec:symmetry-gap}. The theorem applies to submodular allocation with
arbitrary connection costs and may be of independent interest.

Using the technique of Dobzinski and Vondr\'ak~\cite{DobzinskiVondrak2012},
our information-theoretic lower bound can be converted into computational
hardness, see \cref{app:computational}.

\subsection{Overview of techniques}
\label{sec:overview}

We now give a high-level overview of the proofs. 
We follow the strategy of Vondr\'ak's symmetry-gap
method~\cite{Vondrak2013}. We start from an instance together with a
group of permutations of the clients such that all permutations in the group preserve the facility cost
function and the connection costs. An assignment is
\emph{whole-orbit} if it sends each orbit of the group to a single
facility. We will require instances with a large \emph{symmetry gap}, in which
some assignment that splits orbits is much cheaper than every
whole-orbit assignment. Such an instance on its own does not give a
lower bound, as an algorithm may simply find the cheap assignment. The
idea is to hide how the orbits have to be split, so that an algorithm
with few queries cannot do better than a whole-orbit assignment. We
explain in \cref{sec:related-work} why we do not use Vondr\'ak's
construction directly.

To create the instance for our symmetry-gap theorem, we \emph{replicate} each client
many times. The orbit of a copy
is public, i.e., known to the algorithm, however precise correspondence between original clients and copied clients within an orbit is determined by a hidden random partition. 
We define a
new submodular facility cost function on the copies with a large symmetry gap and such that, with
high probability, an algorithm with few queries cannot obtain any
information about which client each copy was created from, and in particular, it cannot recover a cheap way to split the orbits. More precisely, we compare the
new function with a public reference function, which depends on a set of copies only through the number of
copies it contains from each orbit. The two functions agree on every
set that contains about equally many copies of each client of an
orbit, and by concentration, every fixed set has this property with
high probability over the random hidden information. An algorithm run
against the reference function chooses its queries without the hidden
information. Hence, with high probability, an algorithm with few
queries receives the same answers from both functions, and its output
has the same cost under both.
Under the reference function, every assignment costs at least the
number of copies per client times the cost of the best whole-orbit
assignment. The reference instance also gives configuration-LP gaps:
its integral optimum is large, but symmetrizing a cheap assignment
gives a cheap fractional solution.

It remains to construct an instance of SFL with a large symmetry gap
(\cref{sec:sfl-lower-bound}). We start from a random $d$-regular
bipartite graph with $m \gg d$ vertices on each side. Its right vertices are
the facilities, while copies of the left vertices will become the clients. One can think of the distances as the shortest paths in the graph. A random graph has a
large cover number: covering all left vertices with neighborhoods of
right vertices requires about $(m\log d)/d$ right vertices. Note that this is a factor of $\log d$ more than the lower bound of $m/d$ implied by $d$-regularity.

To turn this into a symmetry gap, we lift every vertex $v$ to a set of $d$ copies $((v,t))_{t\in \mathbb{Z}_d}$
and replace every edge by a perfect matching between the lifts of 
its left and its right endpoint. This replacement is done in a way that ensures the following two properties:
\begin{itemize}
    \item Every lifted vertex $(v,t)$ has exactly one neighbor per copy coordinate $t'$, i.e., one neighbor of the form $(v',t')$ for each $t'\in \mathbb{Z}_d$.
    \item Rotating all copy coordinates, i.e., mapping $(v,t)$ to $(v,t+a)$ yields a graph isomorphism for every $a\in \mathbb{Z}_d$.
\end{itemize}
In particular, the first property ensures that the neighborhoods of the right lifts of the form $(v,0)$ partition the clients.

The properties above can be achieved via a standard
construction, known as a cyclic lift or cyclic cover of the graph, and
in topological graph theory as the derived graph of a voltage
graph~\cite{Gross1974,AmitLinial2002,AgarwalEtAl2019}. The lifts only serve to define the facility
cost function. We would like to choose the facility cost function in such a way that it corresponds to the minimum number of lifted right vertices whose neighborhoods cover a given subset of left clients. By the cyclic construction, sending the neighborhood of
the first lift of each facility to that facility then costs $1$ per lift, so $m$ in total. This assignment splits every orbit. Covering a whole orbit instead requires $d$
lifts, intuitively all $d$ lifts of a neighboring facility.
A whole-orbit assignment therefore pays for $d$ covers of the graph,
i.e., about $d \cdot (m \log d / d) = m\log d$. This gives a gap of
order $\log d$.

This cover function, however, is not submodular. We replace it by a matroid
rank function that agrees with it up to a factor of $3$ on sets of small value.
For this we use that small sets of right vertices in a random $d$-regular graph
have almost $d$ times as many neighbors, so the neighborhoods of few
lifts hardly overlap. We require that this expansion holds for sets of size
$d^{\Theta(d)}$, so the graph has $m = d^{\Theta(d)}$ vertices. With $|\mathcal C| = dm = d^{\Theta(d)}$, the gap of $\log d$ translates to the order of $\log\log|\mathcal C|$.

\subsection{Related work}\label{sec:related-work}

It is a common approach in value-oracle lower bounds to hide structure in the queried function.
One constructs a hidden function and a public reference function that
agree on every query that is not correlated with the hidden structure.
Feige, Mirrokni, and Vondr\'ak~\cite{FeigeMirrokniVondrak2011} used this idea
for submodular maximization, Svitkina and
Fleischer~\cite{SvitkinaFleischer2011} for submodular minimization
problems such as load balancing, sparsest cut, and minimization under
a cardinality lower bound, and
Goemans, Harvey, Iwata, and Mirrokni~\cite{GoemansEtAl2009} for
approximating submodular functions everywhere, i.e., constructing from polynomially many value queries a function that approximates $g$ within a fixed factor on every set. In particular, their result applies to matroid rank functions.

Vondr\'ak's symmetry-gap method~\cite{Vondrak2013} derives such lower
bounds from symmetric instances by refinement and by smoothing the
multilinear extension. Ene, Vondr\'ak, and Wu~\cite{EneVondrakWu2013}
extended the method to minimization over partitions and related
symmetry gaps to LP gaps.

Our symmetric instance also fits into Vondr\'ak's framework, in the
minimization form of Ene, Vondr\'ak, and Wu. We nevertheless prove our 
own symmetry-gap theorem, for two reasons. First, in Vondr\'ak's construction the
hidden and reference functions are only guaranteed to agree on queries
that contain almost exactly the same fraction of the copies of every
client in an orbit, up to an error that is exponentially small in the
input size (for a fixed relative tolerance). Hiding the clients then
requires exponentially many copies per client, which loses another
logarithm in the lower bound.
Our replication only needs polynomially many copies. Second, our theorem
produces matroid-rank facility cost functions. 

\subsection{Organization}
The rest of the paper is organized as follows. We give preliminaries
in \cref{sec:definitions}. In \cref{sec:symmetry-gap},
we state the symmetry-gap theorem, together with its configuration-LP version. \cref{sec:sfl-lower-bound} constructs
the SFL input and applies the symmetry-gap theorem. The same section also proves the integrality gap of the configuration LP.

\section{Preliminaries}\label{sec:definitions}
In this section, we discuss basic results about submodular functions and 
symmetry and introduce submodular allocation, SFL, and the
configuration LP.
For a finite set $E$, $2^E$ denotes its power set; $\sqcup$ denotes disjoint unions, and $[q]\coloneqq\{1,\ldots,q\}$. Throughout, $\log$ denotes the natural logarithm.

\subsection{Submodular functions and matroid rank functions}
\label{sec:statement}\label{sec:shared-tools}
A function $r\colon2^E\to\mathbb R$ on a finite \intro{ground set} $E$
is normalized if $r(\varnothing)=0$, monotone if $r(A)\leq r(B)$ for
$A\subseteq B$, and submodular if
\[
 r(A)+r(B)\geq r(A\cup B)+r(A\cap B)\qquad(A,B\subseteq E).
\]
Adding or subtracting a modular function $A\mapsto w_0+\sum_{e\in A}w_e$
for $w \in \mathbb{R}^E$ preserves submodularity. A normalized nonnegative submodular function
is subadditive: $r(A\cup B)\leq r(A)+r(B)$.

Equivalently, submodularity means diminishing returns:
$r(A\cup\{e\})-r(A)\geq r(B\cup\{e\})-r(B)$ whenever
$A\subseteq B$ and $e\notin B$.
\begin{lemma}\label{fact:concave-cardinality}
Given $W\subseteq E$ and a concave function $g\colon \mathbb{R}\rightarrow\mathbb{R}$, the function $f \colon 2^E\rightarrow \mathbb{R},A\mapsto g(|A\cap W|)$ is submodular. If $g$ is moreover nondecreasing, then $f$ is monotone.
\end{lemma}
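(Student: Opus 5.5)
We verify submodularity through the diminishing-returns characterization recalled just before the statement. Write $h(A)\coloneqq|A\cap W|$. Take $A\subseteq B\subseteq E$ and $e\notin B$. Set $a\coloneqq h(A)$ and $b\coloneqq h(B)$; then $a\le b$, since $A\cap W\subseteq B\cap W$.

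There are two cases. If $e\notin W$, adding $e$ changes neither $h(A)$ nor $h(B)$, so both marginal gains $f(A\cup\{e\})-f(A)$ and $f(B\cup\{e\})-f(B)$ are $0$. If $e\in W$, the marginals are $g(a+1)-g(a)$ and $g(b+1)-g(b)$. The inequality we need is therefore $g(a+1)-g(a)\ge g(b+1)-g(b)$ for integers $0\le a\le b$.

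This inequality is the only substantive step, and it is the standard fact that a concave function has nonincreasing increments. One way to see it: concavity on $\mathbb R$ gives that slopes of secants are nonincreasing as the interval moves right. Concretely, compare the secant over $[a,a+1]$ with the secant over $[b,b+1]$ through the intermediate secant over $[a,b+1]$ (or $[a+1,b]$ when $a+1\le b$), using the three-slope inequality for concave functions. The case $a=b$ is trivial. Summing the consecutive inequalities $g(k+1)-g(k)\ge g(k+2)-g(k+1)$ for $k=a,\dots,b-1$ also works. Each of these follows from $g(k+1)\ge \tfrac12 g(k)+\tfrac12 g(k+2)$, which is concavity applied at the midpoint. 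I would use this telescoping argument, since it only needs midpoint concavity at integers.

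For monotonicity, assume in addition that $g$ is nondecreasing. If $A\subseteq B$, then $|A\cap W|\le|B\cap W|$, hence $f(A)=g(|A\cap W|)\le g(|B\cap W|)=f(B)$. No step poses a real obstacle; the only care needed is the trivial case $e\notin W$ and the edge case $a=b$.
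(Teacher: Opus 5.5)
The paper states this lemma without proof and treats it as a standard fact, so there is no argument of its own to compare against. Your proposal is a correct and complete verification. It reduces submodularity, via the diminishing-returns characterization, to $g(a+1)-g(a)\ge g(b+1)-g(b)$ for integers $0\le a\le b$, and obtains this from midpoint concavity at consecutive integers by telescoping; the monotonicity part follows directly from $g$ being nondecreasing.
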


It is a well-known fact that every integer-valued, normalized, monotone submodular function $r:2^E\rightarrow\mathbb{Z}_{\ge 0}$ with
$r(A)\leq|A|$ for all $A\subseteq E$ is the \intro{rank function} of a \intro{matroid} on the ground set $E$.

The \intro[Lovasz extension]{Lov\'asz extension} of $r\colon2^E\to\mathbb R$ is the
function $\hat r\colon[0,1]^E\to\mathbb R$ given by
\[
 \hat r(x)\coloneqq\mathbb E_t\,r(\{e\in E:x_e\geq t\}),
 \qquad t\text{ uniform in }(0,1].
\]
It extends $r$, i.e., $\hat r(\mathbf 1_A)=r(A)$ for all $A\subseteq E$,
and it is convex if and only if $r$ is submodular~\cite{Lovasz1983}.
If $r$ is submodular, then by convexity and Jensen's inequality,
$\hat r(x)\leq\mathbb E\,r(A)$ for every random set $A\subseteq E$ with
$\mathbb E\,\mathbf 1_A=x$. Since the random threshold sets $\{e:x_e\geq t\}$ have marginals $x$, $\hat r(x)$ is the minimum of $\mathbb E\,r(A)$ over
all such distributions.

\subsection{Submodular allocation and the configuration LP}
\label{sec:config-preliminaries}

We introduce SFL as a special case of the following allocation problem
with general connection costs. Our symmetry-gap theorem
(\cref{sec:symmetry-gap}) applies to this more general version.

\begin{definition}[Submodular allocation]\label{def:allocation}
An \intro[instance]{instance} $(\mathcal C,\mathcal F,\dist,\ServiceCost)$ of \intro[homogeneous]{homogeneous}
\intro{submodular allocation} consists of disjoint
nonempty finite sets of clients $\mathcal C$ and facilities
$\mathcal F$, connection costs
$\intro*\dist\colon\mathcal C\times\mathcal F\to[0,\infty]$, and monotone submodular facility cost function
$\intro*\ServiceCost\colon2^{\mathcal C}\to\mathbb R_{\geq0}$, which is \intro{normalized}. An
\intro[assignment]{assignment} $\sigma\colon\mathcal C\to\mathcal F$ is
feasible if $\dist(c,\sigma(c))<\infty$ for every client $c$. It
partitions $\mathcal C$ into \intro[cluster]{clusters}
$C_f=\sigma^{-1}(f)$, with cost
\[
 \intro*\Cost(\sigma)
 \coloneqq\sum_{f\in \mathcal F}\left(\ServiceCost(C_f)+\sum_{c\in C_f}\dist(c,f)\right).
\]
Write $\intro*\OPT$ for the minimum cost of a feasible assignment.
\end{definition}

This is a special case of the minimum submodular-cost allocation
problem of Chekuri and Ene~\cite{ChekuriEne2011}, with the submodular
cost $\ServiceCost(S)+\sum_{c\in S}\dist(c,f)$ at facility $f$.

\begin{definition}[Submodular facility location]\label{def:sfl}
\emph{Submodular facility location} (SFL) is submodular allocation
restricted to instances in which $\dist$ is the restriction of a metric
on $\mathcal C\cup\mathcal F$ to $\mathcal C\times\mathcal F$, in particular, all connection costs are finite.
\end{definition}
In SFL, distinct clients may have distance zero, i.e., several
clients may share a location.

A \intro{configuration} at a facility $f$ is a possibly empty set
$C\subseteq\mathcal C$ of clients.
The \intro{configuration LP} has a variable $y_{f,C}\geq0$ for every
facility $f$ and configuration $C$ at $f$ and the following objective and constraints.
\begin{equation}\label{eq:abstract-config-lp}
\begin{aligned}
 \operatorname{ConfLP}\coloneqq\min\quad&
 \sum_{f\in \mathcal F}\sum_{C \subseteq \mathcal C} y_{f,C}
   \left(g(C)+\sum_{c\in C}\dist(c,f)\right)\\
 \text{subject to}\quad&
 \sum_{f\in \mathcal F}\sum_{C\ni c}y_{f,C}=1
 &&(c\in\mathcal C),\\
 &\sum_{C\subseteq \mathcal C} y_{f,C}=1 &&(f\in \mathcal F),\\
 &y_{f,C}\geq0 &&(f\in \mathcal F,\ C\subseteq\mathcal C).
\end{aligned}
\end{equation}
Its integral solutions are exactly the feasible assignments.

\subsection{Minimization and symmetry}
\label{sec:symmetry-preliminaries}

If a function is defined as a minimum over some set, we call the
elements of this set \intro[witness]{witnesses}, and a witness that
attains the minimum a \intro{minimizing witness}. We use the following
facts about minima of submodular functions.
\begin{lemma}[Minimization facts]\label{lem:shared-minimization}
For finite ground sets:
\begin{enumerate}
\item\label{item:min-partial} If $H\colon2^{E\sqcup W}\to\mathbb R$ is submodular, then
$h\colon2^E\to\mathbb R$, $h(S)\coloneqq\min_{A\subseteq W}H(S\cup A)$, is submodular.
\item\label{item:min-rank} Let $\varphi\colon E\to W$ be a map, and let $r\colon2^W\to\mathbb R_{\geq0}$
be submodular and \kl{normalized}. Then, the function
\[
h: 2^E\rightarrow \mathbb{R}_{\ge 0}, \quad h(S)\coloneqq\min_{A\subseteq W}\bigl(|S\setminus\varphi^{-1}(A)|+r(A)\bigr)
\]
is normalized, monotone, and submodular, with $h(S)\leq|S|$. If $r$ is
integer-valued, $h$ is a matroid rank function.
\end{enumerate}
\end{lemma}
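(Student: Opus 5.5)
For part \ref{item:min-partial}, we show directly that $h$ satisfies the submodular inequality. Fix $S,T\subseteq E$ and choose minimizing witnesses $A,B\subseteq W$, so that $h(S)=H(S\cup A)$ and $h(T)=H(T\cup B)$. Because $E$ and $W$ are disjoint,
\[
(S\cup A)\cup(T\cup B)=(S\cup T)\cup(A\cup B)
\quad\text{and}\quad
(S\cup A)\cap(T\cup B)=(S\cap T)\cup(A\cap B).
\]
Submodularity of $H$ therefore gives
\[
h(S)+h(T)\geq H\bigl((S\cup T)\cup(A\cup B)\bigr)+H\bigl((S\cap T)\cup(A\cap B)\bigr)\geq h(S\cup T)+h(S\cap T),
\]
where the last step uses that $A\cup B$ and $A\cap B$ are feasible witnesses for $S\cup T$ and $S\cap T$. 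The only point requiring care is disjointness of $E$ and $W$, which is what makes the unions and intersections split cleanly.

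For part \ref{item:min-rank}, we reduce to part \ref{item:min-partial}. Define $H$ on $2^{E\sqcup W}$ by $H(S\cup A)\coloneqq|S\setminus\varphi^{-1}(A)|+r(A)$ for $S\subseteq E$ and $A\subseteq W$. The term $r(A)$ is submodular as a function on $E\sqcup W$ because it depends only on the $W$-part. We rewrite the first term as
\[
|S\setminus\varphi^{-1}(A)|=\sum_{e\in E}[e\in S]\bigl(1-[\varphi(e)\in A]\bigr).
\]
For each $e\in E$, the summand is a function of two elements only, and it equals $1$ exactly when $e\in X$ and $\varphi(e)\notin X$. This is the cut-type function $X\mapsto[e\in X,\ \varphi(e)\notin X]$, i.e., a directed cut function of a single arc. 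Directed cut functions are submodular, which one checks by direct verification on the four cases for the two elements. A sum of submodular functions is submodular, so $H$ is submodular, and part \ref{item:min-partial} then gives submodularity of $h$.

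The remaining properties follow quickly:
\begin{itemize}
\item \emph{Normalization:} $h(\varnothing)=\min_A r(A)=0$, since $r\geq0$ and $r(\varnothing)=0$.
\item \emph{Bound $h(S)\leq|S|$:} take $A=\varnothing$.
\item \emph{Monotonicity:} for $S\subseteq S'$ and every fixed $A$, we have $|S\setminus\varphi^{-1}(A)|\leq|S'\setminus\varphi^{-1}(A)|$, and the pointwise inequality survives taking the minimum.
\item \emph{Nonnegativity:} clear, since both terms are nonnegative.
\end{itemize}
If $r$ is integer-valued, then $h$ is an integer-valued, normalized, monotone submodular function with $h(S)\leq|S|$. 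By the well-known fact quoted above, $h$ is a matroid rank function.

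The main (mild) obstacle is the submodularity of the coupling term $|S\setminus\varphi^{-1}(A)|$, which is handled by the per-element cut decomposition above.
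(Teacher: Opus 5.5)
Your proof is correct and follows the paper's argument. Part (i) is the same witness-exchange argument, and in part (ii) you likewise reduce to (i) by showing the joint objective is submodular on $E\sqcup W$; your per-element directed-cut decomposition is just the paper's ``coverage function minus modular function'' written arc by arc, since $[e\in X,\ \varphi(e)\notin X]=[e\in X\text{ or }\varphi(e)\in X]-[\varphi(e)\in X]$.
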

\begin{proof}
For \cref{item:min-partial}, pick minimizing witnesses $A,B\subseteq W$ for $S,T\subseteq E$. By submodularity of $H$, we have 
\begin{align*}
H((S\cap T)\cup (A\cap B))+H((S\cup T)\cup (A\cup B)) &= H((S\cup A)\cap (T\cup B))+H((S\cup A)\cup (T\cup B)) \\
& \le H(S\cup A)+H(T\cup B) \,.
\end{align*}
Moreover, $A\cap B$ and $A\cup B$ are witnesses for $S\cap T$ and $S\cup T$.
For \cref{item:min-rank}, note that
$|S\setminus\varphi^{-1}(A)|=|S\cup\varphi^{-1}(A)|-|\varphi^{-1}(A)|$.
The function $S\sqcup A\mapsto |S\cup\varphi^{-1}(A)|$ is a coverage function, in which
each element of $E$ covers itself and each $w\in W$ covers
$\varphi^{-1}(w)$. Coverage functions are submodular. Moreover, the function $S\sqcup A\mapsto |\varphi^{-1}(A)|$ is modular and subtracting it retains submodularity. The minimization objective is
therefore submodular on $E\sqcup W$, and \cref{item:min-partial} applies. It is monotone in
$S$ for each $A$, the witness $A=\varnothing$ gives $h(S)\leq|S|$, and
$h(\varnothing)=\min_Ar(A)=0$. If $r$ is integer-valued, so is $h$.
Using the above characterization, $h$ is the rank function of a matroid.
\end{proof}

A \intro{permutation group} on a finite set $E$ is a set $\Gamma$ of
bijections $E\to E$ that contains the identity and is closed under
composition and inverses.
Let $\Gamma$ be a permutation group on $E$. For
$\gamma\in\Gamma$ and $A\subseteq E$, the translate of $A$ by $\gamma$
is $\gamma(A)\coloneqq\{\gamma(e):e\in A\}$, and the orbit of $e\in E$ is
$\Gamma(e)\coloneqq\{\gamma(e):\gamma\in\Gamma\}$. The orbits partition $E$.
A set $A\subseteq E$ is \intro{invariant} under $\Gamma$ if $\gamma(A)=A$ for
all $\gamma\in\Gamma$, equivalently if it is a union of orbits.
A set function $r\colon2^E\to\mathbb R$ is invariant under $\Gamma$ if
$r(\gamma(A))=r(A)$ for all $\gamma\in\Gamma$ and $A\subseteq E$.

We use the following facts about submodular functions that are invariant under a permutation group.
\begin{lemma}[Symmetry facts]\label{lem:shared-symmetry}
For finite ground sets:
\begin{enumerate}
\item\label{item:sym-orbit} A submodular function \kl{invariant} under a permutation group has a
minimizer that is a union of orbits.
\item\label{item:sym-average} Let $r\colon2^E\to\mathbb R$ be submodular and
\kl{invariant} under $\Gamma$, let $S\subseteq E$, and let $\gamma$ be selected
uniformly at random from $\Gamma$. Then the orbit average
$x\coloneqq\mathbb E_\gamma\mathbf 1_{\gamma(S)}\in[0,1]^E$ has entries
$x_e=|S\cap\Gamma(e)|/|\Gamma(e)|$, and its
\kl[Lovasz extension]{Lov\'asz extension} value satisfies $\hat r(x)\leq r(S)$.
\end{enumerate}
\end{lemma}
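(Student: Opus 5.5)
For \cref{item:sym-orbit}, the plan is to combine two facts: the family of minimizers of a submodular function is closed under union and intersection, and the minimizers of an invariant function are permuted by the group. Let $r$ be submodular and invariant under $\Gamma$, and let $S$ be any minimizer, with minimum value $\mu$. First, for two minimizers $A,B$, submodularity gives $r(A\cup B)+r(A\cap B)\le r(A)+r(B)=2\mu$. Since each term on the left is at least $\mu$, both $A\cup B$ and $A\cap B$ are minimizers. Second, invariance gives $r(\gamma(S))=r(S)=\mu$, so every translate $\gamma(S)$ is again a minimizer. Now set $U\coloneqq\bigcup_{\gamma\in\Gamma}\gamma(S)$. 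Since $\Gamma$ is finite, $U$ is a finite union of minimizers, and iterating the first step shows that $U$ is a minimizer. Finally, $U$ is invariant: for $\delta\in\Gamma$ we have $\delta(U)=\bigcup_\gamma\delta\gamma(S)=U$, because $\gamma\mapsto\delta\gamma$ is a bijection of $\Gamma$. An invariant set is a union of orbits, which proves the claim. (Taking the intersection of all translates instead would work equally well.)

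For \cref{item:sym-average}, I first compute the marginals. For fixed $e$, $\Pr_\gamma[e\in\gamma(S)]=\Pr_\gamma[\gamma^{-1}(e)\in S]$. Since $\gamma^{-1}$ is uniform on $\Gamma$, the point $\gamma^{-1}(e)$ is uniformly distributed on the orbit $\Gamma(e)$. This follows from the orbit–stabilizer argument: the map $\gamma\mapsto\gamma^{-1}(e)$ has fibers that are cosets of the stabilizer of $e$, all of the same size. Hence $x_e=|S\cap\Gamma(e)|/|\Gamma(e)|$.

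For the inequality, the random set $A=\gamma(S)$ satisfies $\mathbb E\,\mathbf 1_A=x$ by definition of $x$. The fact recorded after the definition of the Lov\'asz extension, which follows from convexity (Lov\'asz) and Jensen's inequality, gives $\hat r(x)\le\mathbb E_\gamma\, r(\gamma(S))$. By invariance, $r(\gamma(S))=r(S)$ for every $\gamma$, so this expectation equals $r(S)$.

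The only step that needs real care is the uniformity of $\gamma^{-1}(e)$ on the orbit. The rest is routine.
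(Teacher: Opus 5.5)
Your proposal is correct and follows essentially the same route as the paper: for the first part, closing a minimizer under unions of its group translates; for the second part, Jensen on the convex Lov\'asz extension with the random set $\gamma(S)$ together with invariance. The only difference is in computing the marginals. You use orbit--stabilizer to show that $\gamma^{-1}(e)$ is uniform on $\Gamma(e)$. The paper instead shows that $x$ is constant on each orbit, since $\delta^{-1}\gamma$ is again uniform, and that its sum over $\Gamma(e)$ equals $|S\cap\Gamma(e)|$. These two computations are equally elementary and interchangeable.
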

\begin{proof}
For \cref{item:sym-orbit}, unions and intersections of minimizers remain minimizing.
The union of all group translates of a minimizer is therefore an
\kl{invariant} minimizer.
For \cref{item:sym-average}, let $\delta\in\Gamma$. Since
$\delta^{-1}\gamma$ is again uniform in $\Gamma$,
$x_{\delta(e)}=\Pr_\gamma[e\in\delta^{-1}\gamma(S)]=x_e$, so $x$ is
constant on each orbit. Since every $\gamma$ maps each orbit to
itself, $\sum_{e'\in\Gamma(e)}x_{e'}=\mathbb E_\gamma|\gamma(S)\cap\Gamma(e)|
=|S\cap\Gamma(e)|$, and hence $x_e=|S\cap\Gamma(e)|/|\Gamma(e)|$.
Thus, $\gamma(S)$ is a random set with marginals $x$, and  the submodularity and 
invariance of $r$ gives
$\hat r(x)\leq\mathbb E_\gamma\,r(\gamma(S))=r(S)$.
\end{proof}

The following lemma is the configuration-LP counterpart
of~\itemref{lem:shared-symmetry}{item:sym-orbit}.
\begin{lemma}[Symmetric configuration-LP solutions]\label{lem:symmetric-lp}
Let $(\mathcal C,\mathcal F,\dist,g)$ be an \kl{instance} of
\kl{submodular allocation}, and let $\Gamma$ be a permutation group on
$\mathcal C$ such that $g$ is \kl{invariant} under $\Gamma$ and
$\dist(\gamma(c),f)=\dist(c,f)$ for all $\gamma\in\Gamma$, $c\in\mathcal C$,
and $f\in\mathcal F$. If the
\kl{configuration LP}~\eqref{eq:abstract-config-lp} admits a solution of finite cost (i.e., satisfying $\dist(c,f)<\infty$ whenever $c\in C$ and $y_{f,C}>0$), then it
admits an optimal solution whose support consists only of unions of
orbits.
\end{lemma}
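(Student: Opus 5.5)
The plan has three steps: secure an optimal solution, symmetrize it under the group, and replace each facility's configuration distribution by the threshold sets of its marginal vector, using the Lov\'asz extension.

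\textbf{Step 1: an optimal solution exists.} The LP has finitely many variables and is feasible with finite cost. The objective coefficients are nonnegative, where coefficients equal to $\infty$ can simply be forced to have $y_{f,C}=0$. The feasible region of the restricted LP is a nonempty polytope, since $\sum_C y_{f,C}=1$ bounds all variables. Hence an optimal solution $y^*$ of finite cost exists.

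\textbf{Step 2: symmetrize.} For $\gamma\in\Gamma$ define $y^\gamma_{f,C}\coloneqq y^*_{f,\gamma^{-1}(C)}$. This is feasible: the client constraint for $c$ becomes the constraint for $\gamma^{-1}(c)$, and the facility constraints are preserved. It has the same cost, because $g(\gamma(C))=g(C)$ and $\sum_{c\in\gamma(C)}\dist(c,f)=\sum_{c\in C}\dist(c,f)$ by invariance of $\dist$; this also preserves finiteness. The average $\bar y\coloneqq\mathbb E_\gamma y^\gamma$ is therefore optimal. For each facility $f$, its distribution over configurations is invariant under $\Gamma$, so the marginal vector
\[
x^f_c\coloneqq\sum_{C\ni c}\bar y_{f,C}
\]
is constant on orbits.

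\textbf{Step 3: replace each distribution by threshold sets.} Fix $f$ and consider the function $r_f(C)\coloneqq g(C)+\sum_{c\in C}\dist(c,f)$, restricted to clients $c$ with $\dist(c,f)<\infty$. Any client with $\dist(c,f)=\infty$ has $x^f_c=0$, so we may work on the finite-distance ground set. There $r_f$ is submodular, being $g$ plus a modular function, and normalized. By the Lov\'asz-extension fact from the preliminaries,
\[
\hat r_f(x^f)\le\sum_C\bar y_{f,C}\,r_f(C),
\]
and $\hat r_f(x^f)$ is attained by the threshold distribution. Take $t$ uniform in $(0,1]$ and set $C_t=\{c:x^f_c\ge t\}$. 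Each $C_t$ is a union of orbits because $x^f$ is constant on orbits. It contains only finite-distance clients, since $t>0$ and those are the only clients with positive $x^f_c$.

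Let $y'_{f,C}$ be the probability that $C_t=C$. This is a finite distribution, because $C_t$ takes at most $|\mathcal C|+1$ values, with weights given by the gaps between consecutive distinct values of $x^f$ (the empty set receives the remaining mass). Then $\sum_C y'_{f,C}=1$, and the marginals are preserved, so $\sum_f\sum_{C\ni c}y'_{f,C}=\sum_f x^f_c=1$. The cost does not increase, so $y'$ is optimal with support on unions of orbits.

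\textbf{Expected obstacle.} The only delicate point is the bookkeeping with infinite distances: one must ensure that finite cost is preserved under symmetrization and that the threshold sets avoid infinite-distance pairs. Both follow from the invariance of $\dist$ and from the fact that zero marginals never enter a threshold set with $t>0$.
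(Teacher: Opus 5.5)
Your proof is correct and is essentially the paper's argument: both replace each facility's configuration distribution by the threshold sets of an orbit-constant marginal vector and bound the facility cost via the Lov\'asz extension. The only difference is cosmetic. You first average the whole LP solution over $\Gamma$ and apply the Jensen bound to $g$ plus the modular connection cost, whereas the paper orbit-averages each configuration via \cref{lem:shared-symmetry}, uses convexity of $\hat g$, and handles connection costs separately by linearity; the resulting marginal vectors $x_f$ coincide.
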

\begin{proof}
Let $y$ be an optimal solution, which exists since the LP, restricted to variables $y_{f,C}$ where $\dist(c,f)<\infty$ for all $c\in C$, is feasible,
has finitely many variables, and has a nonnegative objective. For a configuration $C$, let
$\bar x_C\in[0,1]^{\mathcal C}$ have entries
$(\bar x_C)_c\coloneqq|C\cap\Gamma(c)|/|\Gamma(c)|$, and put
$x_f\coloneqq\sum_C y_{f,C}\,\bar x_C$ for $f\in\mathcal F$. Each $x_f$ is
constant on every orbit. For $t$ uniform in $(0,1]$, give each
\intro[abstract threshold sets]{threshold set}
$D=\{c:(x_f)_c\geq t\}$ at $f$ the mass
$y'_{f,D}\coloneqq\Pr_t[\{c:(x_f)_c\geq t\}=D]$. These sets are unions of
orbits. Each is contained in the union of the orbits that
meet a configuration $C$ at $f$ with $y_{f,C}>0$, so it has finite
connection cost to $f$ and is a configuration at $f$. The masses at
each facility sum to one.

A client $c$ is covered at $f$ with mass $\Pr_t[(x_f)_c\geq t]=(x_f)_c$,
hence in total with mass
\[
 \sum_{f\in\mathcal F}(x_f)_c
 =\frac1{|\Gamma(c)|}\sum_{c'\in\Gamma(c)}\sum_{f\in\mathcal F}\sum_{C\ni c'}y_{f,C}
 =1.
\]
By the definition of the \kl[Lovasz extension]{Lov\'asz extension}, its
convexity, and \itemref{lem:shared-symmetry}{item:sym-average}, the
facility cost at $f$ is
\[
 \sum_D y'_{f,D}\,g(D)=\hat g(x_f)
 \leq\sum_C y_{f,C}\,\hat g(\bar x_C)
 \leq\sum_C y_{f,C}\,g(C).
\]
Since $\dist(\cdot,f)$ is constant on every orbit, the connection cost at
$f$ is
\[
 \sum_D y'_{f,D}\sum_{c\in D}\dist(c,f)
 =\sum_{c\colon(x_f)_c>0}(x_f)_c\dist(c,f)
 =\sum_C y_{f,C}\sum_{c\in C}\dist(c,f).
\]
Summing over $f$, the solution $y'$ costs at most as much as $y$, so it is optimal.
\end{proof}

\section{From symmetry gaps to hardness}\label{sec:symmetry-gap}

The symmetry-gap theorem starts from an assignment instance whose optimum is much
smaller than the best assignment that sends each symmetry orbit to one
facility. We turn this gap into an oracle lower bound
(\cref{thm:symmetry-gap}) and configuration-LP gaps
(\cref{prop:abstract-config-lp}).
The oracle construction
replicates clients and hides which input client each copy represents.

A reader interested only in the application to SFL may skip the proofs
in this section, since later sections only use
\cref{def:fibered-instance},
\cref{thm:symmetry-gap}, and
\cref{prop:abstract-instances,lem:abstract-optima,prop:abstract-config-lp}.

\subsection{Symmetric instances and the symmetry-gap theorem}

\begin{definition}[Symmetric assignment instance]\label{def:fibered-instance}
A \intro[abstract symmetric instance]{symmetric assignment instance}
$(\mathcal C_0,\mathcal F,\dist,\AbstractInputCost,\AbstractGroup)$ consists of
an instance $(\mathcal C_0,\mathcal F,\dist,\AbstractInputCost)$ of
\kl{submodular allocation} (\cref{def:allocation}), with
\intro[abstract input facility cost function]{facility cost function}
$\intro*\AbstractInputCost$, and a
\intro[abstract symmetry group]{permutation group}
$\intro*\AbstractGroup$ on $\mathcal C_0$ under which both costs are
\kl{invariant}:
\begin{equation}\label{eq:fibered-invariance}
 \AbstractInputCost(\gamma(S))=\AbstractInputCost(S)
 \quad\text{and}\quad
 \dist(\gamma(u),f)=\dist(u,f)
 \qquad \forall \gamma\in\AbstractGroup,\ S\subseteq\mathcal C_0,\ u\in\mathcal C_0,\ f\in\mathcal F.
\end{equation}

We use $(\intro*\AbstractOrbit_i)_{i\in I}$ for the
\intro[abstract orbit]{orbits} of $\AbstractGroup$ and for $u \in \mathcal{C}_0$ we let
$\intro*\AbstractOrbitIndex(u) \in I$ be the index of the orbit containing $u$. For $J \subseteq I$ we set $\AbstractOrbit(J)\coloneqq\bigcup_{i\in J}\AbstractOrbit_i$. By~\eqref{eq:fibered-invariance}, $\dist(u,f)$ depends
only on $\AbstractOrbitIndex(u)$. In slight abuse of notation we write $\dist(i,f)$ for its value on
$\AbstractOrbit_i$. We assume
\begin{equation}\label{item:fibered-single}
 \AbstractInputCost(\AbstractOrbit_i)\leq|\AbstractOrbit_i|
 \qquad(i\in I).
\end{equation}

Let $\intro*\AbstractInputOpt$ be the
\intro[abstract input optimum]{unrestricted optimum} and
$\intro*\AbstractSymOpt$ the
\intro[abstract symmetric optimum]{symmetric optimum}, the minimum cost
over \intro{whole-orbit} assignments, which send each orbit to a single
facility. For $\AbstractInputOpt>0$, the ratio
$\AbstractSymOpt/\AbstractInputOpt$ is the \intro{symmetry gap}.
\end{definition}

Condition~\eqref{item:fibered-single}
holds for matroid rank functions, and we use it to obtain matroid-rank facility
cost functions in the replicated instances. The main result of this
section is the following.

\begin{theorem}[Symmetry-gap theorem]\label{thm:symmetry-gap}
Fix a \kl[abstract symmetric instance]{symmetric assignment instance}
$(\mathcal C_0,\mathcal F,\dist,\AbstractInputCost,\AbstractGroup)$, and let $\alpha\geq1$ and $\nu>0$
satisfy
\begin{equation}\label{eq:symmetry-gap-condition}
 \alpha\bigl(\AbstractInputOpt+\nu\bigr)<\AbstractSymOpt.
\end{equation}
Then for every positive integer $R$ there is a family of instances, each
on $R$ copies of every client of $\mathcal C_0$, with the following
property. For every randomized algorithm that makes at most $Q$ value
queries on every run, there is an instance in the family on which,
over its internal randomness, the algorithm returns an assignment of
cost at most $\alpha\OPT$ with probability at most
\[
 2(Q+|\mathcal F|)|\mathcal C_0|
 \exp\bigl(-2R\nu^2/|\mathcal C_0|^2\bigr).
\]
If $R\AbstractInputCost$ is integer-valued and $R\nu/|\mathcal C_0|$ is an
integer, then the facility cost functions of all instances in the
family are matroid rank functions.
\end{theorem}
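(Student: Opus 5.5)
We replicate each client $u\in\mathcal C_0$ into $R$ copies, so the new client set is $\mathcal C=\mathcal C_0\times[R]$. The algorithm only sees the \emph{public blocks} $\AbstractOrbit_i\times[R]$ of size $R|\AbstractOrbit_i|$. A hidden labeling $\AbstractLabelMap$ specifies which copies in a block belong to which $u\in\AbstractOrbit_i$. We take $\AbstractLabelMap$ uniformly random among partitions of each block into $|\AbstractOrbit_i|$ classes of size $R$. Connection costs are set to $\dist(i,f)$ for every copy in block $i$; they are public and consistent with every $\AbstractLabelMap$.

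For $S\subseteq\mathcal C$, let $\AbstractCopyCount_u(S)$ be the number of copies of $u$ in $S$ and $\AbstractBlockAvg_i(S)$ the average of $\AbstractCopyCount_u$ over $u\in\AbstractOrbit_i$. Let $x\in[0,1]^{\mathcal C_0}$ have entries $x_u=\AbstractCopyCount_u(S)/R$. The natural hidden function is $R\hat{\AbstractInputCost}(x)$, where $\hat{\AbstractInputCost}$ is the Lov\'asz extension. The reference function replaces $x$ by its block average $\bar x$, which $S$ determines publicly. To get matroid rank functions and exact agreement on balanced sets, I would define the hidden function by a minimization in the shape of \itemref{lem:shared-minimization}{item:min-rank}. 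It equals the smallest value of the uncovered-element count plus $R\hat{\AbstractInputCost}$ of a truncated profile plus a split penalty $\AbstractPenaltyWeight\cdot\AbstractImbalance$, with $\AbstractPenaltyWeight$ of order $R\nu/|\mathcal C_0|$. The reference function is the same expression with the imbalance term set to zero. Concretely, $\AbstractHidCost(S)=\min_{T\subseteq S}\bigl(|S\setminus T|+\min\{R\hat{\AbstractInputCost}(x(T)),\,\text{something depending only on } \AbstractBlockAvg(T)+\AbstractPenaltyWeight\}\bigr)$. The step ``the two agree whenever every $|\AbstractCopyCount_u-\AbstractBlockAvg_i|\le R\nu/|\mathcal C_0|$'' should follow from the Lipschitz property of $\hat{\AbstractInputCost}$: on an orbit, moving mass by $\delta$ changes the value by at most $\delta$ per client, since marginals are at most $1$ by~\eqref{item:fibered-single}. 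Submodularity would come from \itemref{lem:shared-minimization}{item:min-partial} together with the concave-cardinality fact \cref{fact:concave-cardinality}. Integrality, and hence the matroid-rank conclusion via \itemref{lem:shared-minimization}{item:min-rank}, would follow when $R\AbstractInputCost$ and $R\nu/|\mathcal C_0|$ are integers.

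For the lower bound I would establish three properties. (a) $\AbstractHidOpt\le R(\AbstractInputOpt+\nu)$: copy the cheap unrestricted assignment of $\mathcal C_0$ along $\AbstractLabelMap$, so each cluster is a union of full label classes, has zero imbalance, and costs $R\AbstractInputCost(C_f)$. (b) Under the reference function, every assignment costs at least $R\AbstractSymOpt$. The reference function depends only on block averages and is invariant under permuting copies within blocks. Averaging over this group, via \itemref{lem:shared-symmetry}{item:sym-average} and the fractional-to-whole-orbit rounding in the style of \cref{lem:symmetric-lp}, gives a cost of at least $R$ times the best whole-orbit cost. (c) Fix $S$. By Hoeffding for sampling without replacement, $\Pr_{\AbstractLabelMap}[|\AbstractCopyCount_u(S)-\AbstractBlockAvg_i(S)|>R\nu/|\mathcal C_0|]\le 2\exp(-2R\nu^2/|\mathcal C_0|^2)$. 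A union bound over $u$ then bounds the probability that $S$ is unbalanced by $2|\mathcal C_0|\exp(\cdot)$.

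Finally, I would run the algorithm against the reference oracle. Its at most $Q$ queries and the $|\mathcal F|$ output clusters are fixed sets independent of $\AbstractLabelMap$. With probability at least $1-2(Q+|\mathcal F|)|\mathcal C_0|\exp(\cdot)$ they are all balanced. On that event the hidden run is identical to the reference run, and the output cost is the same under both functions, hence at least $R\AbstractSymOpt>\alpha R(\AbstractInputOpt+\nu)\ge\alpha\AbstractHidOpt$. Averaging over $\AbstractLabelMap$ and the algorithm's coins (Yao) yields a fixed $\AbstractLabelMap$ in the family with the stated success bound. \textbf{Main obstacle:} choosing the exact hidden and reference formulas so that several properties hold simultaneously. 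These are submodularity and matroid rank, exact agreement on all balanced sets (not merely approximate agreement), the upper bound on $\AbstractHidOpt$, and the symmetric lower bound (b). I would try the truncation $\min\{\text{hidden Lov\'asz term},\ \text{reference term}\}$ with slack $\AbstractPenaltyWeight$ first, since it gives agreement on balanced sets for free.
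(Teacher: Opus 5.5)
Your outer argument matches the paper's. That covers the uniform hidden partition of each public block, Hoeffding without replacement per label class with threshold $\lambda=R\nu/|\mathcal C_0|$, the union bound over the $Q+|\mathcal F|$ sets fixed by running against the reference oracle, and averaging to fix a partition. The gap is the core, which you leave open and which your sketch does not fill.

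\textbf{The hidden function.} The map $S\mapsto R\hat g(x(S))$ is not submodular. For the rank-one uniform matroid on an orbit $\{a,b\}$ with $R=2$, it equals $\max\{\#a,\#b\}$. Adding $a_2$ gains $0$ at $\{b_1\}$ but $1$ at $\{a_1,b_1\}$. So \itemref{lem:shared-minimization}{item:min-partial} has no submodular objective to act on, and taking a minimum with a reference term gives no reason for submodularity either. The Lipschitz step is also unfounded: \eqref{item:fibered-single} bounds $g$ only on whole orbits, not singleton marginals (take $g(S)=|P_i|\cdot\mathbf 1[S\cap P_i\neq\varnothing]$). It would also give only approximate agreement.

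The missing idea is to keep discrete witnesses $S\subseteq\mathcal C_0$ and to charge the penalty to how much the \emph{witness} splits orbits, not to the query's imbalance:
\[
g_\pi(C)=\min_S\{|C\setminus\pi^{-1}(S)|+Rg(S)+\lambda\psi(S)\},\qquad \psi(S)=\sum_i\min\{|S\cap P_i|,|P_i\setminus S|\}.
\]
Here $\psi$ is a sum of concave functions of cardinality, so \itemref{lem:shared-minimization}{item:min-rank} gives submodularity and, under the integrality hypotheses, a matroid rank function. The reference function $g_{\mathrm{ref}}$ is the same minimum over witnesses that are unions of orbits.

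Exact agreement on well-balanced queries then has a clean reason. The imbalances sum to zero on each orbit, so $\sum_{u\in S\cap P_i}\xi_u\le\lambda\min\{|S\cap P_i|,|P_i\setminus S|\}$, which the penalty absorbs. Moreover, the block-averaged objective $\sum_{u\notin S}\bar\mu_u(C)+Rg(S)$ is $\Gamma$-invariant and submodular, hence minimized by a union of orbits.

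This also corrects your (a). The clusters $\pi^{-1}(S_f)$ do not have zero imbalance; where $S_f$ splits an orbit they are as unbalanced as possible. They pay $\lambda\psi(S_f)\le\lambda|S_f|$, and summing over $f$ gives exactly the $+R\nu$.

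\textbf{Step (b) fails as proposed.} Averaging an integral reference assignment over within-block permutations, via \itemref{lem:shared-symmetry}{item:sym-average} and \cref{lem:symmetric-lp}, produces a \emph{cheaper} fractional solution supported on whole blocks. So it can only lower-bound the cost by the configuration-LP value, and there is no lossless rounding back to a whole-orbit assignment. By \cref{prop:abstract-config-lp} that LP value is at most $R\,\mathrm{OPT}_0<R\,\mathrm{OPT}_{\mathrm{sym}}/\alpha$. Hence the bound you need fails at the fractional level precisely on the instances of interest.

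What is needed is concavity rather than convexity. Fix minimizing witnesses $J_f$. Then each client of block $i$ placed at $f$ costs $\mathbf 1[i\notin J_f]+\mathrm{dist}(i,f)$, so the cost is linear in how block $i$ is distributed among facilities. Consequently, moving all of $B(i)\cap C_{f_1}$ to $f_2$, or the reverse, does not increase it. Repeating this yields an optimal assignment that keeps every block together. Finally, $g_{\mathrm{ref}}(B(H))\ge Rg(P(H))$ follows from monotonicity, \eqref{item:fibered-single}, and subadditivity.
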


\subsection{Replicated instances}
\label{sec:replication}

Fix a \kl[abstract symmetric instance]{symmetric assignment instance}
$(\mathcal C_0,\mathcal F,\dist,\AbstractInputCost,\AbstractGroup)$ with
\kl[abstract input optimum]{unrestricted optimum} $\AbstractInputOpt$ and
\kl[abstract symmetric optimum]{symmetric optimum} $\AbstractSymOpt$, a positive integer
\intro[abstract replication factor]{replication factor}
$\intro*\AbstractCopies$ and a \intro[abstract penalty weight]{penalty weight}
$\intro*\AbstractPenaltyWeight>0$.
\intro[abstract replication]{Replication} replaces each orbit
$\AbstractOrbit_i$ by a \intro[abstract public block]{public block}
$\intro*\AbstractBlock(i)$ of $\AbstractCopies|\AbstractOrbit_i|$ clients,
all with connection costs $\dist(i,\cdot)$. Put
$\mathcal C\coloneqq\bigcup_{i\in I}\AbstractBlock(i)$ and
$\AbstractBlock(J)\coloneqq\bigcup_{i\in J}\AbstractBlock(i)$.
Independently in each block, choose uniformly at random a partition of $\AbstractBlock(i)$ into
\intro[abstract copy class]{copy classes} of size $\AbstractCopies$,
one for each input client in $\AbstractOrbit_i$.
The \intro[abstract hidden partition]{hidden partition}
$\intro*\AbstractLabelMap\colon\mathcal C\to\mathcal C_0$ maps each
\intro[abstract copy]{copy} to its input \intro[abstract label]{label};
thus $\AbstractLabelMap^{-1}(u)$ is the copy class of $u$.
All data except $\AbstractLabelMap$ are public, i.e., available to the algorithm.

Define the \intro[abstract split penalty]{split penalty}
$\intro*\AbstractSplitPen\colon2^{\mathcal C_0}\to\mathbb Z_{\geq0}$ by
\[
 \AbstractSplitPen(S)\coloneqq\sum_{i\in I}
 \min\bigl\{|S\cap\AbstractOrbit_i|,\,|\AbstractOrbit_i\setminus S|\bigr\}.
\]
This is the minimum number of elements to add or remove to make
$S$ a union of orbits.

\begin{lemma}\label{prop:split_function_submodular}
The split penalty function is normalized and submodular.  
\end{lemma}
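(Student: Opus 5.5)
The proof is short because the split penalty is a sum of simple terms. Normalization is immediate: for $S=\varnothing$ each summand is $\min\{0,|\AbstractOrbit_i|\}=0$, so $\AbstractSplitPen(\varnothing)=0$. For submodularity, note that a sum of submodular functions is submodular. So it suffices to show that for each fixed $i\in I$ the function
\[
 \psi_i(S)\coloneqq\min\bigl\{|S\cap\AbstractOrbit_i|,\,|\AbstractOrbit_i\setminus S|\bigr\}
\]
is submodular on $2^{\mathcal C_0}$.

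Write $n_i\coloneqq|\AbstractOrbit_i|$ and $k\coloneqq|S\cap\AbstractOrbit_i|$. Then $|\AbstractOrbit_i\setminus S|=n_i-k$, so $\psi_i(S)=h_i(|S\cap\AbstractOrbit_i|)$ with $h_i(x)\coloneqq\min\{x,n_i-x\}$. The function $h_i$ is a pointwise minimum of two affine functions, hence concave on $\mathbb R$. \Cref{fact:concave-cardinality}, applied with $W=\AbstractOrbit_i$ and $g=h_i$, therefore shows that $\psi_i$ is submodular. Summing over $i\in I$ gives submodularity of $\AbstractSplitPen$.

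No step is a real obstacle; the only point needing care is to express the complement term through $|S\cap\AbstractOrbit_i|$, so that \Cref{fact:concave-cardinality} applies. Monotonicity is not claimed, and indeed fails, since $h_i$ is not nondecreasing.
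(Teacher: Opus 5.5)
Your proof is correct and follows the paper's argument exactly: you write each summand as the concave function $x\mapsto\min\{x,|P_i|-x\}$ of $|S\cap P_i|$, invoke \cref{fact:concave-cardinality}, and sum over the orbits. Your side remark that monotonicity fails is also accurate, though the lemma does not need it.
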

\begin{proof}
  Clearly $\psi(\varnothing)=0$, showing normalization.
  For submodularity, we express the $i$-th summand as 
  \[\min\{|S\cap P_i|,|P_i\setminus S|\} = \min\{|S\cap P_i|,|P_i|-|S\cap P_i|\} = f(|S\cap P_i|),\]
  where $f$ denotes the concave function given by $f(x)=\min\{x,|P_i|-x\}$.
  By \cref{fact:concave-cardinality}, each summand, and hence $\psi$, is submodular.
\end{proof}

For $C\subseteq\mathcal C$, define the
\intro[abstract hidden function]{hidden function} and
\intro[abstract reference function]{reference function} by
\begin{align}
 \AlignedIntro{\AbstractHidCost}(C)
 &\coloneqq\min_{S\subseteq\mathcal C_0}
 \bigl\{|C\setminus\AbstractLabelMap^{-1}(S)|
       +\AbstractCopies\,\AbstractInputCost(S)
       +\AbstractPenaltyWeight\AbstractSplitPen(S)\bigr\},
 \label{eq:abstract-hidden-function}\\
 \AlignedIntro{\AbstractRefCost}(C)
 &\coloneqq\min_{J\subseteq I}
 \bigl\{|C\setminus\AbstractBlock(J)|
       +\AbstractCopies\,\AbstractInputCost(\AbstractOrbit(J))\bigr\}.
 \label{eq:abstract-reference-function}
\end{align}
The reference minimum restricts the hidden minimum to witnesses
$S=\AbstractOrbit(J)$, for which the penalty vanishes and
$\AbstractLabelMap^{-1}(S)=\AbstractBlock(J)$. It therefore depends only
on public data. The hidden function also allows witnesses that split
orbits. The \kl[abstract split penalty]{split penalty} and the \kl[abstract penalty weight]{penalty weight} $\AbstractPenaltyWeight$ are chosen so that
such splits pay off only for queries sufficiently correlated with the
hidden labels.

The \intro[abstract hidden instance]{hidden instance}
$(\mathcal C,\mathcal F,\dist,\AbstractHidCost)$ and
\intro[abstract reference instance]{reference instance}
$(\mathcal C,\mathcal F,\dist,\AbstractRefCost)$ use these
facility cost functions on the replicated clients, with the same facilities and
same connection costs. Write $\intro*\AbstractHidOpt$ and
$\intro*\AbstractRefOpt$ for their optima, which we refer to as
\intro[abstract hidden optimum]{hidden optimum} and
\intro[abstract reference optimum]{reference optimum}.

We collect and prove several properties.

\begin{lemma}[Replicated instances]\label{prop:abstract-instances}
\leavevmode
\begin{enumerate}
\item\label{item:abstract-functions}
The functions $\AbstractHidCost$ and $\AbstractRefCost$ are normalized,
monotone, and submodular.
\item\label{item:abstract-matroid}
If $\AbstractCopies\AbstractInputCost$ is integer-valued, then
$\AbstractRefCost$ is a matroid rank function. If in addition $\AbstractPenaltyWeight$ is an integer, then $\AbstractHidCost$ also a matroid rank function.
\item\label{item:abstract-comparison}
The function $\AbstractRefCost$ is independent of $\AbstractLabelMap$, and
$\AbstractHidCost\leq\AbstractRefCost$.
\end{enumerate}
\end{lemma}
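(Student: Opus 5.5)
Each part reduces to \itemref{lem:shared-minimization}{item:min-partial} or \itemref{lem:shared-minimization}{item:min-rank}, applied with a suitable map $\varphi$ and inner function $r$.

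For the hidden function we take $E=\mathcal C$, $W=\mathcal C_0$, $\varphi=\AbstractLabelMap$, and
\[
 r(S)\coloneqq\AbstractCopies\,\AbstractInputCost(S)+\AbstractPenaltyWeight\AbstractSplitPen(S).
\]
Then $\varphi^{-1}(S)=\AbstractLabelMap^{-1}(S)$, and \eqref{eq:abstract-hidden-function} is exactly the form $h(C)=\min_S(|C\setminus\varphi^{-1}(S)|+r(S))$. The function $r$ is nonnegative, normalized, and submodular:
\begin{itemize}
\item $\AbstractInputCost$ has these properties by \cref{def:allocation};
\item $\AbstractSplitPen$ has them by \cref{prop:split_function_submodular};
\item nonnegative combinations preserve all three.
\end{itemize}
\itemref{lem:shared-minimization}{item:min-rank} then gives that $\AbstractHidCost$ is normalized, monotone, submodular, and satisfies $\AbstractHidCost(C)\le|C|$. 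If $\AbstractCopies\AbstractInputCost$ is integer-valued and $\AbstractPenaltyWeight$ is an integer, then $r$ is integer-valued because $\AbstractSplitPen$ is. So $\AbstractHidCost$ is a matroid rank function.

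For the reference function we take $E=\mathcal C$, $W=I$, and $\varphi=\AbstractOrbitIndex\circ\AbstractLabelMap$. This is the public map sending each copy in $\AbstractBlock(i)$ to $i$, since the blocks are fixed publicly and only the partition within each block is hidden. For $J\subseteq I$ we have $\varphi^{-1}(J)=\AbstractBlock(J)$, and we set $r(J)\coloneqq\AbstractCopies\,\AbstractInputCost(\AbstractOrbit(J))$. This $r$ is normalized and nonnegative. It is submodular because $J\mapsto\AbstractOrbit(J)$ preserves unions and intersections, as the orbits are disjoint: $\AbstractOrbit(J\cup J')=\AbstractOrbit(J)\cup\AbstractOrbit(J')$ and similarly for $\cap$. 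It is integer-valued when $\AbstractCopies\AbstractInputCost$ is. \itemref{lem:shared-minimization}{item:min-rank} then yields all the claims for $\AbstractRefCost$ in parts \ref{item:abstract-functions} and \ref{item:abstract-matroid}.

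The main obstacle is hidden in the form of \itemref{lem:shared-minimization}{item:min-rank}. It makes $h\le|\cdot|$ automatic, so the matroid axiom $h(C)\le|C|$ comes for free. This is why we route everything through that lemma rather than checking the axioms directly. Condition~\eqref{item:fibered-single} is not needed for this lemma.

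For part \ref{item:abstract-comparison}, $\AbstractRefCost$ is defined using only the blocks $\AbstractBlock(i)$, the orbits, and $\AbstractInputCost$, all of which are public. Hence it does not depend on $\AbstractLabelMap$.

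For the inequality, fix $C$ and a minimizing witness $J$ for $\AbstractRefCost(C)$, and put $S\coloneqq\AbstractOrbit(J)$. Then $\AbstractLabelMap^{-1}(S)=\AbstractBlock(J)$, because each copy class of a label in $\AbstractOrbit_i$ lies in $\AbstractBlock(i)$ and these classes exhaust the block. Also $\AbstractSplitPen(S)=0$, since $S$ is a union of orbits. So $S$ is a witness in \eqref{eq:abstract-hidden-function} whose value equals $\AbstractRefCost(C)$, which gives $\AbstractHidCost(C)\le\AbstractRefCost(C)$.
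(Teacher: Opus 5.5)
Your proposal is correct and follows essentially the same route as the paper. Parts \ref{item:abstract-functions} and \ref{item:abstract-matroid} go through \itemref{lem:shared-minimization}{item:min-rank}, using $\varphi=\AbstractLabelMap$ with $r=\AbstractCopies\AbstractInputCost+\AbstractPenaltyWeight\AbstractSplitPen$, and $\varphi=\AbstractOrbitIndex\circ\AbstractLabelMap$ with $r(J)=\AbstractCopies\AbstractInputCost(\AbstractOrbit(J))$; part \ref{item:abstract-comparison} restricts the hidden minimum to witnesses $\AbstractOrbit(J)$, on which the split penalty vanishes.
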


\begin{proof}
For \cref{item:abstract-functions}, apply \itemref{lem:shared-minimization}{item:min-rank} to
$\AbstractHidCost$ with the map $\varphi=\AbstractLabelMap$ and function
$r=\AbstractCopies\AbstractInputCost+\AbstractPenaltyWeight\AbstractSplitPen$.
Note that by \cref{prop:split_function_submodular}, $r$ is normalized and submodular as a nonnegative linear combination of normalized submodular functions.
For $\AbstractRefCost$, use the map $\varphi=\AbstractOrbitIndex\circ\AbstractLabelMap$, which sends every copy in $\AbstractBlock(i)$ to $i$ 
and the function $r(J)=\AbstractCopies\,\AbstractInputCost(\AbstractOrbit(J))$. We have $r(\varnothing)=\AbstractInputCost(\varnothing)=0$. Moreover, $r$ is submodular as $\AbstractInputCost$ is and $J\mapsto\AbstractOrbit(J)$ preserves
unions and intersections.

For \cref{item:abstract-matroid}, \itemref{lem:shared-minimization}{item:min-rank} further tells us that if $\AbstractCopies\AbstractInputCost$ is integral, then $\AbstractRefCost$ is a matroid rank function, and that if both $\AbstractCopies\AbstractInputCost$ and $\AbstractPenaltyWeight$ are integral, then so is $\AbstractHidCost$.

For \cref{item:abstract-comparison}, the definition of the reference function does not depend on the hidden partition. Compared to the hidden function, it
restricts the minimum to unions of orbits, because $\AbstractSplitPen(P(J))=0$ for every $J\subseteq I$. So
$\AbstractHidCost(C)\leq\AbstractRefCost(C)$ for all $C\subseteq\mathcal C$.
\end{proof}

\begin{lemma}[Replicated optima]\label{lem:abstract-optima}
\leavevmode
\begin{enumerate}
\item\label{item:abstract-ref-opt}
We have $\AbstractRefOpt=\AbstractCopies\,\AbstractSymOpt$.
\item\label{item:abstract-hid-opt}
We have
$\AbstractHidOpt\leq\AbstractCopies\,\AbstractInputOpt+\AbstractPenaltyWeight|\mathcal C_0|$.
\end{enumerate}
\end{lemma}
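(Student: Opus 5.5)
For \cref{item:abstract-ref-opt}, I prove two inequalities. For the upper bound, take an optimal whole-orbit assignment $\sigma_0$ of the input instance. Send every copy in $\AbstractBlock(i)$ to the facility $f$ that receives $\AbstractOrbit_i$. The cluster at $f$ is then $\AbstractBlock(J_f)$, where $J_f$ collects the orbits assigned to $f$. The witness $J=J_f$ in~\eqref{eq:abstract-reference-function} gives $\AbstractRefCost(\AbstractBlock(J_f))\le\AbstractCopies\AbstractInputCost(\AbstractOrbit(J_f))$. Connection costs scale by exactly $\AbstractCopies$, since $|\AbstractBlock(i)|=\AbstractCopies|\AbstractOrbit_i|$. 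Hence $\AbstractRefOpt\le\AbstractCopies\AbstractSymOpt$.

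For the lower bound, I show that any feasible assignment $\sigma$ of the reference instance costs at least $\AbstractCopies\AbstractSymOpt$. Let $C_f$ be its clusters and fix minimizing witnesses $J_f\subseteq I$. I build a randomized whole-orbit assignment of the input instance. Copies in $C_f\setminus\AbstractBlock(J_f)$ are "paid individually" at cost $1$ each, and the rest are covered by $\AbstractCopies\AbstractInputCost(\AbstractOrbit(J_f))$. For each orbit $i$, let $n_{i,f}\coloneqq|C_f\cap\AbstractBlock(i)|$, so $\sum_f n_{i,f}=\AbstractCopies|\AbstractOrbit_i|$. The cost of $\sigma$ is at least
\[
\sum_f\Bigl(\AbstractCopies\AbstractInputCost(\AbstractOrbit(J_f))+\sum_{i\notin J_f}n_{i,f}+\sum_i n_{i,f}\dist(i,f)\Bigr).
\]
Now choose for each orbit $i$ independently a facility $f$ with probability $p_{i,f}=n_{i,f}/(\AbstractCopies|\AbstractOrbit_i|)$. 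This sends whole orbits to facilities, so it is a whole-orbit assignment $\tau$. Its expected connection cost is $\frac1{\AbstractCopies}\sum_{i,f}n_{i,f}\dist(i,f)$.

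For the facility cost, the orbits sent to $f$ form a set $K_f$. Using monotonicity and subadditivity of $\AbstractInputCost$ together with~\eqref{item:fibered-single}, we get
\[
\AbstractInputCost(\AbstractOrbit(K_f))\le\AbstractInputCost(\AbstractOrbit(J_f))+\sum_{i\in K_f\setminus J_f}|\AbstractOrbit_i|.
\]
Taking expectations, the last sum is $\sum_{i\notin J_f}p_{i,f}|\AbstractOrbit_i|=\frac1{\AbstractCopies}\sum_{i\notin J_f}n_{i,f}$. Hence $\AbstractCopies\,\mathbb E\Cost(\tau)\le\Cost(\sigma)$, and some realization of $\tau$ gives $\AbstractSymOpt\le\Cost(\sigma)/\AbstractCopies$.

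Feasibility needs a check: $p_{i,f}>0$ only if some copy in $\AbstractBlock(i)$ is assigned to $f$, and then $\dist(i,f)<\infty$. I expect this averaging step to be the main point, and specifically the use of~\eqref{item:fibered-single} to absorb the individually paid copies.

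For \cref{item:abstract-hid-opt}, take an optimal assignment $\sigma_0$ of $(\mathcal C_0,\mathcal F,\dist,\AbstractInputCost)$ and assign each copy $c$ to $\sigma_0(\AbstractLabelMap(c))$. The cluster at $f$ is then $\AbstractLabelMap^{-1}(S_f)$, where $S_f=\sigma_0^{-1}(f)$. The witness $S=S_f$ in~\eqref{eq:abstract-hidden-function} bounds its facility cost by $\AbstractCopies\AbstractInputCost(S_f)+\AbstractPenaltyWeight\AbstractSplitPen(S_f)$. Connection costs are exactly $\AbstractCopies$ times the original ones. Finally, $\AbstractSplitPen(S)\le|S|$ by definition, and the sets $S_f$ partition $\mathcal C_0$. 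Summing over $f$ gives $\AbstractHidOpt\le\AbstractCopies\AbstractInputOpt+\AbstractPenaltyWeight|\mathcal C_0|$.
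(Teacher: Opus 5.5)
Your proof is correct. Part (ii) and the upper bound in part (i) match the paper, but your lower bound $\mathrm{OPT}_{\mathrm{ref}}\ge R\,\mathrm{OPT}_{\mathrm{sym}}$ takes a different route.

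The paper works in two deterministic steps. First, among optimal reference assignments it picks one that splits the fewest blocks. It then uses an exchange argument: with the witnesses $J_{f_1},J_{f_2}$ fixed, moving all of $B(i)\cap C_{f_1}$ to $f_2$, or all of $B(i)\cap C_{f_2}$ to $f_1$, gives cost bounds whose correction terms have opposite signs. This yields an optimal whole-block assignment. Second, it shows $g_{\mathrm{ref}}(B(H))\ge R\,g(P(H))$ via monotonicity, the assumption $g(P_i)\le|P_i|$, and subadditivity.

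You fuse the two steps. Fixing the witnesses of an arbitrary assignment makes its cost exactly linear in the counts $n_{i,f}$. Your pointwise bound $g(P(K_f))\le g(P(J_f))+\sum_{i\in K_f\setminus J_f}|P_i|$, which is the same chain of inequalities as the paper's second step, makes the cost of the rounded whole-orbit assignment linear in the orbit-to-facility indicators. So matching the marginals $p_{i,f}=n_{i,f}/(R|P_i|)$ suffices, and independence is not actually needed.

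Your route avoids the extremal choice and the two-sided exchange, and it applies to every assignment, not only optimal ones. Because your bound separates over orbits, it also derandomizes: send each orbit $i$ to a facility $f$ with $n_{i,f}>0$ that minimizes $\mathbf 1[i\notin J_f]+\mathrm{dist}(i,f)$. This is the same linearity the paper exploits two facilities at a time. The paper's version, in turn, is deterministic and isolates the identity $g_{\mathrm{ref}}(B(H))=R\,g(P(H))$ on whole blocks as a separate fact.
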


\begin{proof}
For \cref{item:abstract-ref-opt}, consider an optimum reference assignment $(C_f)_{f\in\mathcal F}$, which, among all optimum reference assignments, minimizes $\sum_{i\in I}|\{f\colon C_f\cap\AbstractBlock(i)\neq\varnothing\}|$, i.e., which `splits public blocks as little as possible'. Our goal is to show that each $C_f$ is a union of blocks. Assume towards a contradiction that some $\AbstractBlock(i)$ sends some of its clients to facility $f_1$ and some of its clients to facility $f_2$. Let $(J_f)_{f\in\mathcal F}$ be minimizing witnesses for $(C_f)_{f\in \mathcal F}$ in~\eqref{eq:abstract-reference-function}. Consider changing the assignment by moving all clients in $\AbstractBlock(i)\cap C_{f_1}$ to $f_2$. This yields an assignment of cost at most
\[
 \AbstractRefOpt+|\AbstractBlock(i)\cap C_{f_1}|\cdot\bigl(\dist(i,f_2)-\dist(i,f_1)+\mathbf 1_{\{i\in J_{f_1}\}}-\mathbf 1_{\{i\in J_{f_2}\}}\bigr).
\]
Conversely, moving all clients in $\AbstractBlock(i)\cap C_{f_2}$ to $f_1$ gives an assignment of cost at most
\[
 \AbstractRefOpt+|\AbstractBlock(i)\cap C_{f_2}|\cdot\bigl(\dist(i,f_1)-\dist(i,f_2)+\mathbf 1_{\{i\in J_{f_2}\}}-\mathbf 1_{\{i\in J_{f_1}\}}\bigr).
\]
At least one of the new assignments is at most as expensive as the original one, however, both of them yield a smaller value of $\sum_{i\in I}|\{f\colon C_f\cap\AbstractBlock(i)\neq\varnothing\}|$, a contradiction. Hence, each $C_f$ is a union of blocks $\AbstractBlock(i)$, i.e., we may restrict ourselves to assignments of this form.

Fix $f \in \mathcal F$ and let $H \subseteq I$ be such that $C_f=\AbstractBlock(H)$. Then, for any $J\subseteq I$,
 \begin{align}
 |\AbstractBlock(H)\setminus\AbstractBlock(J)|+\AbstractCopies\,\AbstractInputCost(\AbstractOrbit(J))
 &\geq|\AbstractBlock(H\setminus J)|+\AbstractCopies\,\AbstractInputCost(\AbstractOrbit(J\cap H)) \tag{by monotonicity of $g$}\\
 &=\AbstractCopies\,|\AbstractOrbit(H\setminus J)|+\AbstractCopies\,\AbstractInputCost(\AbstractOrbit(J\cap H)) \notag\\
 &\geq\AbstractCopies\,\AbstractInputCost(\AbstractOrbit(H\setminus J))+\AbstractCopies\,\AbstractInputCost(\AbstractOrbit(J\cap H)) \tag{by \eqref{item:fibered-single} and subadditivity}\\
 &\geq\AbstractCopies\,\AbstractInputCost(\AbstractOrbit(H)) \tag{by subadditivity of $g$},
 \end{align}
so $\AbstractRefCost(\AbstractBlock(H)) \ge \AbstractCopies\,\AbstractInputCost(\AbstractOrbit(H))$. 

Hence, the facility cost of each whole-block assignment for the reference instance is at least $\AbstractCopies$ times that of the corresponding whole-orbit assignment for the symmetric instance. Connection costs scale by exactly $\AbstractCopies$, since each block $\AbstractBlock(i)$ consists of $\AbstractCopies|\AbstractOrbit_i|$ clients with connection costs $\dist(i,\cdot)$. As some optimal reference assignment is a whole-block assignment, $\AbstractRefOpt \geq \AbstractCopies \, \AbstractSymOpt$.

Conversely, replace each cluster $\AbstractOrbit(H_f)$ of a whole-orbit
assignment $(H_f)_{f \in \mathcal F}$ by $\AbstractBlock(H_f)$. The witness $H_f$ bounds its
facility cost by $\AbstractCopies\AbstractInputCost(\AbstractOrbit(H_f))$,
and its connection cost scales by $\AbstractCopies$, hence $\AbstractRefOpt \leq \AbstractCopies \, \AbstractSymOpt$. This proves equality.

For \cref{item:abstract-hid-opt}, let $\sigma$ be an optimal assignment for the instance $(\mathcal C_0,\mathcal F,\dist,\AbstractInputCost)$ achieving objective value $\AbstractInputOpt$.
Let $S_f=\sigma^{-1}(f)$, and send
$\AbstractLabelMap^{-1}(S_f)$ to $f$. The witness $S_f$ in
\eqref{eq:abstract-hidden-function} leaves no copy uncovered, so
\[
 \AbstractHidCost(\AbstractLabelMap^{-1}(S_f))
 \leq\AbstractCopies\AbstractInputCost(S_f)
       +\AbstractPenaltyWeight\AbstractSplitPen(S_f).
\]
Connection costs again scale by $\AbstractCopies$. Summing over $f$
and using $\AbstractSplitPen(S_f)\leq|S_f|$ and
$\sum_f|S_f|=|\mathcal C_0|$ proves the claimed bound.
\end{proof}

\subsection{The oracle lower bound}
\label{sec:oracle-lower-bound}

We now prove \cref{thm:symmetry-gap}. We let the replication factor
$\AbstractCopies$ equal the corresponding parameter of \cref{thm:symmetry-gap} and let
$\AbstractPenaltyWeight\coloneqq \AbstractCopies\nu/|\mathcal C_0|$ (where $\nu$ is the parameter from \cref{thm:symmetry-gap}). The family consists of the
hidden instances $(\mathcal C,\mathcal F,\dist,\AbstractHidCost)$, which differ only in $\AbstractLabelMap$.
The proof compares the algorithm's queries against a run with the
public reference function. Every fixed query receives the same answer
from both functions with high probability over the random hidden
partition. Once we condition on the algorithm's random bits, its
queries against the reference function are fixed.
The gap between their optima from \itemref{lem:abstract-optima}{item:abstract-ref-opt} and~\itemref*{lem:abstract-optima}{item:abstract-hid-opt} then implies that the algorithm fails.

For a query $C\subseteq\mathcal C$ and a client $u\in\mathcal C_0$, define its
\intro[abstract copy count]{copy counts},
\intro[abstract block average]{block averages}, and
\intro[abstract imbalance]{imbalance} by
\[
 \intro*\AbstractCopyCount_u(C)\coloneqq|C\cap\AbstractLabelMap^{-1}(u)|,
 \qquad
 \intro*\AbstractBlockAvg_u(C)\coloneqq
 \frac{|C\cap\AbstractBlock(\AbstractOrbitIndex(u))|}
 {|\AbstractOrbit_{\AbstractOrbitIndex(u)}|},
 \qquad
 \intro*\AbstractImbalance_u(C)\coloneqq\AbstractCopyCount_u(C)-\AbstractBlockAvg_u(C).
\]
The block averages are constant on each orbit, so
$\AbstractBlockAvg_u(C) = \AbstractBlockAvg_{u'}(C)$ for all $u'$ in the same orbit as $u$. 
Note furthermore that the imbalance sums to zero on each orbit, i.e., $\sum_{u\in\AbstractOrbit_i}\AbstractImbalance_u(C) = 0$ for all $i \in I$. Call $C$
\intro[abstract well balanced]{well-balanced} if
$\|\AbstractImbalance(C)\|_\infty \coloneqq \max_{u\in\mathcal C_0}|\AbstractImbalance_u(C)| \leq\AbstractPenaltyWeight$.
\cref{fig:abstract-replication} illustrates these quantities.

\begin{figure}[!htbp]
\centering
\begin{tikzpicture}[
  x=1cm,y=1cm,
  every node/.style={font=\small},
  panel/.style={font=\small\bfseries,align=center},
  subpanel/.style={font=\scriptsize,text=black!70,align=center},
  boundary/.style={draw=black!60,densely dashed,rounded corners=4pt},
  seed/.style={circle,draw=black!70,minimum size=5.5mm,inner sep=0pt},
  copy/.style={circle,draw=black!70,line width=.35pt,
    minimum size=4.2mm,inner sep=0pt,font=\scriptsize},
  query/.style={draw=definitioncolor,line width=1.1pt},
  count/.style={font=\scriptsize,align=center},
  op/.style={-stealth,thick,draw=black!65},
]
\node[panel] at (1.65,0) {(a) Input \kl[abstract orbit]{orbit}};
\node[panel] at (6.70,0) {(b) \kl[abstract label]{Hidden labels} revealed};
\node[panel] at (11.85,0) {(c) Public view};
\node[subpanel] at (1.65,-.38) {before \kl[abstract replication]{replication}};
\node[subpanel] at (6.70,-.38) {shown for explanation};
\node[subpanel] at (11.85,-.38) {the same replicated clients};

\node at (1.65,-.90) {$\AbstractOrbit_1$};
\draw[boundary] (0,-1.16) rectangle (3.3,-2.76);
\foreach \j/\x/\shade/\ink in {
  1/.65/cyc0/black,2/1.65/cyc1/black,3/2.65/cyc2/white}
  \node[seed,fill=\shade,text=\ink] at (\x,-1.95) {$u_\j$};
\node[count] at (1.65,-3.08) {$|\AbstractOrbit_1|=3$};

\draw[op] (3.52,-1.92)--(4.78,-1.92)
  node[midway,above,font=\scriptsize] {$\AbstractCopies=4$ \kl[abstract copy]{copies}};
\draw[op] (4.78,-2.44)--(3.52,-2.44)
  node[midway,below,font=\scriptsize] {$\AbstractLabelMap\colon\mathcal C\to\mathcal C_0$};
\draw[op] (8.60,-1.92)--(9.94,-1.92)
  node[midway,above,font=\scriptsize] {hide $\AbstractLabelMap$}
  node[midway,below,font=\scriptsize] {same clients};

\foreach \shift in {0,5.15}{
  \node at ({6.70+\shift},-.90) {$\AbstractBlock(1)$};
  \draw[boundary] ({5.0+\shift},-1.16)
    rectangle ({8.4+\shift},-3.02);
}

\foreach \i/\j/\selected in {
  1/1/1,2/2/1,3/3/1,4/2/0,
  5/3/0,6/1/0,7/2/1,8/3/1,
  9/2/0,10/3/1,11/1/0,12/1/0}{
  \pgfmathsetmacro\px{5.48+mod(\i-1,4)*.81}
  \pgfmathsetmacro\py{-1.52-floor((\i-1)/4)*.56}
  \ifnum\j=1 \def\shade{cyc0}\def\ink{black}\fi
  \ifnum\j=2 \def\shade{cyc1}\def\ink{black}\fi
  \ifnum\j=3 \def\shade{cyc2}\def\ink{white}\fi
  \node[copy,fill=\shade,text=\ink] at (\px,\py) {$c_{\i}$};
  \node[copy,fill=white] at ({\px+5.15},\py) {$c_{\i}$};
  \foreach \shift in {0,5.15}{
    \ifnum\selected=1
      \draw[query] ({\px+\shift},\py) circle[radius=.265];
    \fi
  }
}
\node[count] at (6.70,-3.36)
  {$\AbstractCopyCount(C)|_{\AbstractOrbit_1}=(1,2,3)$};
\node[count] at (11.85,-3.28) {$|C\cap\AbstractBlock(1)|=6$};
\node[count] at (11.85,-3.62)
  {$\AbstractBlockAvg(C)|_{\AbstractOrbit_1}=(2,2,2)$};

\draw[query] (3.80,-4.10) circle[radius=.10];
\node[anchor=west,font=\scriptsize] at (4.02,-4.10)
  {ring: $c\in C$\qquad shading: hidden \kl[abstract label]{label} $\AbstractLabelMap(c)$};
\end{tikzpicture}
\caption{\kl[abstract replication]{Replication} in the hiding construction, shown for one \kl[abstract orbit]{orbit} of
size three and $\AbstractCopies=4$, so each input client has four \kl[abstract copy]{copies}.
Panels~(b) and~(c) show the same clients in the same positions.
In~(b) the shading gives the input \kl[abstract label]{labels} and so reveals $\AbstractLabelMap$; in~(c)
the labels have been removed. Both views show the red rings that mark
the query $C\subseteq\AbstractBlock(1)$.
The hidden \kl[abstract copy count]{copy counts} of $C$ are $(1,2,3)$, and the \kl[abstract public block]{public block} total
gives the averages $(2,2,2)$. Thus
$\|\AbstractCopyCount(C)-\AbstractBlockAvg(C)\|_\infty=1$, and the query is \kl[abstract well balanced]{well-balanced} for
$\AbstractPenaltyWeight\geq1$.}
\label{fig:abstract-replication}
\end{figure}

\Needspace{9\baselineskip}
The key observation is that the functions $\AbstractHidCost$ and $\AbstractRefCost$ agree on all
well-balanced queries. Recall that the parameter $\AbstractPenaltyWeight$ affects both the definition of well-balancedness and the definition of $\AbstractHidCost$.
\begin{lemma}[Agreement on balanced queries]\label{lem:abstract-agreement}
For every hidden partition $\AbstractLabelMap$, the functions $\AbstractHidCost$ and $\AbstractRefCost$ agree on every
well-balanced query.
\end{lemma}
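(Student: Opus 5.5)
Since $\AbstractHidCost\leq\AbstractRefCost$ by \itemref{prop:abstract-instances}{item:abstract-comparison}, it suffices to prove $\AbstractHidCost(C)\geq\AbstractRefCost(C)$ for every well-balanced $C$. The plan is to fix an arbitrary witness $S\subseteq\mathcal C_0$ in~\eqref{eq:abstract-hidden-function} and produce a witness $J\subseteq I$ in~\eqref{eq:abstract-reference-function} whose reference objective is no larger. The natural choice is to round $S$ orbit by orbit to a union of orbits. The hidden objective decomposes orbitwise except for the term $\AbstractCopies\AbstractInputCost(S)$, so we need a way to compare $\AbstractInputCost(S)$ with $\AbstractInputCost$ of the rounded set.

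The cleanest route is a symmetrization over $\AbstractGroup$ rather than a deterministic rounding. The hidden objective is an affine function of the indicator of $S$, apart from $\AbstractInputCost$. Write $s_i\coloneqq|S\cap\AbstractOrbit_i|$ and $p_i\coloneqq|\AbstractOrbit_i|$. The uncovered term is
\[
 |C\setminus\AbstractLabelMap^{-1}(S)|=|C|-\sum_{u\in S}\AbstractCopyCount_u(C).
\]
Write $\AbstractCopyCount_u=\AbstractBlockAvg_u+\AbstractImbalance_u$, and let $n_i\coloneqq|C\cap\AbstractBlock(i)|$, so that $\AbstractBlockAvg_u=n_i/p_i$ on $\AbstractOrbit_i$. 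Then
\[
 |C\setminus\AbstractLabelMap^{-1}(S)|\geq|C|-\sum_i\frac{s_i}{p_i}\,n_i-\sum_i\sum_{u\in S\cap\AbstractOrbit_i}\AbstractImbalance_u(C).
\]
Because the imbalance sums to zero on each orbit and $|\AbstractImbalance_u|\leq\AbstractPenaltyWeight$, the last inner sum is at most $\AbstractPenaltyWeight\min\{s_i,p_i-s_i\}$. To see this, use the elements of $S\cap\AbstractOrbit_i$ directly when $s_i\le p_i-s_i$, and otherwise rewrite the sum as minus the sum over the complement. This exactly cancels the split penalty term $\AbstractPenaltyWeight\AbstractSplitPen(S)$.

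After this cancellation, the hidden objective of $S$ is at least
\[
 \Phi(x)\coloneqq\sum_i (1-x_i)\,n_i+\AbstractCopies\,\AbstractInputCost(S),\qquad x_i\coloneqq s_i/p_i .
\]
Next, bound $\AbstractInputCost(S)$ below using \itemref{lem:shared-symmetry}{item:sym-average}: $\AbstractInputCost(S)\geq\hat g(\bar x)$, where $\bar x$ is the orbit average, which equals $x_i$ on $\AbstractOrbit_i$. By definition of the Lov\'asz extension,
\[
 \hat g(\bar x)=\mathbb E_t\,\AbstractInputCost\bigl(\AbstractOrbit(J_t)\bigr),\qquad J_t\coloneqq\{i:x_i\geq t\}.
\]
Also $\sum_i(1-x_i)n_i=\mathbb E_t|C\setminus\AbstractBlock(J_t)|$, since $\Pr_t[i\notin J_t]=1-x_i$. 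So the lower bound is an average over $t$ of reference objectives $|C\setminus\AbstractBlock(J_t)|+\AbstractCopies\AbstractInputCost(\AbstractOrbit(J_t))$. Hence some $J_t$ has reference objective at most the hidden objective of $S$, giving $\AbstractRefCost(C)\leq\AbstractHidCost(C)$.

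The main obstacle is the bookkeeping step where the imbalance contribution must be bounded exactly by the split penalty. This needs the zero-sum property of $\AbstractImbalance$ on each orbit, together with the fact that $\AbstractSplitPen$ takes the minimum of $s_i$ and $p_i-s_i$. The rest follows from the symmetry and Lov\'asz-extension facts already established.
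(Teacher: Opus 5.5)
Your proof is correct. Its core step is the same as the paper's: you use $|\xi_u(C)|\le\lambda$ and the zero sum of $\xi(C)$ on each orbit to bound $\sum_{u\in S}\xi_u(C)$ by $\lambda\psi(S)$. This reduces every hidden witness $S$ to the symmetrized objective $\sum_i(1-x_i)n_i+R\,g(S)=\sum_{u\notin S}\bar\mu_u(C)+R\,g(S)$, exactly as in the first inequality of \eqref{eq:abstract-count-decomposition}.

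The difference is in how you compare this objective with $g_{\mathrm{ref}}(C)$. The paper first rewrites $g_{\mathrm{ref}}(C)$ as the minimum of the symmetrized objective over \emph{all} $S\subseteq\mathcal C_0$. It argues that this objective is $\Gamma$-invariant and submodular, so it has a union-of-orbits minimizer by \itemref{lem:shared-symmetry}{item:sym-orbit}. You instead fix $S$ and combine \itemref{lem:shared-symmetry}{item:sym-average} with the threshold form of the Lov\'asz extension. This writes the lower bound as an average over $t$ of reference objectives of the level sets $J_t=\{i:x_i\ge t\}$, so some $J_t$ does at least as well.

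Your version is slightly more explicit: it names an orbit-union witness at least as good as $S$, namely a level set of the orbit densities of $S$. It uses the same mechanism as the paper's proof of \cref{lem:symmetric-lp}. The paper's version is shorter, because it only needs the existence of an invariant minimizer for one combined submodular objective.
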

\begin{proof}
Fix such a query $C\subseteq\mathcal C$. We first express the reference
function in terms of the block averages and then drop the restriction to
unions of orbits:
\begin{equation}\label{eq:abstract-symmetrized}
 \begin{aligned}
 \AbstractRefCost(C)
 &=\min_{J\subseteq I}
 \left(|C\setminus\AbstractBlock(J)|
       +\AbstractCopies\AbstractInputCost(\AbstractOrbit(J))\right)\\
 &=\min_{J\subseteq I}
 \left(\sum_{i\in I}|\AbstractOrbit_i\setminus\AbstractOrbit(J)|\,
       \frac{|C\cap\AbstractBlock(i)|}{|\AbstractOrbit_i|}
       +\AbstractCopies\AbstractInputCost(\AbstractOrbit(J))\right)\\
 &=\min_{S\subseteq\mathcal C_0}
 \left(\sum_{i\in I}|\AbstractOrbit_i\setminus S|\,
       \frac{|C\cap\AbstractBlock(i)|}{|\AbstractOrbit_i|}
       +\AbstractCopies\AbstractInputCost(S)\right)\\
 &=\min_{S\subseteq\mathcal C_0}
 \left(\sum_{u\notin S}\AbstractBlockAvg_u(C)
       +\AbstractCopies\AbstractInputCost(S)\right).
 \end{aligned}
\end{equation}
The first equality is the definition~\eqref{eq:abstract-reference-function}.
The second holds because $|\AbstractOrbit_i\setminus\AbstractOrbit(J)|$
is $|\AbstractOrbit_i|$ for $i\notin J$ and $0$ for $i\in J$.
For the third, the objective on the right depends on $S$ only through
$\AbstractInputCost(S)$ and the number of clients of each orbit that $S$
misses. It is therefore invariant under $\AbstractGroup$, and it is
submodular, so by \itemref{lem:shared-symmetry}{item:sym-orbit} some
minimizer is a union of orbits $\AbstractOrbit(J)$. Restricting the minimum
to such sets thus does not change it. The last equality only rewrites the
sum, since $\AbstractBlockAvg_u(C)=|C\cap\AbstractBlock(i)|/|\AbstractOrbit_i|$
for $u\in\AbstractOrbit_i$.

For every witness $S$, the bound $|\AbstractImbalance_u(C)|\leq\AbstractPenaltyWeight$
and the zero sum on each orbit imply
\[
 \sum_{u\in S\cap\AbstractOrbit_i}\AbstractImbalance_u(C)
 =-\sum_{u\in\AbstractOrbit_i\setminus S}\AbstractImbalance_u(C)
 \leq\AbstractPenaltyWeight
 \min\bigl\{|S\cap\AbstractOrbit_i|,\,|\AbstractOrbit_i\setminus S|\bigr\}.
\]
Summing over orbits yields
\begin{equation}\label{eq:abstract-imbalance-absorbed}
 \sum_{u\in S}\AbstractImbalance_u(C)
 =\sum_{i\in I}\sum_{u\in S\cap\AbstractOrbit_i}\AbstractImbalance_u(C)
 \leq\AbstractPenaltyWeight\AbstractSplitPen(S).
\end{equation}
Consequently, every witness in the hidden minimum has cost at least
the reference minimum:
\begin{equation}\label{eq:abstract-count-decomposition}
 \begin{aligned}
 |C\setminus\AbstractLabelMap^{-1}(S)|
 +\AbstractCopies\AbstractInputCost(S)
 +\AbstractPenaltyWeight\AbstractSplitPen(S)
  &=\sum_{u\notin S}\AbstractCopyCount_u(C)
   +\AbstractCopies\AbstractInputCost(S)
   +\AbstractPenaltyWeight\AbstractSplitPen(S)
  \\
 &=\sum_{u\notin S}\AbstractBlockAvg_u(C)
   +\AbstractCopies\AbstractInputCost(S)
   +\AbstractPenaltyWeight\AbstractSplitPen(S)
   -\sum_{u\in S}\AbstractImbalance_u(C)\\
 &\geq\sum_{u\notin S}\AbstractBlockAvg_u(C)
   +\AbstractCopies\AbstractInputCost(S)
 \geq\AbstractRefCost(C).
 \end{aligned}
\end{equation}
Minimizing over $S$ yields $\AbstractHidCost(C)\geq\AbstractRefCost(C)$. With \itemref{prop:abstract-instances}{item:abstract-comparison} for
the reverse inequality we obtain the desired equality.
\end{proof}
For the next lemma, we need the following result by Hoeffding~\cite[Theorem~2 and Section~6]{Hoeffding1963}:
\begin{theorem}[\cite{Hoeffding1963}]\label{theorem:Hoeffding}
Let $\alpha_1,\dots,\alpha_N$ be real numbers and let $X_1,\dots,X_n$ be drawn from the multiset $\{\alpha_1,\dots,\alpha_N\}$ uniformly at random without replacement. Let $S\coloneqq \frac{1}{n}\cdot \sum_{i=1}^n X_i$ denote their average, let $\mu\coloneqq \frac{1}{N}\sum_{i=1}^N \alpha_i$, $a\coloneqq \min_{i=1}^N \alpha_i$ and $b\coloneqq \max_{i=1}^N \alpha_i$. Then, for $t\ge 0$,
\[\Pr[S-\mu \ge t] \le \exp(-2t^2n/(b-a)^2).\]
\end{theorem}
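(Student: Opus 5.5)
The statement is Hoeffding's inequality for sampling without replacement. I would prove it in three steps: a Chernoff bound, a comparison with sampling \emph{with} replacement, and the classical independent-case estimate. We may assume $b>a$ and $n\leq N$; otherwise the claim is trivial or the sampling is undefined.

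\emph{Step 1: Chernoff bound.} For every $h>0$, Markov's inequality gives
\[
 \Pr[S-\mu\geq t]\leq e^{-hnt}\,\mathbb E\exp\Bigl(h\sum_{i=1}^n(X_i-\mu)\Bigr).
\]
It therefore suffices to bound this moment generating function.

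\emph{Step 2: comparison with replacement.} This is the main step. Let $Y_1,\dots,Y_n$ be drawn independently and uniformly from the multiset, i.e.\ with replacement. I claim that $\mathbb E f(\sum_i X_i)\leq\mathbb E f(\sum_i Y_i)$ for every convex $f$; we will use $f(x)=e^{h(x-n\mu)}$.

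To see this, draw the indices of the $Y_k$ and condition on their repetition pattern. Suppose the pattern has $m\leq n$ distinct indices, occurring with multiplicities $c_1,\dots,c_m$, where $\sum_j c_j=n$. Conditionally on the pattern, the distinct values are a uniform sample without replacement of size $m$. So we can couple them with $X_1,\dots,X_m$ and write $\sum_k Y_k=\sum_{j=1}^n c_jX_j$, setting $c_j=0$ for $j>m$.

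The vector $(X_1,\dots,X_n)$ is exchangeable. Hence $\mathbb E f(\sum_j c_jX_j)$ is unchanged if we replace $c$ by $c\circ\tau$ for a uniformly random permutation $\tau$ of $[n]$, independent of $X$. Applying Jensen's inequality conditionally on $X$, with respect to $\tau$, yields
\[
 \mathbb E f\Bigl(\sum_j c_{\tau(j)}X_j\Bigr)\geq\mathbb E f\Bigl(\sum_j \bar c\,X_j\Bigr),
 \qquad \bar c=\frac1n\sum_j c_j=1.
\]
The right-hand side is $\mathbb E f(\sum_j X_j)$. Averaging over repetition patterns proves the claim. The delicate point is checking that the conditional law of the distinct values really is the without-replacement law. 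This holds because the population indices are exchangeable, so every set of $m$ distinct indices is equally likely given the pattern.

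\emph{Step 3: the independent case.} Each $Y_k-\mu$ has mean zero and takes values in $[a-\mu,b-\mu]$, an interval of length $b-a$. Hoeffding's lemma, proved by convexity of $e^{hx}$ and a second-order bound on the log-moment generating function, gives $\mathbb E e^{h(Y_k-\mu)}\leq e^{h^2(b-a)^2/8}$. Combining this with independence and Steps 1 and 2,
\[
 \Pr[S-\mu\geq t]\leq\exp\bigl(-hnt+nh^2(b-a)^2/8\bigr).
\]
Choosing $h=4t/(b-a)^2$ gives the bound $\exp(-2t^2n/(b-a)^2)$.
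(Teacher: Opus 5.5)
Your proof is correct. It follows the same route as the source the paper cites in place of a proof: Hoeffding's Theorem~2 (Chernoff bound plus Hoeffding's lemma for independent bounded summands), combined with his Section~6 comparison $\mathbb E f(\sum_i X_i)\le\mathbb E f(\sum_i Y_i)$ for convex $f$, which your Step~2 reproves.

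One small precision in Step~2: the coupling with $(X_1,\dots,X_m)$ needs the distinct indices, taken in order of first appearance, to be uniform over \emph{ordered} $m$-tuples given the pattern, not just over $m$-sets. This holds because each such tuple together with the pattern determines the whole index sequence, so it has probability $N^{-n}$.
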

\begin{corollary}\label{cor:Hoeffding}
In the setting of \cref{theorem:Hoeffding}, we have
$\Pr[|S-\mu| \ge t] \le 2\cdot\exp(-2t^2n/(b-a)^2)$.
\end{corollary}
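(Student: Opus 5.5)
The corollary is the two-sided version of \cref{theorem:Hoeffding}. I will obtain it by applying that theorem twice, once to the original values and once to their negations, and then taking a union bound.

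For the upper tail, \cref{theorem:Hoeffding} applied directly to $\alpha_1,\dots,\alpha_N$ gives $\Pr[S-\mu\ge t]\le\exp(-2t^2n/(b-a)^2)$.

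For the lower tail, I consider the multiset $\{-\alpha_1,\dots,-\alpha_N\}$. Sampling $n$ elements from it uniformly without replacement has the same law as sampling the $X_i$ and negating them. So the new sample average is $-S$ and the new population mean is $-\mu$. The new minimum is $a'=-b$ and the new maximum is $b'=-a$, hence $(b'-a')^2=(b-a)^2$. \Cref{theorem:Hoeffding} then gives
\[
\Pr[\mu-S\ge t]=\Pr[(-S)-(-\mu)\ge t]\le\exp\bigl(-2t^2n/(b-a)^2\bigr).
\]

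Finally, for $t\ge0$ the event $|S-\mu|\ge t$ is the union of $\{S-\mu\ge t\}$ and $\{\mu-S\ge t\}$. A union bound adds the two tail estimates and yields the factor $2$.

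The only points needing care are that negation preserves the without-replacement sampling structure and that it leaves the range $b-a$ unchanged. Both are immediate, so I expect no real obstacle.
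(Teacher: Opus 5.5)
Your proposal is correct and follows the same route as the paper's proof. The paper also negates all $\alpha_i$ to obtain the lower-tail bound from \cref{theorem:Hoeffding} with the same range $b-a$, and then takes a union bound with the upper tail.
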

\begin{proof}
 This follows by negating all $\alpha_i$ to obtain the same bound on the probability that $\mu-S\ge t$, and then applying a union bound. 
\end{proof}
\begin{lemma}[Agreement on a fixed query]\label{lem:abstract-fixed-query}
Let $\AbstractLabelMap$ be the random hidden partition of
\cref{sec:replication}. For every fixed query
$C\subseteq\mathcal C$,
\[
 \Pr_{\AbstractLabelMap}[\AbstractHidCost(C)\neq\AbstractRefCost(C)]
 \leq\Pr_{\AbstractLabelMap}[C\text{ is not well-balanced}]
 \leq2\,|\mathcal C_0|\exp(-2\AbstractPenaltyWeight^2/\AbstractCopies).
\]
\end{lemma}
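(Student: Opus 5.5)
The first inequality is immediate from \cref{lem:abstract-agreement}: whenever $C$ is well-balanced, the two functions agree on $C$. Hence the event $\AbstractHidCost(C)\neq\AbstractRefCost(C)$ is contained in the event that $C$ is not well-balanced. For the second inequality, we take a union bound over all $u\in\mathcal C_0$ of the event $|\AbstractImbalance_u(C)|>\AbstractPenaltyWeight$. This reduces the claim to showing, for each fixed $u$,
\[
 \Pr_{\AbstractLabelMap}\bigl[|\AbstractCopyCount_u(C)-\AbstractBlockAvg_u(C)|\geq\AbstractPenaltyWeight\bigr]\leq2\exp(-2\AbstractPenaltyWeight^2/\AbstractCopies).
\]

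Fix $u$ and let $i=\AbstractOrbitIndex(u)$. Only the random partition of the block $\AbstractBlock(i)$ matters here. By symmetry of the uniformly random partition into classes of size $\AbstractCopies$, the copy class $\AbstractLabelMap^{-1}(u)$ is a uniformly random $\AbstractCopies$-subset of $\AbstractBlock(i)$. We then apply \cref{cor:Hoeffding} as follows:
\begin{itemize}
\item Take the multiset $\{\alpha_c\}_{c\in\AbstractBlock(i)}$ with $\alpha_c=\mathbf 1_{\{c\in C\}}$, so $N=\AbstractCopies|\AbstractOrbit_i|$.
\item Draw $n=\AbstractCopies$ elements without replacement; these are exactly the members of the copy class of $u$.
\item The sample sum is $\AbstractCopyCount_u(C)$, so the sample average is $S=\AbstractCopyCount_u(C)/\AbstractCopies$.
\item The population mean is $\mu=|C\cap\AbstractBlock(i)|/(\AbstractCopies|\AbstractOrbit_i|)=\AbstractBlockAvg_u(C)/\AbstractCopies$.
\item The range satisfies $b-a\leq1$.
\end{itemize}

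Set $t=\AbstractPenaltyWeight/\AbstractCopies$. Then $|\AbstractImbalance_u(C)|\geq\AbstractPenaltyWeight$ is the same event as $|S-\mu|\geq t$, and \cref{cor:Hoeffding} gives probability at most
\[
 2\exp\bigl(-2t^2\AbstractCopies\bigr)=2\exp\bigl(-2\AbstractPenaltyWeight^2/\AbstractCopies\bigr).
\]
If all $\alpha_c$ are equal, then $b-a=0$. In that case the deviation is identically zero, so the bound holds trivially; this degenerate case should be noted separately.

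Summing over the $|\mathcal C_0|$ choices of $u$ yields the claim. Since the failure event is the strict inequality $>\AbstractPenaltyWeight$, bounding the larger event $\geq$ is harmless. The only step needing care is the justification that a single copy class is a uniform $\AbstractCopies$-subset of its block, so that Hoeffding's without-replacement bound applies. Dependence between different $u$ is irrelevant because we use only a union bound.
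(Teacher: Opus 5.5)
Your proposal is correct and follows the paper's proof essentially step by step. You derive the first inequality from \cref{lem:abstract-agreement}, treat each copy class as a uniform $R$-subset of its block, apply \cref{cor:Hoeffding} to the $0/1$ indicators of $C$ with $t=\lambda/R$, and take a union bound over $\mathcal C_0$. Your remark that $b-a\leq 1$, with the degenerate case $b=a$ trivial, is slightly more careful than the paper, which simply asserts $a=0$ and $b=1$.
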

\begin{proof}
The first inequality is \cref{lem:abstract-agreement}.
Each copy class $\AbstractLabelMap^{-1}(u)$ is a uniformly distributed
subset of cardinality $\AbstractCopies$ in its public block $B(\omega(u))$.
This is equivalent to sampling $R$ elements $c_1,\dots,c_R$ uniformly without replacement from $B(\omega(u))$.
We identify each element of $c\in B(\omega(u))\cap C$ with the number $\alpha_c=1$ and every $c\in B(\omega(u))\setminus C$ with $\alpha_c=0$. 
Let $X_i$ denote the random variable $\alpha_{c_i}$ and let $S = R^{-1}\sum_{i=1}^R X_i$ denote the average of the $X_i$. The minimum and maximum value of the $\alpha_c$'s are $a=0$ and $b=1$.
Then $\mu_u(C)=\sum_{i=1}^R X_i=R\cdot S$ is the number of elements from $C$ that were sampled for $\pi^{-1}(u)$.
Moreover, we have 
\begin{equation}
  \mu\coloneqq \frac{1}{|B(\omega(u))|}\cdot \sum_{c\in B(\omega(u))} \alpha_c = \frac{|C\cap B(\omega(u))|}{|B(\omega(u))|}=\frac{1}{R}\cdot \AbstractBlockAvg_u(C)\,.
\end{equation}

Thus, \cref{cor:Hoeffding} yields
\[
\begin{aligned}
\Pr_{\AbstractLabelMap}[|\AbstractImbalance_u(C)|>\AbstractPenaltyWeight]
&=\Pr_{\AbstractLabelMap}[|\AbstractCopyCount_u(C)-\AbstractBlockAvg_u(C)|>\AbstractPenaltyWeight]
=\Pr_{\AbstractLabelMap}[|R\cdot S-R\cdot\mu|>\AbstractPenaltyWeight]
=\Pr_{\AbstractLabelMap}[|S-\mu|>\AbstractPenaltyWeight/R]\\
&\le 2\cdot \exp(-2(\AbstractPenaltyWeight/R)^2\cdot R)=2\cdot \exp(-2\AbstractPenaltyWeight^2/R).
\end{aligned}
\]
Union bound over $u\in\mathcal C_0$ gives the result.
\end{proof}

We are ready to prove \cref{thm:symmetry-gap}.
\begin{proof}[Proof of \cref{thm:symmetry-gap}]
The copies in $\AbstractBlock(i)$ have the connection costs of the
clients in $\AbstractOrbit_i$, and by
\itemref{prop:abstract-instances}{item:abstract-functions}
and~\itemref*{prop:abstract-instances}{item:abstract-matroid}
the facility cost functions are normalized, monotone, and submodular,
and matroid rank functions under the stated integrality conditions.

Recall that we set $\AbstractCopies=R$ and
$\AbstractPenaltyWeight=R\nu/|\mathcal C_0|$ at the beginning of
\cref{sec:oracle-lower-bound}. For the success probability, let
$\eta\coloneqq2|\mathcal C_0|\exp(-2\AbstractPenaltyWeight^2/\AbstractCopies)
=2|\mathcal C_0|\exp(-2R\nu^2/|\mathcal C_0|^2)$.
To compare the cost of the returned assignment under the
two oracles, we treat the output of the $|\mathcal F|$ clusters as additional queries to the oracle. This gives at most $Q+|\mathcal F|$ queries.

Run the algorithm against the public reference oracle and condition
on its random bits. This fixes its queries and its output, and they do
not depend on $\AbstractLabelMap$, so \cref{lem:abstract-fixed-query}
and a union bound give agreement on all of them with probability at
least $1-(Q+|\mathcal F|)\eta$ over $\AbstractLabelMap$. On this event, induction over the
queries shows that the hidden-oracle run has the same transcript and
output. The output assignment has the same cost under both oracles,
since they agree on its clusters and connection costs are public.
By \itemref{lem:abstract-optima}{item:abstract-ref-opt}
and~\itemref*{lem:abstract-optima}{item:abstract-hid-opt}
and~\eqref{eq:symmetry-gap-condition}, this cost is at least
\[
 \AbstractCopies\,\AbstractSymOpt
 >\alpha R\bigl(\AbstractInputOpt+\nu\bigr)
 =\alpha\bigl(\AbstractCopies\,\AbstractInputOpt
              +\AbstractPenaltyWeight|\mathcal C_0|\bigr)
 \geq\alpha\,\AbstractHidOpt.
\]
Thus, for every fixed choice of random bits, success has probability
at most $(Q+|\mathcal F|)\eta$ over $\AbstractLabelMap$, and hence also
over the partition and internal randomness jointly. Averaging over the partition gives
a fixed $\AbstractLabelMap$ with the same upper bound on success over
the algorithm's randomness.
\end{proof}

\subsection{Configuration-LP gaps}
\label{sec:config-lp-gaps}

We now turn the symmetry gap of the
\kl[abstract symmetric instance]{symmetric assignment instance}
$(\mathcal C_0,\mathcal F,\dist,\AbstractInputCost,\AbstractGroup)$
fixed in \cref{sec:replication} into integrality gaps of the
configuration LP~\eqref{eq:abstract-config-lp}. We give two
constructions, each derived from this instance. In both, the integral
optimum is a fixed multiple of the symmetric optimum
($\AbstractCopies\,\AbstractSymOpt$ and $\AbstractSymOpt$, respectively),
while the configuration LP is at most the same multiple of the
unrestricted optimum ($\AbstractCopies\,\AbstractInputOpt$ and
$\AbstractInputOpt$, respectively).
The first construction is the
\kl[abstract reference instance]{reference instance}
$(\mathcal C,\mathcal F,\dist,\AbstractRefCost)$ of
\cref{sec:replication}. Its optimum is
$\AbstractRefOpt=\AbstractCopies\,\AbstractSymOpt$ by
\itemref{lem:abstract-optima}{item:abstract-ref-opt}. Write
$\operatorname{ConfLP}_{\mathrm{ref}}$ for the value of its
configuration LP. The hidden partition plays no role here.

The second construction contracts each orbit to a single client. The
\intro[abstract quotient instance]{quotient instance} has client set
$I$, facilities $\mathcal F$,
\intro[abstract quotient function]{facility cost function}
$\intro*\AbstractQuotCost(J)\coloneqq\AbstractInputCost(\AbstractOrbit(J))$
for $J\subseteq I$, and connection cost $|\AbstractOrbit_i|\dist(i,f)$
from $i$ to $f$. Thus the client $i$ stands for the whole orbit
$\AbstractOrbit_i$. An assignment $\tau\colon I\to\mathcal F$ of the
quotient instance corresponds to the \kl{whole-orbit} assignment of
$(\mathcal C_0,\mathcal F,\dist,\AbstractInputCost,\AbstractGroup)$
that sends every client of $\AbstractOrbit_i$ to $\tau(i)$, and both
have the same cost. Hence the optimum of the quotient instance is
$\AbstractSymOpt$. Write $\operatorname{ConfLP}_{\mathrm{quot}}$ for
the value of its configuration LP.

\Needspace{9\baselineskip}
\begin{proposition}[Configuration-LP gaps]\label{prop:abstract-config-lp}
Every \kl[assignment]{assignment} $\sigma$ of the
\kl[abstract symmetric instance]{symmetric assignment instance}
$(\mathcal C_0,\mathcal F,\dist,\AbstractInputCost,\AbstractGroup)$
gives configuration-LP solutions of cost at most
\begin{enumerate}
\item\label{item:abstract-quotient}
$\Cost(\sigma)$ in the quotient instance;
\item\label{item:abstract-lp}
$\AbstractCopies\Cost(\sigma)$ in the reference instance.
\end{enumerate}
Consequently,
$\operatorname{ConfLP}_{\mathrm{quot}}\leq\AbstractInputOpt$
and $\operatorname{ConfLP}_{\mathrm{ref}}\leq\AbstractCopies\,\AbstractInputOpt$.
\end{proposition}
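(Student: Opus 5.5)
The plan is to symmetrize $\sigma$ by averaging over the group $\AbstractGroup$, as in the proof of \cref{lem:symmetric-lp}. Let $S_f\coloneqq\sigma^{-1}(f)$ and draw $\gamma$ uniformly from $\AbstractGroup$. The assignment $\gamma\circ\sigma\circ\gamma^{-1}$ has clusters $\gamma(S_f)$, and by~\eqref{eq:fibered-invariance} it has the same cost as $\sigma$. Its orbit average is $x_f\coloneqq\mathbb E_\gamma\mathbf 1_{\gamma(S_f)}$, and by \itemref{lem:shared-symmetry}{item:sym-average} it has entries $(x_f)_u=|S_f\cap\AbstractOrbit_i|/|\AbstractOrbit_i|$ for $u\in\AbstractOrbit_i$. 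Moreover $\hat g(x_f)\leq g(S_f)$, and $\sum_f(x_f)_u=1$ for every $u$. Write $z_{f,i}\coloneqq|S_f\cap\AbstractOrbit_i|/|\AbstractOrbit_i|$. Then $\sum_f z_{f,i}=1$, and each $x_f$ is constant on orbits with value $z_{f,i}$.

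For the quotient instance, I take threshold sets. For $t$ uniform in $(0,1]$, I give the set $J_f(t)\coloneqq\{i:z_{f,i}\geq t\}$ the mass $y_{f,J}\coloneqq\Pr_t[J_f(t)=J]$. These masses sum to one at each facility, and client $i$ is covered at $f$ with mass $z_{f,i}$, hence with total mass $1$. The facility cost at $f$ is
\[
 \mathbb E_t\,\AbstractQuotCost(J_f(t))=\mathbb E_t\,\AbstractInputCost(\AbstractOrbit(J_f(t)))=\hat g(x_f)\leq g(S_f),
\]
because $\AbstractOrbit(J_f(t))$ is exactly the threshold set $\{u:(x_f)_u\geq t\}$. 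The connection cost at $f$ is
\[
 \sum_i z_{f,i}|\AbstractOrbit_i|\dist(i,f)=\sum_i|S_f\cap\AbstractOrbit_i|\dist(i,f)=\sum_{u\in S_f}\dist(u,f).
\]
Summing over $f$ gives total cost at most $\Cost(\sigma)$, and feasibility holds since $\dist(i,f)<\infty$ whenever $z_{f,i}>0$. This proves \cref{item:abstract-quotient}.

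For the reference instance, I lift the same solution. I put mass $y_{f,J}$ on the configuration $\AbstractBlock(J)$ at $f$. Copy coverage is inherited from the quotient coverage, since a copy in $\AbstractBlock(i)$ is covered exactly when $i$ is. The witness $J$ in~\eqref{eq:abstract-reference-function} gives $\AbstractRefCost(\AbstractBlock(J))\leq\AbstractCopies\,\AbstractInputCost(\AbstractOrbit(J))$. Hence the facility cost at $f$ is at most $\AbstractCopies\,\hat g(x_f)\leq\AbstractCopies\,g(S_f)$. The connection cost at $f$ is
\[
 \sum_i z_{f,i}\,\AbstractCopies|\AbstractOrbit_i|\dist(i,f),
\]
which is $\AbstractCopies$ times the quotient connection cost. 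This proves \cref{item:abstract-lp}. Applying both parts to an optimal $\sigma$ yields the two consequences $\operatorname{ConfLP}_{\mathrm{quot}}\leq\AbstractInputOpt$ and $\operatorname{ConfLP}_{\mathrm{ref}}\leq\AbstractCopies\,\AbstractInputOpt$.

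The only delicate step is the identity that the threshold sets of $x_f$ are exactly the unions $\AbstractOrbit(J_f(t))$, together with the Lov\'asz extension bound $\hat g(x_f)\leq g(S_f)$. Both follow directly from constancy on orbits and from \itemref{lem:shared-symmetry}{item:sym-average}.
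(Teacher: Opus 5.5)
Your proof is correct and essentially the paper's: the paper invokes \cref{lem:symmetric-lp} to get an optimal configuration-LP solution supported on unions of orbits, then transfers it to the quotient instance and lifts it to the reference instance. You instead apply the same threshold-set symmetrization (orbit averages plus the Lov\'asz-extension bound $\hat g(x_f)\le g(S_f)$) directly to $\sigma$, which avoids the appeal to an optimal LP solution. The lifting step is then identical to the paper's, and there is one cosmetic slip: since the group acts only on clients, the conjugated assignment should be $\sigma\circ\gamma^{-1}$ (whose clusters are $\gamma(S_f)$), not $\gamma\circ\sigma\circ\gamma^{-1}$, though that sentence is never used.
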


By \itemref{lem:abstract-optima}{item:abstract-ref-opt} and
the correspondence above, both instances therefore have integrality gap
at least $\AbstractSymOpt/\AbstractInputOpt$ when $\AbstractInputOpt>0$.
The reference instance keeps the connection costs $\dist$ and has
matroid-rank facility costs when $\AbstractCopies\AbstractInputCost$ is
integer-valued; \cref{sec:sfl-config-gap} uses it. The quotient
instance keeps the structure of $\AbstractInputCost$ on unions of
orbits, such as weighted coverage.
In this paper, the quotient instance serves only as an intermediate
step towards the reference instance.

\begin{proof}[Proof of \cref{prop:abstract-config-lp}]
Let $\sigma\colon\mathcal C_0\to\mathcal F$ be an \kl[assignment]{assignment} of
$(\mathcal C_0,\mathcal F,\dist,\AbstractInputCost,\AbstractGroup)$. Giving
$\sigma^{-1}(f)$ mass one at each facility $f$ is a feasible solution of
its configuration LP of cost $\Cost(\sigma)$. Since $\AbstractInputCost$
and the connection costs are invariant under $\AbstractGroup$,
\cref{lem:symmetric-lp} gives an optimal solution $y$, of cost at most
$\Cost(\sigma)$, whose support consists of unions of orbits
$\AbstractOrbit(J)$, $J\subseteq I$.

For \cref{item:abstract-quotient}, the function $\AbstractQuotCost$ is
normalized, monotone, and submodular, since $J\mapsto\AbstractOrbit(J)$
preserves inclusions, unions, and intersections. In the quotient
instance, give each $J\subseteq I$ the mass $y_{f,\AbstractOrbit(J)}$ at
$f$. Every $i\in I$ is covered with total mass one, since every
client of $\AbstractOrbit_i$ is covered with total mass one in $y$. The
facility cost $\AbstractQuotCost(J)=\AbstractInputCost(\AbstractOrbit(J))$
and the connection cost $\sum_{i\in J}|\AbstractOrbit_i|\dist(i,f)$ of $J$
at $f$ equal those of $\AbstractOrbit(J)$ at $f$. This solution therefore has cost at most
$\Cost(\sigma)$, and taking $\sigma$ optimal gives
$\operatorname{ConfLP}_{\mathrm{quot}}\leq\AbstractInputOpt$.

For \cref{item:abstract-lp}, we lift: every feasible solution $y$
of the quotient configuration LP gives a solution of the reference
configuration LP with $\AbstractCopies$ times its cost. Give the
configuration $\AbstractBlock(J)$ at $f$ the mass $y_{f,J}$. Every copy in
$\AbstractBlock(i)$ is then covered with total mass
$\sum_{f\in\mathcal F}\sum_{J\ni i}y_{f,J}=1$, and $\AbstractBlock(J)$ is a
configuration at $f$ because its copies have connection costs $\dist(i,f)$
for $i\in J$. The \kl[witness]{witness} $J$
in~\eqref{eq:abstract-reference-function} gives
$\AbstractRefCost(\AbstractBlock(J))\leq\AbstractCopies\AbstractInputCost(\AbstractOrbit(J))
=\AbstractCopies\AbstractQuotCost(J)$, and the connection cost of
$\AbstractBlock(J)$ is
$\sum_{i\in J}\AbstractCopies|\AbstractOrbit_i|\dist(i,f)$, which is
$\AbstractCopies$ times that of $J$ in the quotient instance. Applying this
to the solution of \cref{item:abstract-quotient} proves the claimed
bound and gives
$\operatorname{ConfLP}_{\mathrm{ref}}\leq\AbstractCopies\operatorname{ConfLP}_{\mathrm{quot}}
\leq\AbstractCopies\,\AbstractInputOpt$.
\end{proof}

\section{Lower bound for submodular facility location}\label{sec:sfl-lower-bound}

Since the symmetry-gap theorem takes care of the query model, the
construction in \cref{sec:quantitative-seed} is purely combinatorial. We
build a symmetric instance and compare the cost of splitting its
orbits with the cost of keeping them together.

\subsection{A symmetry gap on a cyclic lift}
\label{app:quantitative}\label{sec:quantitative-seed}

In this subsection, we present the construction outlined in
\cref{sec:overview}. Let $d$ be sufficiently large, and set $K=d^{d+3}$ and
$m=K^2$ (see~\eqref{eq:quant-parameters}).

\cref{sec:quant-source} shows that there exists a $d$-regular bipartite
graph with $m$ vertices on each side, formed by $d$ perfect matchings, that 
has cover number at least $m\log d/(2d)$ and expands on small sets in the following sense:
every set $W$ of at most $2K$ right vertices has at least $(d-2) |W|$ many neighbors. \cref{sec:quant-seed} defines its cyclic lift formally.

\cref{sec:quant-cap} defines the facility cost of a set $S$ of
clients as the minimum of $|S\cup N(W)|-(d-3)|W|$ over sets $W$ of at
most $2K$ lifts. We cap the facility cost at $K$. If the lifts in $W$ have disjoint
neighborhoods, this expression equals $|S\setminus N(W)|+3|W|$, i.e., every
client not covered by $W$ costs $1$, and every lift covers its $d$
neighbors at cost $3$. In particular, the neighborhood of a single lift
costs at most $3$. Overlapping neighborhoods make the expression
smaller, but expansion keeps small sets $W$ close to disjoint, and the
cap keeps the minimizing sets $W$ small. We will show that, together, they make the
function a matroid rank function that stays close to this cover cost. On a union of whole orbits, it is at least
$d$ times the cover number of the corresponding left vertices, up to
the cap.

\cref{sec:metric-definition} adds the metric, the shortest-path
distance scaled by $1/d$ and capped at $1$, and proves the symmetry gap
(\cref{lem:symmetry-gap}). The assignment that splits every orbit costs
at most $4m$. In a whole-orbit assignment, an orbit more than $d-1$
hops from its facility pays connection cost $d$, which accounts for one
right vertex of a cover. By $d$-regularity, each facility has at most $d^d \ll K$ orbits within
$d-1$ hops, too few for the cap to apply, so every whole-orbit
assignment costs at least $d$ times the cover number, i.e., at least
$d \cdot \frac{m\log d}{2d} = \frac{m\log d}{2}$. The gap is therefore at least $\frac{m\log d}{2}\big/(4m) = \frac{\log d}{8}$. Finally,
\cref{sec:quant-hiding} applies the symmetry-gap theorem of
\cref{sec:symmetry-gap} to this input.

\subsubsection{Graph notation}
A \intro{multigraph} is a pair $G=(V(G),E(G))$, where $V(G)$ is a finite set of \intro{vertices} and $E(G)$ is a finite multiset of \intro{edges}, which are elements of $\binom{V(G)}{2}$. Throughout, we say \intro{graph} for multigraph. The \intro{degree} of a vertex $v$ is the number of edges that have $v$ as an endpoint. The graph is
\intro[d-regular]{$d$-regular} if all degrees equal $d$. The \intro{neighborhood} map
$\intro*\Nbr\colon2^{V(G)}\to2^{V(G)}$
sends $X$ to the set $\Nbr(X)$ of vertices adjacent to some $x\in X$.

A graph is called \intro{bipartite} if its vertex set can be partitioned into a \intro{left side} $\mathcal U$ and a \intro{right side} $\mathcal V$ such that each edge contains exactly one left and one right endpoint.
For a finite \intro{bipartite graph} $\intro*\BipGraph$ with left side
$\mathcal U$ and right side $\mathcal V$,
the \intro{cover number} $\intro*\CoverNum\colon2^{\mathcal U}\to\mathbb Z_{\geq0}$ is
\[
 \CoverNum(U)\coloneqq
 \min\{|V|:V\subseteq\mathcal V,\ U\subseteq\Nbr(V)\}.
\]

\subsubsection{A graph with expansion and a large cover number}
\label{sec:quant-source}

Fix a sufficiently large integer $d$ and set
\begin{equation}\label{eq:quant-parameters}
 K\coloneqq d^{d+3},\qquad m\coloneqq K^2.
\end{equation}
In the following, we prove the existence of a $d$-regular bipartite graph with $m$ vertices on each side
that has the large cover number and the expansion described at the
beginning of this subsection. Graphs with these properties can also be
obtained from existing results, for example from the regular graphs of
large girth and large domination number due to Alon~\cite[Lemma~2.1(ii)]{Alon2010}. Girth larger than $4K$ gives the
expansion, but the girth of these graphs is only logarithmic in their
number $n$ of vertices, so this requires $n\geq d^{8K}$. The resulting
gap of order $\log d$ is then only of order $\log\log\log|\mathcal C|$.
Moreover, these results are stated for fixed $d$ as $n\to\infty$, and
making their dependence on $d$ explicit requires going through their
proofs. The following lemma shows that a random graph has these
properties, which gives a gap of order $\log\log|\mathcal C|$.

\begin{lemma}[Random base graph]\label{lem:quant-source}
There is a \kl[d-regular]{$d$-regular} bipartite multigraph $\BipGraph$
with sides $\mathcal U,\mathcal V$ whose edges are the union of $d$
perfect matchings $M_a$, $a\in\mathbb Z_d$, such that
\[
 |\mathcal U|=|\mathcal V|=m,\qquad
 \CoverNum(\mathcal U)\geq\frac{m\log d}{2d},
\]
and
\begin{equation}\label{eq:base-expansion}
 |\Nbr(V)|\geq(d-2)|V|
 \qquad \forall \; V\subseteq\mathcal V,\ |V|\leq2K.
\end{equation}
\end{lemma}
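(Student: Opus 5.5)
We take $\BipGraph$ to be a random graph built from $d$ independent uniformly random perfect matchings $M_a$ ($a\in\mathbb Z_d$) between $\mathcal U$ and $\mathcal V$, both of size $m$. The graph is then $d$-regular, and the lemma follows once we show that each of the two properties fails with probability less than $1/2$.

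\emph{Cover number.} Fix $V\subseteq\mathcal V$ with $|V|=s\coloneqq\lfloor m\log d/(2d)\rfloor$. For a fixed $u\in\mathcal U$, the events that $M_a$ matches $u$ outside $V$ are independent over $a$, and each has probability $1-s/m$. So $u$ is uncovered with probability $p=(1-s/m)^d\approx d^{-1/2}$. Coverage of different left vertices is negatively correlated: within one matching, the events ``$u$ is matched outside $V$'' are negatively associated, and the matchings are independent. Hence
\[
\Pr[V\text{ covers }\mathcal U]=\Pr[\text{no }u\text{ uncovered}]\leq(1-p)^m\leq\exp(-m d^{-1/2}/2).
\]
One can also bound this directly by computing the probability that each $M_a$ maps $V$ onto a set whose union is $\mathcal U$. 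A union bound over the $\binom{m}{s}\leq(em/s)^s=\exp(s\log(2ed/\log d))\le\exp(m\log^2 d/d)$ choices of $V$ gives total probability at most $\exp(m\log^2d/d-m d^{-1/2}/2)=o(1)$ for large $d$. Covers with fewer than $s$ vertices extend to covers of size exactly $s$, so this gives $\CoverNum(\mathcal U)\geq s$. (The factor $1/2$ in the exponent keeps $p$ polynomially large in $1/d$.)

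\emph{Expansion.} Fix $V\subseteq\mathcal V$ with $|V|=k\leq2K$. The $dk$ edge endpoints in $\mathcal U$ are $d$ independent uniformly random $k$-subsets. If $|\Nbr(V)|<(d-2)k$, there are more than $2k$ collisions, i.e.\ endpoints that land in a set already hit by earlier endpoints. Reveal the endpoints one at a time. Each new endpoint lands among previously hit vertices, or is a repeat within a multi-edge, with probability at most $dk/(m-k)\leq 4dK/m$. Therefore
\[
\Pr\bigl[|\Nbr(V)|<(d-2)k\bigr]\le\binom{dk}{2k}\Bigl(\frac{4dK}{m}\Bigr)^{2k}\le\Bigl(\frac{e d}{2}\cdot\frac{4dK}{m}\Bigr)^{2k}.
\]
We union bound over the $\binom mk\le m^k$ sets $V$ of size $k$. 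Each term is then at most $\bigl(m\cdot(2ed^2K/m)^2\bigr)^k=(4e^2d^4K^2/m)^k$. This is the step where the parameter choice is tight: with $m=K^2$, the ratio $K^2/m$ equals $1$ and the bound fails. We therefore want the collision exponent to beat $m^k$ with room to spare. To get this, we count more carefully. A collision requires that the new endpoint hit one of at most $dk$ specific vertices, but we may also charge the union bound against the $m^{-1}$ gained per collision. We require at least $2k+1$ collisions, i.e.\ $3k$ collisions if we use $(d-3)$ slack. Alternatively, we use the threshold $2k$ collisions together with $m^{k}\cdot m^{-2k}\cdot(2ed^2K)^{2k}=(4e^2d^4K^2/m)^k$, and choose constants so that $4e^2d^4K^2\le m^{1-\epsilon}$ fails. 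Hence the main obstacle is this balance. If $m=K^2$ is genuinely too small for the naive count, I would refine by exploiting that a collision set of $2k$ endpoints among $dk$ must land in at most $dk$ targets, giving a factor $(dk/m)^{2k}$ against $\binom mk\le (em/k)^k$. This yields $(e m/k)^k(e d\cdot dk/m)^{2k}\cdot$const$^k=(C d^4 k/m)^k\le(2Cd^4K/m)^k=(2Cd^4/K)^k\le d^{-k}$, since $K=d^{d+3}$. This is the version I expect to work: the key is using $k$ rather than $K$ in the collision probability and $(em/k)^k$ rather than $m^k$.

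Summing $d^{-k}$ over $1\le k\le 2K$ gives $O(1/d)<1/2$. Together with the cover-number bound, a positive fraction of outcomes satisfy both properties, which proves the lemma.
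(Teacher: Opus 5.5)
Your proof is correct. The cover-number half follows the paper; the expansion half takes a different route.

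\textbf{Cover number.} You do what the paper does: fix an $s$-set, bound the probability that it covers $\mathcal U$ by $(1-p)^m\le e^{-mp}$ via negative correlation, and union bound over $\binom ms$ sets. The only difference is how you justify the negative correlation. You cite negative association of sampling without replacement, together with its closure under independent products and under monotone functions of disjoint coordinate blocks, applied to the coverage indicators of the vertices $u$. The paper instead gives an explicit coupling.

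There is one slip. Your argument shows that no set of size at most $s$ covers $\mathcal U$, i.e.\ $\tau(\mathcal U)\ge s+1$, and you need that stronger conclusion. Since $s=\lfloor m\log d/(2d)\rfloor$ is in general strictly below $m\log d/(2d)$, stating only $\tau(\mathcal U)\ge s$ does not give the lemma.

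\textbf{Expansion.} The paper uses a static union bound over pairs $(V,U)$ with $|U|=\ell=(d-2)k$ and $N_G(V)\subseteq U$, which gives $\binom mk\binom m\ell(\ell/m)^{dk}\le[e^{d-1}(d-2)^2k/m]^k$. Guessing $U$ costs a factor $e^{(d-2)k}$, which is why the paper needs $K\gg e^{d}$.

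Your final paragraph instead exposes the $dk$ endpoints one at a time.
\begin{itemize}
\item Each endpoint is uniform on at least $m-k$ unused vertices of its matching, and fewer than $dk$ of these are already hit.
\item By the chain rule, a fixed set of $2k$ positions are all collisions with probability at most $(dk/(m-k))^{2k}$.
\item Union bounding over $\binom{dk}{2k}\le(ed/2)^{2k}$ position sets and $\binom mk\le(em/k)^k$ sets $V$ gives $(e^3d^4k/m)^k\le(2e^3d^4/K)^k\le d^{-k}$.
\end{itemize}
This is valid, and the bound is only polynomial in $d$, so expansion alone would survive with $K=\mathrm{poly}(d)$. That buys nothing for the paper, which needs $K\ge d^{d+1}$ anyway in \cref{lem:symmetry-gap}, but it is the sharper estimate.

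In a writeup, keep only that final paragraph. The earlier bound with $\binom mk\le m^k$ and collision probability $4dK/m$ is useless at $m=K^2$, as you noticed. The remarks about $3k$ collisions with $(d-3)$ slack and about $m^{1-\epsilon}$ do not describe an argument.
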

Small-set expansion in the union of $d$ independent uniform perfect
matchings is well known~\cite[Theorem~4.4 and subsequent discussion]{Vadhan2012}. For completeness, we provide a self-contained proof and verify the required large cover size.

\begin{proof}
Take the union of $d$ perfect matchings between
$\mathcal U$ and $\mathcal V$, chosen independently and uniformly at random, and color each edge by the index of
its matching. This gives a $d$-regular bipartite multigraph. We show that
each required property fails with probability $o(1)$ as $d\to\infty$.

We start with the cover number. Let $r\coloneqq\lfloor m\log d/(2d)\rfloor$ and fix an $r$-set
$V\subseteq\mathcal V$. Its images $S_1,\ldots,S_d$ under the
matchings are independent uniform $r$-subsets of $\mathcal U$.
Write $S\coloneqq\bigcup_a S_a$ and $q\coloneqq(1-r/m)^d =\Pr[u\notin S]$, the probability that a
given left vertex $u$ is uncovered.

We will see in the following, that the events that distinct left vertices are covered are negatively correlated, which will allow us to upper bound the probability that $\mathcal U$ is covered by $(1-q)^m$.
 More precisely, let $B\subseteq \mathcal U$ and let $u\in \mathcal U\setminus B$. 
We show that $\Pr[B\subseteq S \mid u\notin S]\ge \Pr[B\subseteq S]$, i.e., the probability of $B$ being covered by $S$, conditioned on $u$ remaining uncovered, is at least as large as the (unconditional) probability of $B$ being covered.
To see this, we observe that under the event that $u$ is uncovered, each $S_a$ is an $r$-subset of $\mathcal U\setminus\{u\}$ and by symmetry, each choice in $\binom{\mathcal U\setminus\{u\}}{r}$ has the same probability. We can recover this probability distribution by the following two-step process: First, select matchings as before, allowing $u$ as a left endpoint. Next, whenever $u$ is selected by a matching, replace it by a left vertex uncovered by that matching, selected uniformly at random. By symmetry, the sets of left endpoints are distributed uniformly in $\binom{\mathcal U\setminus\{u\}}{r}$. Moreover, whenever the original sets of endpoints cover $B$, so do the new sets of endpoints.
This shows that $\Pr[B\subseteq S \mid u\notin S]\ge \Pr[B\subseteq S]$. Using the identity $\Pr[B\subseteq S]=\Pr[u\in S] \cdot \Pr[B\subseteq S \mid u\in S]+(1-\Pr[u\in S])\cdot \Pr[B\subseteq S \mid u\notin S]$, we can infer that
$\Pr[B\subseteq S\mid u\in S]\le \Pr[B\subseteq S]$. In particular,
\[\Pr[B\cup \{u\}\subseteq S]=\Pr[B\subseteq S\mid u\in S]\cdot \Pr[u\in S]\le \Pr[B\subseteq S]\cdot \Pr[u\in S].\]
Let $\mathcal U = \{u_1,\dots,u_m\}$ and apply the previous inequality inductively with $u=u_i$ and $B=\{u_1,\dots,u_{i-1}\}$. This yields 
\begin{equation}\label{eq:quant-source-one-set}
 \Pr[S=\mathcal U]\le \prod_{i=1}^m \Pr[u_i\in S]=(1-q)^m\leq e^{-mq}.
\end{equation}
Since $r\leq m\log d/(2d)$, with $n\coloneqq2d/\log d$ we have
\[
 q=\Bigl(1-\frac rm\Bigr)^d
 \geq\Bigl(1-\frac1n\Bigr)^{n\log d/2}
 \geq\Bigl(1-\frac1n\Bigr)^{\log d/2}\frac1{\sqrt d}
 =\frac{1-o(1)}{\sqrt d},
\]
for sufficiently large $d$, where the second inequality uses $(1-1/n)^{n}\geq e^{-1}(1-1/n)$. Furthermore, by the standard bound we get 
$\log\binom mr\leq r\log(em/r)=O(m\log^2d/d)=o(m/\sqrt d)$. A union
bound over the $r$-sets $V$ together
with~\eqref{eq:quant-source-one-set} gives, for sufficiently large $d$,
\begin{equation}\label{eq:quant-source-cover-probability}
 \Pr[\text{some }r\text{-set covers }\mathcal U]
 \leq\binom mr e^{-mq}
 \leq\exp\!\left(-(1-o(1))\frac{m}{\sqrt d}\right)\leq\frac1d.
\end{equation}
If no $r$-set covers $\mathcal U$, no smaller set does either, so
$\CoverNum(\mathcal U)\geq r+1>m\log d/(2d)$.

For the expansion bound~\eqref{eq:base-expansion}, fix $1\leq k\leq2K$ and put $\ell\coloneqq(d-2)k<m$.
If a $k$-set $V$ has at most $\ell$ neighbors, they lie in some
$\ell$-set $U$. A uniform matching sends all of $V$ into $U$ with
probability
\begin{equation}\label{eq:quant-source-one-matching}
 \frac{\binom\ell k}{\binom mk}
 =\prod_{i=0}^{k-1}\frac{\ell-i}{m-i}
 \leq\left(\frac\ell m\right)^k,
\end{equation}
because $(\ell-i)/(m-i)\leq\ell/m$ when $\ell\leq m$. In particular, all $d$ matchings send $V$ into $U$ with probability at most $\left(\frac\ell m\right)^{dk}$.
A union bound over $V$ and $U$,
with~\eqref{eq:quant-source-one-matching} for each of the $d$
independent matchings, followed by $\binom ab\leq(ea/b)^b$,
$dk-\ell=2k$ and $k\le 2K$, $m=K^2$ gives
\begin{equation}\label{eq:quant-source-expansion-k}
 \begin{aligned}
 \Pr[\exists V:\ |V|=k,\ |\Nbr(V)|\leq(d-2)k]
 &\leq\binom mk\binom m\ell\left(\frac\ell m\right)^{dk}\\
 &\leq\left(\frac{em}{k}\right)^k
       \left(\frac{em}{\ell}\right)^\ell
       \left(\frac\ell m\right)^{dk}\\
 &=\left[\frac{em}{k}\,e^{d-2}
       \left(\frac\ell m\right)^2\right]^k
 =\left[\frac{e^{d-1}(d-2)^2k}{m}\right]^k\\
 &\le \left[\frac{2e^{d-1}(d-2)^2}{K}\right]^k
 \end{aligned}
\end{equation}
Let
$\theta\coloneqq2e^{d-1}(d-2)^2/K$, then $\theta=o(1)$ because $K=d^{d+3}$.
Summing~\eqref{eq:quant-source-expansion-k} over $k$ therefore bounds
the failure probability by
\[
 \sum_{k=1}^{2K}\left[\frac{e^{d-1}(d-2)^2k}{m}\right]^k
 \leq\sum_{k=1}^{2K}\theta^k\leq\frac\theta{1-\theta}=o(1).
\]
The empty set satisfies the expansion bound as well. Both properties
therefore hold simultaneously with probability $1-o(1)>0$.
\end{proof}

\subsubsection{The cyclic lift}\label{sec:quant-seed}

\begin{definition}[Cyclic lift]\label{def:cyclic-lift}
Call the graph $\BipGraph$ of \cref{lem:quant-source} the
\intro{base graph}. Its vertices and edges are denoted as base vertices and base
edges. We define the \intro{cyclic lift}, a \kl{bipartite graph} $\widetilde G$ with sides
$\widetilde{\mathcal U}\coloneqq\mathcal U\times\mathbb Z_d$ and 
$\widetilde{\mathcal V}\coloneqq\mathcal V\times\mathbb Z_d$, by lifting each base edge $\{u,v\}\in M_a$ to the $d$ edges
\[
 \{(u,t+a),(v,t)\},\qquad t\in\mathbb Z_d.
\]
Note that each base edge
$\{u,v\}$ lifts to a perfect matching between $\{u\}\times\mathbb Z_d$ and
$\{v\}\times\mathbb Z_d$.

We call the $dm$ elements $\tilde u=(u,t)$ of $\widetilde{\mathcal U}$
\intro[lifted client]{lifted clients} and the elements $\tilde v=(v,t)$ of
$\widetilde{\mathcal V}$ \intro[lifted right vertex]{lifted right vertices}. We point out that the \kl{lifted right vertices} are only used in order to define the facility cost function; the facilities themselves will correspond to the base right vertices. Furthermore, we define $t$ as their \intro{cyclic coordinate}. \cref{sec:quant-hiding}
replaces each \kl{lifted client} by $R$ copies. The \intro{fiber} over a base
vertex $u$ is $\{u\}\times\mathbb Z_d$, and for $U\subseteq\mathcal U$
the union of the fibers over $U$ is $U\times\mathbb Z_d$; we call such
sets unions of \intro[whole fibers]{whole fibers}. Likewise, the fiber
$\{v\}\times\mathbb Z_d$ over $v\in\mathcal V$ consists of the
\intro[lift]{lifts} of $v$.
\end{definition}
\begin{figure}[!t]
\centering
\begin{tikzpicture}[
  x=1cm,y=1cm,
  fac/.style={draw,thick,rectangle,rounded corners=1pt,fill=white,
    minimum size=5.5mm,inner sep=0pt,font=\small},
  lft/.style={draw,thick,circle,fill=white,minimum size=5.5mm,
    inner sep=0pt,font=\small},
  lv/.style={draw,circle,minimum size=4.4mm,inner sep=0pt,font=\scriptsize},
  rv/.style={draw,rectangle,rounded corners=1pt,minimum size=4.4mm,
    inner sep=0pt,font=\scriptsize},
  cp/.style={circle,draw=black!70,line width=0.3pt,minimum size=1.35mm,
    inner sep=0pt},
  fiber/.style={draw=black!45,densely dashed,rounded corners=4pt},
  be/.style={line width=1.2pt},
  e/.style={line width=0.5pt},
  ed/.style={line width=1.3pt},
  stub/.style={thick,black!35,densely dotted},
  blab/.style={font=\scriptsize,fill=white,inner sep=1pt},
  flab/.style={font=\scriptsize,text=black!70},
  tier/.style={font=\small,anchor=east,align=right},
  op/.style={-stealth,thick,black!55},
]
\colorlet{t0}{cyc0}\colorlet{tt0}{black}
\colorlet{t1}{cyc1}\colorlet{tt1}{black}
\colorlet{t2}{cyc2}\colorlet{tt2}{white}
\colorlet{a0}{definitioncolor}
\definecolor{a1}{HTML}{1F5FA8}
\definecolor{a2}{HTML}{E0A100}
\def\W{2.6}\def\s{0.74}\def\g{0.16}\def\gh{0.24}
\def\baseedges{1/1/0,1/2/1,1/3/2,2/2/0,2/3/1,2/4/2}
\def\yV{0}\def\yU{-1.5}
\def\yLV{-3.35}\def\yLU{-4.85}
\def\yCV{-6.7}\def\yCU{-8.3}
\foreach \j/\i/\a in \baseedges
  \draw[be,draw=a\a] ({\j*\W},\yV)--({(\i-1)*\W},\yU)
    node[blab,pos=.34,text=a\a] {$\a$};
\foreach \i/\dir in {1/-1,1/-0.5,2/1,3/-1,4/1,4/0.5}
  \draw[stub] ({(\i-1)*\W},\yU)--++({0.45*\dir},0.75);
\foreach \j in {1,2} \node[fac] at ({\j*\W},\yV) {$v_\j$};
\foreach \i in {1,...,4} \node[lft] at ({(\i-1)*\W},\yU) {$u_\i$};
\foreach \j/\i/\a in \baseedges
  \foreach \t in {0,1,2}{
    \pgfmathtruncatemacro\tp{mod(\t+\a,3)}
    \ifnum\t=0 \def\est{ed}\def\ecol{a\a}\else\def\est{e}\def\ecol{a\a!75}\fi
    \draw[\est,draw=\ecol] ({\j*\W+(\t-1)*\s},\yLV)--({(\i-1)*\W+(\tp-1)*\s},\yLU);
  }
\foreach \j in {1,2}{
  \draw[fiber] ({\j*\W-\s-0.38},\yLV-0.32) rectangle
    ({\j*\W+\s+0.38},\yLV+0.32);
  \node[flab,anchor=south] at ({\j*\W},\yLV+0.34)
    {$\{v_\j\}\times\mathbb Z_d$};
  \foreach \t in {0,1,2}
    \node[rv,fill=t\t,text=tt\t] at ({\j*\W+(\t-1)*\s},\yLV) {$\t$};
}
\foreach \i in {1,...,4}{
  \draw[fiber] ({(\i-1)*\W-\s-0.38},\yLU-0.32) rectangle
    ({(\i-1)*\W+\s+0.38},\yLU+0.32);
  \node[flab,anchor=north] at ({(\i-1)*\W},\yLU-0.34)
    {$\{u_\i\}\times\mathbb Z_d$};
  \foreach \t in {0,1,2}
    \node[lv,fill=t\t,text=tt\t] at ({(\i-1)*\W+(\t-1)*\s},\yLU) {$\t$};
}
\foreach \j/\i/\a in \baseedges
  \foreach \t in {0,1,2}{
    \pgfmathtruncatemacro\tp{mod(\t+\a,3)}
    \ifnum\t=0 \def\est{ed}\def\ecol{a\a}\else\def\est{e}\def\ecol{a\a!75}\fi
    \draw[\est,draw=\ecol] ({\j*\W+(\t-1)*\s},\yCV)--({(\i-1)*\W+(\tp-1)*\s},\yCU+\gh+0.1);
  }
\foreach \j in {1,2}
  \foreach \t in {0,1,2}
    \node[rv,fill=t\t,text=tt\t] at ({\j*\W+(\t-1)*\s},\yCV) {$\t$};
\foreach \i in {1,...,4}{
  \draw[fiber] ({(\i-1)*\W-\s-0.36},\yCU-\gh-0.13) rectangle
    ({(\i-1)*\W+\s+0.36},\yCU+\gh+0.13);
  \node[flab,anchor=north] at ({(\i-1)*\W},\yCU-\gh-0.15)
    {public block $\AbstractBlock(u_\i)$};
  \foreach \t in {0,1,2}{
    \pgfmathsetmacro\cx{(\i-1)*\W+(\t-1)*\s}
    \draw[black!55,line width=0.4pt,rounded corners=1.5pt,fill=black!4]
      ({\cx-\gh-0.1},\yCU-\gh-0.1) rectangle ({\cx+\gh+0.1},\yCU+\gh+0.1);
    \foreach \p in {0,...,3}
      \foreach \q in {0,...,3}
        \node[cp,fill=t\t] at ({\cx+(\p-1.5)*\g},{\yCU+(\q-1.5)*\g}) {};
  }
}
\def\lx{-1.4}
\node[tier] at (\lx,{(\yV+\yU)/2})
  {(a) base graph\\$\BipGraph$ on $\mathcal U\cup\mathcal V$};
\node[tier] at (\lx,{(\yLV+\yLU)/2})
  {(b) cyclic lift\\$\widetilde G$ on $\widetilde{\mathcal U}\cup\widetilde{\mathcal V}$};
\node[tier] at (\lx,{(\yCV+\yCU)/2})
  {(c) copy graph\\on $\mathcal C\cup\widetilde{\mathcal V}$};
\def\ax{-2.3}\def\px{3*\W+1.3}
\foreach \ya/\yb/\down/\up in {
    \yU-0.4/\yLV+0.75/{lift $\times\mathbb Z_d$}/{$(x,t)\mapsto\AbstractOrbitIndex(x,t)$},
    \yLU-0.75/\yCV+0.4/{copy $\times R$}/{$c\mapsto\AbstractLabelMap(c)$}}{
  \draw[op] (\ax,\ya)--(\ax,\yb)
    node[midway,left,font=\scriptsize,text=black!75] {\down};
  \draw[op] ({\px},\yb)--({\px},\ya)
    node[midway,right,font=\scriptsize,text=black!75] {\up};
}
\end{tikzpicture}
\caption{The \kl{base graph}, its \kl{cyclic lift}, and the copy graph on
a fragment, drawn with $d=3$ and $R=16$. Squares are right vertices and
circles are left vertices. The columns are aligned: everything over a
base vertex $u_i$ is drawn directly below it.
(a)~Each base edge is labeled and colored by the index $a$ of its
matching $M_a$; dotted stubs lead to the rest of $\BipGraph$.
(b)~Each base vertex becomes its \kl{fiber} (dashed box), whose vertices
are labeled and shaded by their \kl{cyclic coordinate} $t$. A base edge
$\{u,v\}\in M_a$ lifts to the perfect matching $\{(u,t+a),(v,t)\}$,
$t\in\mathbb Z_d$, between the two fibers, drawn in the color of $\{u,v\}$:
red edges keep $t$, and blue and yellow edges shift it by one and two,
respectively.
The bold edges from $(v_1,0)$
and $(v_2,0)$ join them to the \kl[first copy neighborhood]{first copy neighborhoods}
$\DiagSet_{v_1}$ and $\DiagSet_{v_2}$, each of which takes one
\kl{lifted client} from each neighboring fiber.
(c)~Each \kl{lifted client} $(u,t)$ becomes its \kl[abstract copy class]{copy class}
$\AbstractLabelMap^{-1}(u,t)$ of $R$ clients (solid box), and each \kl[abstract copy]{copy} is joined
to every neighbor of its \kl[abstract label]{label} in $\widetilde G$; an edge drawn to a
solid box stands for the $R$ edges to its copies. The fiber over $u$
becomes the \kl[abstract public block]{public block} $\AbstractBlock(u)$ of $dR=48$ clients located at
$u$. The algorithm sees the dashed blocks; the solid boxes and the
shading show the \kl[abstract hidden partition]{hidden partition}. The maps on the right
send each graph to the one above it and map edges to edges.}
\label{fig:three-graphs}
\end{figure}
\cref{fig:three-graphs} shows the base graph, its cyclic lift, and the
replicated copy graph of \cref{sec:quant-hiding}.
The cyclic lift is a well-known object and is the derived graph of a voltage graph in the sense
of Gross~\cite{Gross1974}. 
Lifts of graphs have been studied extensively,
see~\cite{AmitLinial2002} for random lifts and~\cite{AgarwalEtAl2019} for cyclic lifts and their expansion.
The neighborhood map of $\widetilde G$ on right sets is
$\intro*\LiftNbr\colon2^{\widetilde{\mathcal V}}\to2^{\widetilde{\mathcal U}}$,
$\tilde V\mapsto\LiftNbr(\tilde V)$; in particular,
\[
 \LiftNbr(\{(v,t)\})=\{(u,t+a):a\in\mathbb Z_d,\ \{u,v\}\in M_a\}.
\]
The \intro[first copy neighborhood]{first copy neighborhoods} 
$\intro*\DiagSet_v\coloneqq\LiftNbr(\{(v,0)\})$,
 $v\in\mathcal V$, partition
$\widetilde{\mathcal U}$: the \kl{lifted client} $(u,a)$ lies in
$\DiagSet_v$ exactly when $v$ is matched to $u$ in $M_a$.
For $a\in\mathbb Z_d$, the \intro{shift} by $a$ is the bijection
$\gamma_a\colon(x,t)\mapsto(x,t+a)$ of $\widetilde{\mathcal U}$ and of
$\widetilde{\mathcal V}$. We write $\tilde U+a\coloneqq\gamma_a(\tilde U)$
and $\tilde V+a\coloneqq\gamma_a(\tilde V)$ for
$\tilde U\subseteq\widetilde{\mathcal U}$ and
$\tilde V\subseteq\widetilde{\mathcal V}$. The shifts form a
\kl{permutation group} of graph automorphisms of $\widetilde G$. Indeed, every shift maps edges of $\widetilde G$ to edges,
since it sends $\{(u,t+a'),(v,t)\}$ to $\{(u,t+a+a'),(v,t+a)\}$. Hence
$\LiftNbr(\tilde V+a)=\LiftNbr(\tilde V)+a$.
The orbits of the shifts on $\widetilde{\mathcal U}$ are the fibers,
so the subsets of $\widetilde{\mathcal U}$ that are \kl{invariant} under
the shifts are exactly the unions of \kl{whole fibers}.

\subsubsection{The sparse rank function}\label{sec:quant-cap}

In this section, we discuss how to construct the facility cost function. Before giving a completely formal definition, we provide some high-level intuition.
The facility cost function is designed so that the neighborhood of every
\kl{lifted right vertex} is cheap to serve, while a union of \kl{whole fibers} is
not. A \kl{lifted client} costs $1$ on its own. A lifted right vertex
$\tilde v$ can act as a witness that covers all $d$ clients in its
neighborhood at a total cost of only $3$; this is the discount of $d-3$ per
witness in~\eqref{eq:quant-service-rank} below. In particular, the
\kl[first copy neighborhood]{first copy neighborhoods} $\DiagSet_v=\LiftNbr(\{(v,0)\})$ have cost at most
$3$. If the neighborhoods of the witnesses overlap heavily, the
witnesses receive their discount without covering new clients. Local
expansion rules this out
for small witness sets: they have at least $d-2$ neighbors per
witness, more than the discount. We therefore allow witness sets of size at most $2K$ and cap the function value at $K$. The cap forces a minimizing witness
set of value below $K$ to have fewer than $K$ elements, which keeps
the function submodular (\cref{lem:quant-cap}).

\begin{definition}[Sparse rank function]\label{def:sparse-rank}
A \kl{witness} $\tilde V\subseteq\widetilde{\mathcal V}$ is
\intro{admissible} if $|\tilde V|\leq2K$. We say that $\widetilde G$ has
\intro[local expansion]{local expansion up to $2K$} if it satisfies
the analogue of~\eqref{eq:base-expansion}:
\begin{equation}\label{eq:quant-cost-vs-size}
 |\LiftNbr(\tilde V)|\geq(d-2)|\tilde V|
 \qquad\text{for every \kl{admissible} }\tilde V.
\end{equation}
The \intro{sparse rank function}
$\intro*\SparseCost\colon2^{\widetilde{\mathcal U}}\to\mathbb Z_{\geq0}$ with
\intro{cap} $K$ and its witness cost
$\intro*\SparseObj\colon2^{\widetilde{\mathcal U}}\times2^{\widetilde{\mathcal V}}\to\mathbb Z$
are
\begin{equation}\label{eq:quant-service-rank}
 \begin{aligned}
 \SparseCost(\tilde U)
 &\coloneqq\min\Bigl\{K,\ \min_{\tilde V\text{ \kl{admissible}}}\SparseObj(\tilde U,\tilde V)\Bigr\},\\
 \SparseObj(\tilde U,\tilde V)
 &\coloneqq|\tilde U\cup\LiftNbr(\tilde V)|-(d-3)|\tilde V|
 \end{aligned}
\end{equation}
\end{definition}

The next lemma shows that the cyclic lift inherits local expansion from
the base graph, and that the rank of a union of whole fibers is at
least $d$ times the cover number of the corresponding base vertices, up
to the cap.

\begin{lemma}[Expansion and the rank of unions of fibers]
\label{lem:lift-expansion}\label{lem:quant-fiber-cover}
The \kl{cyclic lift} has \kl{local expansion} up to $2K$, and for every
$U\subseteq\mathcal U$,
\begin{equation}\label{eq:quant-fiber-cover}
 \SparseCost(U\times\mathbb Z_d)
 \geq\min\{K,\ d\,\CoverNum(U)\}.
\end{equation}
\end{lemma}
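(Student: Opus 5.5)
The plan is to prove the two claims separately. Both rest on the same bookkeeping identity. Write $H(\tilde V)$ for the set of lifted edges incident to $\tilde V$; every lifted right vertex has degree $d$, so $|H(\tilde V)|=d|\tilde V|$. Hence $|\LiftNbr(\tilde V)|\geq(d-2)|\tilde V|$ holds if and only if the bipartite subgraph formed by $H(\tilde V)$ has at most $2|\tilde V|$ more edges than it has left vertices. In other words, its "excess" over a forest is small. This is the quantity I will transfer between the base graph and the lift.

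\emph{Local expansion.} Fix an admissible $\tilde V$ and let $V\subseteq\mathcal V$ be its projection, so $|V|\leq|\tilde V|\leq2K$. I will argue componentwise in the subgraph $H(\tilde V)$. The projection $(x,t)\mapsto x$ maps each connected component $\tilde H$ of $H(\tilde V)$ onto a connected subgraph of $\BipGraph$ whose right vertices have full degree $d$. At every lifted vertex this map is injective on incident edges, since each base edge at $u$ contributes exactly one lifted edge at each $(u,s)$. So it is a local injection, and it is a full covering at the right vertices.

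The claim I will aim for is
\[
 |E(\tilde H)|-|V(\tilde H)|\leq|E(H')|-|V(H')|\quad\text{whenever the projection of }\tilde H\text{ is } H',
\]
where $H'$ is the base subgraph spanned by the right vertices in the projection together with all their edges. This says that lifting (locally injectively) cannot increase cycle rank beyond that of the image once right degrees are full. Summing over components and applying~\eqref{eq:base-expansion} to the projected right set of each component, whose size is at most $2K$, then yields~\eqref{eq:quant-cost-vs-size}.

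I expect this step to be the main obstacle. The difficulty comes from several lifts of the same base vertex lying in one component. My first attempt will be a greedy/peeling argument: order the vertices of $\tilde V$, add their stars one at a time, and charge each neighbor collision in the lift to a distinct collision in the base projection. If that charging fails, I will instead appeal to the collision bound
\[
 |\LiftNbr(\tilde V)|\geq d|\tilde V|-\sum_{(u,s)}\bigl(\deg_{H(\tilde V)}(u,s)-1\bigr)^+
\]
and bound the degree of each $(u,s)$ by the number of base edges from $u$ into $V$.

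\emph{Rank of unions of fibers.} Let $\tilde U=U\times\mathbb Z_d$. If $\SparseCost(\tilde U)=K$ there is nothing to show, so take an admissible minimizing witness $\tilde V$ with $\SparseObj(\tilde U,\tilde V)<K$. The map $\tilde V\mapsto\SparseObj(\tilde U,\tilde V)$ is a coverage function minus a modular function. It is therefore submodular, and it is invariant under the shifts because $\tilde U$ is invariant and $\LiftNbr(\tilde V+a)=\LiftNbr(\tilde V)+a$.

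Following \itemref{lem:shared-symmetry}{item:sym-orbit}, I want to replace $\tilde V$ by a union of whole fibers $V\times\mathbb Z_d$. Taking unions and intersections of shifted minimizers preserves minimality only if admissibility is preserved. I will handle this using the cap. By local expansion, $\SparseObj(\tilde U,\tilde V)\geq|\LiftNbr(\tilde V)|-(d-3)|\tilde V|\geq|\tilde V|$, so every minimizer of value below $K$ has fewer than $K$ elements. Consequently the union of two such minimizers is still admissible, and the lattice argument runs inside the admissible sets. After this reduction the minimizer is $V\times\mathbb Z_d$, and its value is
\[
 d\,|U\cup\Nbr(V)|-(d-3)d|V|\;\geq\; d\,|U\setminus\Nbr(V)|+d(d-2)|V|-d(d-3)|V| \;=\; d\bigl(|U\setminus\Nbr(V)|+|V|\bigr),
\]
using~\eqref{eq:base-expansion} for $V$ (note $|V|\leq|\tilde V|/d$). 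Adding one neighbor of each $u\in U\setminus\Nbr(V)$ to $V$ gives a cover of $U$, so $\CoverNum(U)\leq|V|+|U\setminus\Nbr(V)|$, and this yields~\eqref{eq:quant-fiber-cover}.
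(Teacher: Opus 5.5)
\textbf{Gap in the local-expansion half.} The componentwise inequality $|E(\tilde H)|-|V(\tilde H)|\le|E(H')|-|V(H')|$ is false, not merely hard to prove. A map that is locally injective, and a full covering at right vertices, can raise cycle rank: a connected $d$-sheeted cover multiplies $|E|-|V|$ by $d$.

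\emph{Counterexample to the main claim.} Let $v_1,v_2$ share three left neighbors, which is compatible with \eqref{eq:base-expansion}, and take $\tilde V=\{v_1,v_2\}\times\mathbb Z_d$. Then $|E(H')|-|V(H')|=2d-(2d-1)=1$. Now suppose the matching indices on the $4$-cycle $u_1v_1u_2v_2$ give it net shift $\pm1$, which is possible. Then this cycle lifts to a single $4d$-cycle, $\tilde H$ is the whole lift, and $|E(\tilde H)|-|V(\tilde H)|=2d^2-(2d^2-d)=d$. So the case you flag, several lifts of one base vertex in one component, is exactly where the inequality breaks.

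\emph{The fallback also fails.} Bounding $\deg(u,s)$ by the number of base edges from $u$ into $V$ charges every hit lift of $u$ as if all lifts of all neighbors of $u$ were present. Let $v_1,v_2$ share four neighbors $u_1,\dots,u_4$, so $|\Nbr(\{v_1,v_2\})|=2(d-2)$, which is still allowed. Take $\tilde V=(\{v_1\}\times\mathbb Z_d)\cup\{(v_2,0)\}$. All $4d$ lifts over $u_1,\dots,u_4$ are hit and each is charged $1$. You then only get $|\LiftNbr(\tilde V)|\ge d(d+1)-4d=d^2-3d<(d-2)(d+1)$, although only four lifted clients actually have degree $2$.

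\textbf{The missing idea} is to count each fiber by a maximum rather than through collisions. Every base edge lifts to a perfect matching between fibers. Hence the fiber over $u$ contains at least $\max_{v\in\Nbr(\{u\})}k_v$ neighbors of $\tilde V$, where $k_v$ is the number of lifts of $v$ in $\tilde V$. With the level sets $V_j=\{v:k_v\ge j\}$, this maximum equals $|\{j:u\in\Nbr(V_j)\}|$. Applying \eqref{eq:base-expansion} to each $V_j$ gives $|\LiftNbr(\tilde V)|\ge\sum_{j=1}^d|\Nbr(V_j)|\ge(d-2)\sum_j|V_j|=(d-2)|\tilde V|$.

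\textbf{Your second half is correct} and takes a genuinely different route from the paper. You symmetrize a minimizing witness, use the cap and local expansion to keep unions of shifted minimizers admissible, and evaluate at $V\times\mathbb Z_d$. However, it rests on the local expansion you have not established. The paper instead runs the same level-set count with $U\times\mathbb Z_d$ included. It obtains $\SparseObj(U\times\mathbb Z_d,\tilde V)\ge\sum_j\bigl(|U\setminus\Nbr(V_j)|+|V_j|\bigr)\ge d\,\CoverNum(U)$ for every admissible witness directly. This proves both claims at once, with local expansion as the case $U=\varnothing$, and needs no symmetrization.
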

\begin{proof}
Let $\tilde V$ be admissible, let $k_v\coloneqq|\tilde V\cap(\{v\}\times\mathbb Z_d)|$
count its \kl[lift]{lifts} of $v$, and let $V_j\coloneqq\{v:k_v\geq j\}$,
$j=1,\ldots,d$, be the level sets.
Then $\sum_j|V_j|=\sum_vk_v=|\tilde V|$.
A base edge $\{u,v\}$ lifts to a perfect matching between the fibers over
$u$ and $v$, so the $k_v$ selected \kl[lift]{lifts} of $v$ have $k_v$ distinct
neighbors in the fiber over $u$. Neighbors coming from different $v$
may coincide, so we count only the largest contribution. Hence
\begin{equation}\label{eq:lift-level-count}
 \begin{aligned}
 |(U\times\mathbb Z_d)\cup\LiftNbr(\tilde V)|
 &\geq d|U|+\sum_{u\notin U}\max_{v\in\Nbr(\{u\})}k_v\\
 &=\sum_{j=1}^d|U\cup\Nbr(V_j)|.
 \end{aligned}
\end{equation}
The equality holds because, for $u\notin U$, the maximum
$\max_{v\in\Nbr(\{u\})}k_v$ is the number of indices $j$ (or, equivalently, the maximum index $j$) with
$u\in\Nbr(V_j)$.
The expansion property in the base graph~\eqref{eq:base-expansion} applies to each $V_j$ because
$|V_j|\leq|\tilde V|\leq2K$. Hence
\begin{align*}
 \SparseObj(U\times\mathbb Z_d,\tilde V)
 &=|(U\times\mathbb Z_d)\cup\LiftNbr(\tilde V)|-(d-3)|\tilde V|
   \tag*{(by~\eqref{eq:quant-service-rank})}\\
 &\geq\sum_{j=1}^d
   \bigl(|U\cup\Nbr(V_j)|-(d-3)|V_j|\bigr)
   \tag*{(by~\eqref{eq:lift-level-count} and $\textstyle\sum_j|V_j|=|\tilde V|$)}\\
 &=\sum_{j=1}^d
   \bigl(|U\setminus \Nbr(V_j)| + |\Nbr(V_j)|-(d-2)|V_j| + |V_j|\bigr)
   \tag*{(splitting $U\cup\Nbr(V_j)$)}\\
 &\geq\sum_{j=1}^d\bigl(|U\setminus\Nbr(V_j)|+|V_j|\bigr)
   \tag*{(by~\eqref{eq:base-expansion})}\\
 &\geq d\,\CoverNum(U).
   \tag*{(by covering $U$ from $V_j$)}
\end{align*}
In the last step, adding to $V_j$ one neighbor of each vertex of
$U\setminus\Nbr(V_j)$ gives a cover of $U$, so each summand is at least
$\CoverNum(U)$. Minimizing over admissible witnesses and applying the
cap proves the lemma. For $U=\varnothing$, the second-to-last line
gives $|\LiftNbr(\tilde V)|-(d-3)|\tilde V|\geq|\tilde V|$, which is
local expansion~\eqref{eq:quant-cost-vs-size}.
\end{proof}

We now show that $\SparseCost$ is a matroid rank function that is small on the
neighborhood of every lifted right vertex and \kl{invariant} under shifts.

\begin{lemma}[Properties of the sparse rank function]\label{lem:quant-cap}
The function $\SparseCost$ is a
\kl[rank function]{matroid rank function} with
$\SparseCost(\LiftNbr(\{\tilde v\}))\leq3$ for every
$\tilde v\in\widetilde{\mathcal V}$, and it is
\kl{invariant} under shifts, i.e., 
$\SparseCost(\tilde U+a)=\SparseCost(\tilde U)$ for all $\tilde U\subseteq\widetilde{\mathcal U}$ and $a\in\mathbb Z_d$.
\end{lemma}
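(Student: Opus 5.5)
Three of the four claims are short, and I would dispose of them first. \emph{Invariance:} a shift $\gamma_a$ maps admissible witnesses bijectively to admissible witnesses and satisfies $\LiftNbr(\tilde V+a)=\LiftNbr(\tilde V)+a$. Hence $\SparseObj(\tilde U+a,\tilde V+a)=\SparseObj(\tilde U,\tilde V)$, and taking minima over $\tilde V$ (and then the cap) gives $\SparseCost(\tilde U+a)=\SparseCost(\tilde U)$. \emph{The bound $3$:} take the witness $\{\tilde v\}$, which is admissible. Its neighborhood has at most $d$ elements, so $\SparseObj(\LiftNbr(\{\tilde v\}),\{\tilde v\})\le d-(d-3)=3$. \emph{Easy rank properties:} $\SparseCost$ is integer-valued. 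The witness $\varnothing$ gives $\SparseCost(\tilde U)\le|\tilde U|$. Local expansion~\eqref{eq:quant-cost-vs-size} (\cref{lem:lift-expansion}) gives $\SparseObj(\tilde U,\tilde V)\ge|\LiftNbr(\tilde V)|-(d-3)|\tilde V|\ge|\tilde V|\ge0$, so $\SparseCost\ge0$ and $\SparseCost(\varnothing)=0$. Monotonicity holds because $\SparseObj$ is monotone in $\tilde U$ for each fixed $\tilde V$.

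The real work is submodularity, which I would not derive from \cref{lem:shared-minimization} directly, because the admissibility constraint $|\tilde V|\le 2K$ is not a lattice condition. The approach is to remove that constraint using the cap. Define the uncapped, unconstrained function $h(\tilde U)\coloneqq\min_{\tilde V\subseteq\widetilde{\mathcal V}}\SparseObj(\tilde U,\tilde V)$. The map $(\tilde U,\tilde V)\mapsto|\tilde U\cup\LiftNbr(\tilde V)|$ is a coverage function on $\widetilde{\mathcal U}\sqcup\widetilde{\mathcal V}$, and subtracting the modular term $(d-3)|\tilde V|$ keeps it submodular. So \itemref{lem:shared-minimization}{item:min-partial} shows that $h$ is submodular. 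The plan is then to show $\SparseCost=\min\{K,h\}$ and that $\min\{K,h\}$ is submodular.

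For the identity $\SparseCost=\min\{K,h\}$, one direction is trivial, since $\SparseCost\ge\min\{K,h\}$. For the other, suppose some witness $\tilde V$ with $\SparseObj(\tilde U,\tilde V)<K$ exists. Among such witnesses, choose one of minimum size. If $|\tilde V|\le 2K$, then local expansion gives $|\tilde V|\le\SparseObj(\tilde U,\tilde V)<K$, so $\tilde V$ is admissible and $\SparseCost(\tilde U)\le\SparseObj(\tilde U,\tilde V)$. The main obstacle is excluding large $\tilde V$, since expansion only holds up to size $2K$. I would handle this by removing elements one at a time. Removing a single vertex changes $\SparseObj$ by at most $d$ in absolute value: it loses at most $d$ covered clients and regains the discount $d-3$. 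Starting from a large witness of value $<K$, I would shrink it toward size around $K$ and apply expansion there. A set of size $K<s\le 2K$ has value at least $s>K$. The intended contradiction is that the values along the chain must pass through this region. This does not yet fully work, because the value can grow while shrinking. So I would instead use the submodular minimization structure: the minimizers of $\SparseObj(\tilde U,\cdot)$ are closed under intersection, and I would argue via a minimal minimizer and supermodularity of the value in $\tilde V$ restricted to subsets. If that fails, a fallback is to prove submodularity of $\SparseCost$ directly by the witness exchange in \itemref{lem:shared-minimization}{item:min-partial}. Take admissible minimizing witnesses $A,B$ for $\tilde U_1,\tilde U_2$, both of value $<K$ (otherwise the cap handles the inequality). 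Then $|A|,|B|<K$ by expansion, so $A\cup B$ and $A\cap B$ have size $<2K$ and are admissible. The exchange inequality then yields
\[
\SparseCost(\tilde U_1\cup\tilde U_2)+\SparseCost(\tilde U_1\cap\tilde U_2)\le\SparseCost(\tilde U_1)+\SparseCost(\tilde U_2).
\]
This fallback is the route I expect to actually go through.

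The remaining cases of the fallback are where the cap matters. If, say, $\SparseCost(\tilde U_1)=K$, then monotonicity gives $\SparseCost(\tilde U_1\cup\tilde U_2)\le K=\SparseCost(\tilde U_1)$ and $\SparseCost(\tilde U_1\cap\tilde U_2)\le\SparseCost(\tilde U_2)$, so the inequality holds. With nonnegativity, integrality, normalization, monotonicity, submodularity and $\SparseCost(\tilde U)\le|\tilde U|$ established, the matroid characterization recalled in \cref{sec:shared-tools} shows that $\SparseCost$ is a matroid rank function.
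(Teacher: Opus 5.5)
Your fallback argument is the paper's proof. The cap disposes of the case $\SparseCost(\tilde U_i)=K$; otherwise local expansion forces the minimizing witnesses to have fewer than $K$ elements, so their union and intersection stay admissible and the exchange argument of \itemref{lem:shared-minimization}{item:min-partial} goes through. The bound $3$, shift invariance and the remaining rank properties are handled just as you do. You should delete the first route rather than try to repair it, because the identity $\SparseCost=\min\{K,h\}$ is false: $h(\tilde U)\le\SparseObj(\tilde U,\widetilde{\mathcal V})=dm-(d-3)dm<0$ for $d>4$, whereas $\SparseCost\geq0$.
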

\begin{proof}
The function is integer-valued and monotone. The empty witness gives
$0\leq\SparseCost(\tilde U)\leq|\tilde U|$. For every admissible
$\tilde V$, local expansion~\eqref{eq:quant-cost-vs-size}
(\cref{lem:lift-expansion}) gives
\begin{equation}\label{eq:quant-witness-size}
 \SparseObj(\tilde U,\tilde V)
 =|\tilde U\cup\LiftNbr(\tilde V)|-(d-3)|\tilde V|
 \geq|\LiftNbr(\tilde V)|-(d-3)|\tilde V|\geq|\tilde V|.
\end{equation}

Viewed on the combined \kl{ground set}
$\widetilde{\mathcal U}\sqcup\widetilde{\mathcal V}$, $\SparseObj$ is
submodular, since by~\eqref{eq:quant-service-rank} it is a coverage
function to which a modular function is added.
To show submodularity of the function $\SparseCost$, we would like to apply \itemref{lem:shared-minimization}{item:min-partial}. However, we need to be careful because we are restricting the minimum to admissible sets and the union of two admissible sets is not always admissible.
Let
$\tilde U_1,\tilde U_2\subseteq\widetilde{\mathcal U}$. If
$\SparseCost(\tilde U_1)=K$, then also $\SparseCost(\tilde U_1 \cup \tilde U_2)=K$  by monotonicity and the \kl{cap}. Hence, monotonicity yields
\[
 \SparseCost(\tilde U_1\cup \tilde U_2)
 +\SparseCost(\tilde U_1\cap \tilde U_2)
 \leq K+\SparseCost(\tilde U_2)
 =\SparseCost(\tilde U_1)+\SparseCost(\tilde U_2),
\]
and the case $\SparseCost(\tilde U_2)=K$ is symmetric. Otherwise
$\SparseCost(\tilde U_i)<K$, so $\SparseCost(\tilde U_i)$ is attained by
an admissible \kl{minimizing witness} $\tilde V_i$, and
by~\eqref{eq:quant-witness-size}
\[
 |\tilde V_i|\leq\SparseObj(\tilde U_i,\tilde V_i)=\SparseCost(\tilde U_i)<K.
\]
The union and intersection of $\tilde V_1$ and $\tilde V_2$ are therefore admissible, so the proof of
\itemref{lem:shared-minimization}{item:min-partial} applies verbatim: submodularity
of $\SparseObj$ at $(\tilde U_1,\tilde V_1),(\tilde U_2,\tilde V_2)$
gives submodularity of $\SparseCost$ at $\tilde U_1,\tilde U_2$.
The rank criterion of \cref{sec:statement} makes $\SparseCost$
a matroid rank function. 

To bound $\SparseCost(\LiftNbr(\{\tilde v\}))$, consider the witness $\{\tilde v\}$, which gives a cost of
$|\LiftNbr(\{\tilde v\})|-(d-3)\leq3$. 

Finally, shifting by $a$ preserves
admissibility, and $\SparseObj(\tilde U+a,\tilde V+a)=\SparseObj(\tilde U,\tilde V)$,
so both minima defining $\SparseCost$ are shift \kl{invariant}.
\end{proof}

\subsubsection{The lifted instance and its symmetry gap}
\label{sec:metric-definition}

\begin{definition}[Lifted instance]\label{def:lifted-instance}
Define the metric $\dist\colon(\mathcal U\sqcup\mathcal V)^2\to[0,1]$ on the
base vertices by
\[
 \dist(x,y)\coloneqq\min\{1,\ h(x,y)/d\},
\]
where $h(x,y)$ is the number of edges of a shortest path from $x$ to $y$ in
$\BipGraph$, and $h(x,y)=+\infty$ if there is none.
 Truncation preserves the \intro{triangle inequality}, so $\dist$ is a metric.
We remark that the main reason we truncate is to guarantee finite distances.
Placing each \kl{lifted client} $(u,t)$ at the point $u$ gives the
\intro{lifted instance} $(\widetilde{\mathcal U},\mathcal V,\dist,\SparseCost)$.
Its facilities are the base right vertices $\mathcal V$; the
\kl[lifted right vertex]{lifted right vertices} appear only as witnesses in the definition of
$\SparseCost$. \cref{sec:quant-hiding} replicates its clients.

The shift orbits are precisely the \kl[fiber]{fibers}, so
\kl{whole-orbit} assignments keep each fiber together. The \intro{first copy assignment} sends $\DiagSet_v$ to $v$ for
every $v\in\mathcal V$; it is an assignment because the
\kl[first copy neighborhood]{first copy neighborhoods} partition $\widetilde{\mathcal U}$.
\end{definition}
The first copy assignment splits the fiber over $u$ among the neighbors
of $u$: it sends $(u,a)$ to the vertex matched to $u$ in $M_a$. The
next lemma shows that the first copy assignment is cheap and that every
\kl{whole-orbit} assignment is expensive.

\begin{lemma}[First copy and whole-orbit assignments]\label{lem:symmetry-gap}
\leavevmode
\begin{enumerate}
\item\label{item:diagonal-cost} The \kl{first copy assignment} has cost
at most $4m$.
\item\label{item:whole-fiber-cost} Let $(U_v)_{v\in \mathcal V}$ be subsets of $\mathcal U$
satisfying $\dist(u,v)<1$ for every $u\in U_v$, and put
$U'\coloneqq\bigcup_vU_v$. Then
\begin{equation}\label{eq:whole-fiber-cover}
 \sum_{v\in\mathcal V}\SparseCost(U_v\times\mathbb Z_d)
 +d\,|\mathcal U\setminus U'|
 \geq d\,\CoverNum(\mathcal U).
\end{equation}
In particular, every \kl{whole-orbit} assignment has cost at least
$d\,\CoverNum(\mathcal U)$.
\end{enumerate}
\end{lemma}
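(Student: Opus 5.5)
For part~\ref{item:diagonal-cost}, I would bound the two cost components of the \kl{first copy assignment} separately. Each cluster is $\DiagSet_v=\LiftNbr(\{(v,0)\})$, so \cref{lem:quant-cap} bounds its facility cost by $3$, for a total of at most $3m$ over the $m$ facilities. Every lifted client $(u,a)\in\DiagSet_v$ lies at $u$, and $u$ is adjacent to $v$ in $\BipGraph$ through $M_a$. Hence $\dist(u,v)=1/d$, and the connection cost is $|\widetilde{\mathcal U}|/d=dm/d=m$. Together these give at most $4m$.

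For part~\ref{item:whole-fiber-cost}, I would first show the inequality and then derive the consequence for whole-orbit assignments. The first key step is to bound each facility term, $\SparseCost(U_v\times\mathbb Z_d)\ge d\,\CoverNum(U_v)$, via \cref{lem:quant-fiber-cover}. That lemma only gives $\min\{K,d\,\CoverNum(U_v)\}$, so I must rule out the cap. Since $\dist(u,v)<1$ means $h(u,v)<d$, every $u\in U_v$ lies within $d-1$ hops of $v$. By $d$-regularity, the number of such vertices is at most $\sum_{j<d}d^j\le d^d$. Using the trivial cover by one neighbor per vertex, $d\,\CoverNum(U_v)\le d|U_v|\le d^{d+1}<K=d^{d+3}$, so the minimum equals $d\,\CoverNum(U_v)$.

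The second step is a covering argument. I would take a minimum cover $V_v$ of each $U_v$ and add one neighbor of each vertex of $\mathcal U\setminus U'$; together these cover $\mathcal U$. Subadditivity of covers then gives $\CoverNum(\mathcal U)\le\sum_v\CoverNum(U_v)+|\mathcal U\setminus U'|$, and multiplying by $d$ yields~\eqref{eq:whole-fiber-cover}.

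For the consequence, take a whole-orbit assignment. It sends each fiber over $u$ to a single facility $\sigma(u)$. Set $U_v\coloneqq\{u:\sigma(u)=v,\ \dist(u,v)<1\}$, so the cluster of $v$ contains $U_v\times\mathbb Z_d$. By monotonicity, the facility cost of $v$ is at least $\SparseCost(U_v\times\mathbb Z_d)$. Each $u\notin U'$ has its $d$ lifted clients at distance $1$ from $\sigma(u)$, so it pays connection cost $d$. Summing over facilities and applying~\eqref{eq:whole-fiber-cover} gives total cost at least $d\,\CoverNum(\mathcal U)$.

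The only delicate point is the cap check: the ball of radius $d-1$ must have size well below $K/d$. This holds with room to spare, since $d^{d+1}$ is much smaller than $d^{d+3}$.
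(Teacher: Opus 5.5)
Your proposal is correct and follows essentially the same route as the paper. Part (i) bounds the first copy assignment by $3m+m$ via \cref{lem:quant-cap} and the distance $1/d$ to neighbors. Part (ii) rules out the cap in \cref{lem:quant-fiber-cover} using the radius-$(d-1)$ ball of size at most $d^d$, and then finishes with the same covering, monotonicity, and connection-cost argument.
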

\begin{proof}
For \cref{item:diagonal-cost}, note that each cluster $\DiagSet_v=\LiftNbr(\{(v,0)\})$ has rank at most three by
\cref{lem:quant-cap}, and each of its clients $(u,a)$ lies over a
neighbor $u$ of $v$, at distance $1/d$. The clusters partition the $dm$
lifted clients, so the first copy assignment costs
\[
 \sum_{v\in\mathcal V}\Bigl(\SparseCost(\DiagSet_v)
   +\sum_{(u,a)\in\DiagSet_v}\dist(u,v)\Bigr)
 \leq3m+\frac{dm}{d}=4m.
\]

For \cref{item:whole-fiber-cost}, fix $v$. Every $u\in U_v$ is joined to $v$ by a path of at most $d-1$
edges, so $U_v$ lies in the radius-$(d-1)$ ball around $v$, which has at
most $\sum_{j=0}^{d-1}d^j\leq d^d$ vertices. Consequently
$d\,\CoverNum(U_v)\leq d|U_v|\leq d^{d+1}\leq K$,
and~\eqref{eq:quant-fiber-cover} gives
$\SparseCost(U_v\times\mathbb Z_d) \geq \min\{K, d\,\CoverNum(U_v)\} = d\,\CoverNum(U_v)$. Covers of the
sets $U_v$, together with one neighbor of each vertex of
$\mathcal U\setminus U'$, cover $\mathcal U$, so
$\sum_v\CoverNum(U_v)+|\mathcal U\setminus U'|\geq\CoverNum(\mathcal U)$.
Multiplying by $d$ proves~\eqref{eq:whole-fiber-cover}.

Now let a \kl{whole-orbit} assignment send the fibers over
$W_v\subseteq\mathcal U$ to $v$, and put
$U_v\coloneqq\{u\in W_v:\dist(u,v)<1\}$. The fiber over a vertex of
$\mathcal U\setminus U'$ goes to a facility at distance one, so its $d$
clients pay connection cost $d$, and
$\SparseCost(W_v\times\mathbb Z_d)\geq\SparseCost(U_v\times\mathbb Z_d)$
by monotonicity. The cost of the assignment is therefore at least the
left-hand side of~\eqref{eq:whole-fiber-cover}.
\end{proof}

\subsection{Applying the symmetry-gap theorem to SFL}\label{sec:quant-hiding}

In this subsection, we apply the symmetry-gap theorem of \cref{sec:symmetry-gap} to the
lifted instance of \cref{sec:quantitative-seed} and prove the lower bound for submodular facility location
(\cref{thm:quantitative}).

\subsubsection{Setting up the replication}\label{sec:sfl-abstract-application}

We replicate the input with
\[
 \mathcal C_0=\widetilde{\mathcal U},\qquad
 I=\mathcal U,\qquad \AbstractOrbit_u=\{u\}\times\mathbb Z_d,\qquad
 \mathcal F=\mathcal V,\qquad \AbstractInputCost=\SparseCost.
\]
The group $\AbstractGroup=\{\gamma_a:a\in\mathbb Z_d\}$ consists of the
\kl[shift]{shifts} $\gamma_a$ of
\cref{sec:quant-seed}, and its orbits are the fibers
$\AbstractOrbit_u$. The rank function
$\SparseCost$ is \kl{invariant} under these shifts by
\cref{lem:quant-cap}, and the connection cost to a facility
$v$ depends only on $u$. Moreover,
$\SparseCost(\AbstractOrbit_u)\leq |\AbstractOrbit_u|=d$ because $\SparseCost$ is a matroid
rank function. Thus, the lifted instance
$(\widetilde{\mathcal U},\mathcal V,\dist,\SparseCost,\AbstractGroup)$ is a \kl[abstract symmetric instance]{symmetric assignment instance}
satisfying~\eqref{item:fibered-single}, and its \kl{whole-orbit}
assignments are those of \cref{def:lifted-instance},
so \cref{lem:symmetry-gap} and the bound on $\CoverNum(\mathcal U)$ guaranteed by \cref{lem:quant-source} give
\begin{equation}\label{eq:sfl-input-gap}
 \AbstractInputOpt\leq4m,
 \qquad \text{and} \qquad
 \AbstractSymOpt\geq d\,\CoverNum(\mathcal U)
 \geq\frac{m\log d}{2}.
\end{equation}

It remains to choose the replication factor $R$ and the slack
$\nu$ in \cref{thm:symmetry-gap}.

\subsubsection{Proof of the SFL lower bound}\label{sec:quant-proof}

The slack $\nu$ in \cref{thm:symmetry-gap} should be at most of the order of the cost
$\AbstractInputOpt\leq4m$ of the cheap assignment, so that the gap
condition of the theorem (Equation ~\eqref{eq:symmetry-gap-condition}) still holds for $\alpha$ of order
$\log d$. At the same time, $R\nu^2/|\mathcal C_0|^2$ has to be large
enough to beat the union bound over the $d^{O(d)}$ clients and
facilities in \cref{thm:symmetry-gap}. We choose $\nu=m$ and $R=d^4$.
Since $|\mathcal C_0|=dm$, this gives $2R\nu^2/|\mathcal C_0|^2=2d^2$.
We first prove the following explicit form of \cref{thm:quantitative}.

\begin{proposition}[Explicit success-probability bound]\label{prop:quantitative-bound}
There are an absolute constant $\alpha_0>0$, a polynomial $P(\cdot)$, and infinitely many integers
$N$ with the following property. For every algorithm that makes at most
$Q$ value queries on every run, there is an instance with
$|\mathcal C|=N$, $|\mathcal F|\leq|\mathcal C|$, rational distances whose encoding length is bounded by $P(|\mathcal C|)$, and a matroid-rank
facility cost function, on which the algorithm returns an assignment of
cost at most $\alpha_0\log\log|\mathcal C|\cdot\OPT$ with probability at
most
\[
 \bigl(Q+|\mathcal C|\bigr)
 \exp\!\left(-\frac19\left(\frac{\log|\mathcal C|}{\log\log|\mathcal C|}\right)^{2}\right).
\]
\end{proposition}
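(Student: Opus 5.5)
The plan is to apply \cref{thm:symmetry-gap} to the lifted instance $(\widetilde{\mathcal U},\mathcal V,\dist,\SparseCost,\AbstractGroup)$ of \cref{sec:sfl-abstract-application}, with $\nu=m$ and $R=d^4$, for every sufficiently large $d$. This produces one value $N=|\mathcal C|$ per $d$, and hence infinitely many $N$. The replicated instance has $|\mathcal C|=R\,|\mathcal C_0|=d^4\cdot dm=d^5K^2=d^{2d+11}$ clients and $|\mathcal F|=m\le|\mathcal C|$ facilities.

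\textbf{Hypotheses of the theorem.} The distances are multiples of $1/d$ capped at $1$, so they are rational with encoding length $O(\log d)$ each. There are at most $|\mathcal C|^2$ client–facility pairs, so the total encoding is polynomial in $|\mathcal C|$. The replicated clients sit at base vertices, so the connection costs remain the restriction of a metric and the instances are SFL instances. For the matroid clause, $R\SparseCost$ is integer-valued, and $R\nu/|\mathcal C_0|=d^4m/(dm)=d^3$ is an integer.

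\textbf{Gap condition.} By \eqref{eq:sfl-input-gap}, $\AbstractInputOpt+\nu\le 5m$ and $\AbstractSymOpt\ge m\log d/2$. Hence \eqref{eq:symmetry-gap-condition} holds for any $\alpha<\log d/10$; take $\alpha=\log d/11$, which is at least $1$ for large $d$. Next, $\log|\mathcal C|=(2d+11)\log d$ gives $\log\log|\mathcal C|=\log d+\log(2d+11)\le 3\log d$ for large $d$. So $\alpha_0\coloneqq 1/33$ satisfies $\alpha_0\log\log|\mathcal C|\le\alpha$. Since succeeding with factor $\alpha_0\log\log|\mathcal C|$ implies succeeding with factor $\alpha$, the theorem bounds the success probability.

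\textbf{Probability bound.} The theorem gives
\[
 2(Q+|\mathcal F|)|\mathcal C_0|e^{-2d^2}\le (Q+|\mathcal C|)\cdot 2dm\,e^{-2d^2}.
\]
Since $2dm=2d^{2d+7}=e^{O(d\log d)}$, this is at most $(Q+|\mathcal C|)e^{-d^2}$ for large $d$. It remains to show
\[
 d^2\ge\frac19\Bigl(\frac{\log|\mathcal C|}{\log\log|\mathcal C|}\Bigr)^2,
 \qquad\text{i.e.}\qquad \frac{\log|\mathcal C|}{\log\log|\mathcal C|}\le 3d.
\]
Using $\log\log|\mathcal C|\ge\log d$, the left side is at most $(2d+11)\log d/\log d=2d+11\le 3d$ for $d\ge 11$.

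The only step needing care is this asymptotic bookkeeping: the constants $1/9$ and $\alpha_0$ must be matched against the prefactor $|\mathcal C_0|$, which is only $e^{O(d\log d)}$ and so is absorbed by $e^{-d^2}$. Everything else is direct substitution into results already proved.
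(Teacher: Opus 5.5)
Your proposal is correct and follows the paper's proof essentially step for step. It uses the same parameters $\nu=m$, $R=d^4$, the same integrality check $R\nu/|\mathcal C_0|=d^3$, the gap condition from \eqref{eq:sfl-input-gap}, the absorption of $2dm=2d^{2d+7}$ into $e^{-d^2}$, and the bound $d\ge\log|\mathcal C|/(3\log\log|\mathcal C|)$. Your explicit $\alpha_0=1/33$, together with monotonicity of the success event in $\alpha$ (which also secures $\alpha\ge1$), is a slightly more careful version of the paper's ``$\alpha_0$ small enough''.

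One small slip: $\log\log|\mathcal C|=\log(2d+11)+\log\log d$, not $\log d+\log(2d+11)$. Both inequalities you actually use, $\log\log|\mathcal C|\le3\log d$ and $\log\log|\mathcal C|\ge\log d$, still hold for large $d$, so nothing breaks.
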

\begin{proof}
For each sufficiently large $d$, choose
\begin{equation}\label{eq:direct-parameters}
 \nu\coloneqq m,\qquad R\coloneqq d^4,\qquad
 |\mathcal C|=dmR=d^5m,\qquad |\mathcal V|=m.
\end{equation}
Since $m=K^2=d^{2d+6}$ as chosen in \eqref{eq:quant-parameters}, we have $\log|\mathcal C|=(2d+11)\log d$. Hence
$\log\log|\mathcal C|=O(\log d)$, and for large $d$ also
$\log\log|\mathcal C|\geq\log d$ and
\begin{equation}\label{eq:direct-d-in-C}
 d\geq\frac{\log|\mathcal C|}{3\log\log|\mathcal C|}.
\end{equation}
Fix an absolute constant $\alpha_0>0$ small enough
that, for all sufficiently large $d$, the factor
$\alpha\coloneqq\alpha_0\log\log|\mathcal C|$ satisfies
\begin{equation}\label{eq:direct-gap-in-n}
 \alpha\bigl(\AbstractInputOpt+\nu\bigr)
 \leq5\alpha m
 <\frac{m\log d}{2}
 \leq\AbstractSymOpt,
\end{equation}
which is the gap condition~\eqref{eq:symmetry-gap-condition}.
The facility cost functions are matroid rank functions by
\cref{thm:symmetry-gap},
since $R\SparseCost$ is integer-valued and $R\nu/|\mathcal C_0|=d^3$ is an integer.
The distances are of the form $\frac{i}{d}$ with $i\in\{0,1,\dots,d\}$.

Let $\mathcal A$ be a randomized algorithm that makes at most $Q$
value queries on every run.
\cref{thm:symmetry-gap}, with $|\mathcal F|=m$,
$|\mathcal C_0|=dm$, and $2R\nu^2/|\mathcal C_0|^2=2d^2$, gives an
instance of the family on which $\mathcal A$ succeeds with probability at most
\begin{equation}\label{eq:direct-failure}
 \eta_d\coloneqq2dm\,(Q+m)e^{-2d^2}.
\end{equation}
Since $2dm=2d^{2d+7}$, we have $2dm\,e^{-d^2}\leq1$ for large $d$.
Together with $m\leq|\mathcal C|$, this gives the first inequality below,
and~\eqref{eq:direct-d-in-C} gives the second:
\[
 \eta_d\leq\bigl(Q+|\mathcal C|\bigr)e^{-d^2}
 \leq\bigl(Q+|\mathcal C|\bigr)
 \exp\!\left(-\frac19\left(\frac{\log|\mathcal C|}{\log\log|\mathcal C|}\right)^{2}\right).
\]
Finally, the client counts $N(d)=d^5m(d)$ are strictly increasing and
independent of the algorithm, and $|\mathcal F|=m\leq|\mathcal C|$.
Choose a fixed $d_0$ large enough for \cref{lem:quant-source},
\eqref{eq:direct-d-in-C} and all the estimates in this proposition. The proposition holds for every $N=N(d)$ with $d\geq d_0$.
\end{proof}

\begin{proof}[Proof of \cref{thm:quantitative}]
Let $\alpha_0$ be the constant of \cref{prop:quantitative-bound},
and consider an algorithm with a \kl{polynomial query bound}
$Q(\cdot)$ that achieves factor $\alpha_0\log\log|\mathcal C|$ with
success probability $p>0$. Choose $b\geq1$ with $Q(k)\leq k^b$ for
large $k$. The instances of \cref{prop:quantitative-bound} with
$|\mathcal C|=N$ have $|\mathcal F|\leq N$, so for large $N$ the
algorithm makes at most $(2N)^b\leq N^{2b}$ queries on them. By the
proposition, its success probability on one of these instances is
at most
\[
 \bigl(N^{2b}+N\bigr)
 \exp\!\left(-\frac19\left(\frac{\log N}{\log\log N}\right)^{2}\right)
 \leq\exp\!\left((2b+1)\log N
 -\frac19\left(\frac{\log N}{\log\log N}\right)^{2}\right),
\]
which tends to $0$ as $N\to\infty$ and is therefore below $p$ for large
$N$. This is a contradiction. Since
$\alpha(k)=o(\log\log k)$ implies $\alpha(k)\leq\alpha_0\log\log k$ for
large $k$, this gives the consequence for factor
$o(\log\log|\mathcal C|)$ stated after \cref{thm:quantitative}.
\end{proof}

\begin{remark}[Quasi-polynomial query bound]\label{rem:quasi-polynomial}
The proof above can be tightened. Fix $k\geq1$, and consider an algorithm that
achieves factor $\alpha_0\log\log|\mathcal C|$ with success
probability at least $|\mathcal C|^{-k}$. Let $Q$ be the maximum number
of queries of the algorithm over all runs on instances with
$|\mathcal C|=N$, for $N$ as in \cref{prop:quantitative-bound}. The
proposition gives
\[
 Q\geq N^{-k}\exp\!\left(\frac19\left(\frac{\log N}{\log\log N}\right)^{2}\right)-N
 =N^{\frac{\log N}{9(\log\log N)^2}-k}-N
 =N^{\Omega(\log N/(\log\log N)^2)},
\]
so the algorithm would require quasi-polynomially many queries.
\end{remark}
\begin{remark}[Positive distances]\label{rem:positive-distances}
The instances above place the $dR$ clients of each public block at one
point. Adding $1/d$ to the distance between any two distinct elements
of $\mathcal C\cup\mathcal V$ preserves the \kl{triangle inequality} and makes
all these distances positive. The resulting instances are the replicated
instances of the input with connection costs $\dist(u,v)+1/d$.
This shift raises the cost of every input assignment by
$|\mathcal C_0|/d=m$, so the shifted input has $\AbstractInputOpt\leq5m$ and
$\AbstractSymOpt\geq m\log d/2$, and the proof goes through with a smaller
constant $\alpha_0$.
\end{remark}

\begin{remark}[Expected guarantees]\label{rem:expected}
The lower bound also applies when only the expected number of queries
is polynomially bounded. Consider an algorithm that yields an approximation factor of $\alpha$ in expectation and that needs at most $Q(|\mathcal C|+|\mathcal F|)$ queries in expectation, for a polynomial $Q$. Stop
the algorithm after $4Q(|\mathcal C|+|\mathcal F|)$ queries and return an arbitrary assignment
if it has not finished. Truncation occurs with probability at most
$1/4$, and the original output exceeds $2\alpha\,\OPT$ with
probability at most $1/2$. The truncated algorithm therefore has a
polynomial query bound and returns a factor-$2\alpha$ assignment
with probability at least $1/4$. By \cref{thm:quantitative}, this cannot happen when $2\alpha\leq\alpha_0\log\log|\mathcal C|$ for large
$|\mathcal C|$, in particular when $\alpha=o(\log\log|\mathcal C|)$.

\end{remark}

\subsubsection{The configuration-LP integrality gap}
\label{sec:sfl-config-gap}

We can also use the \kl{lifted instance} in order to establish a lower bound on the integrality gap of the configuration LP~\eqref{eq:abstract-config-lp}.

\begin{corollary}[Configuration-LP gap for matroid facility costs]
\label{thm:sfl-config-gap}
The configuration LP for SFL has integrality gap
$\Omega(\log\log|\mathcal C|)$, even on instances with a common
\kl[rank function]{matroid rank} facility cost function. That is, there
are such instances with
\[
 \frac{\OPT}{\operatorname{ConfLP}}=\Omega(\log\log|\mathcal C|).
\]
\end{corollary}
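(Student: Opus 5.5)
The plan is to take the reference instance $(\mathcal C,\mathcal V,\dist,\AbstractRefCost)$ built from the lifted symmetric assignment instance $(\widetilde{\mathcal U},\mathcal V,\dist,\SparseCost,\AbstractGroup)$ of \cref{sec:sfl-abstract-application}. We use any replication factor $R\geq1$; for instance $R=1$ works, or $R=d^4$ to reuse the instances of \cref{prop:quantitative-bound}.

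First, we check that this is an SFL instance with a matroid-rank facility cost function. The connection costs are the truncated shortest-path metric of \cref{def:lifted-instance}, with every copy in $\AbstractBlock(u)$ placed at $u$. This is the restriction of a (pseudo)metric on clients and facilities, which \cref{def:sfl} permits since distinct clients may share a location. Since $\SparseCost$ is integer-valued, so is $R\SparseCost$. Hence \itemref{prop:abstract-instances}{item:abstract-matroid} makes $\AbstractRefCost$ a matroid rank function, and no integrality condition on $\AbstractPenaltyWeight$ is needed because the hidden function plays no role.

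Second, we compare the two sides of the gap.
\begin{itemize}
\item By \itemref{lem:abstract-optima}{item:abstract-ref-opt} together with \eqref{eq:sfl-input-gap}, the integral optimum satisfies $\OPT=\AbstractRefOpt=R\,\AbstractSymOpt\geq Rm\log d/2$.
\item By \itemref{prop:abstract-config-lp}{item:abstract-lp}, applied to the first copy assignment (cost at most $4m$ by \itemref{lem:symmetry-gap}{item:diagonal-cost}), we get $\operatorname{ConfLP}_{\mathrm{ref}}\leq R\,\AbstractInputOpt\leq 4Rm$.
\end{itemize}
The ratio is therefore at least $\log d/8$.

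Finally, we convert this to $|\mathcal C|$. Here $|\mathcal C|=dmR$ with $m=d^{2d+6}$, so $\log|\mathcal C|=\Theta(d\log d)$ for $R$ polynomial in $d$. Hence $\log\log|\mathcal C|=\Theta(\log d)$ and $\log d/8=\Omega(\log\log|\mathcal C|)$. Doing this for every sufficiently large $d$ (so that \cref{lem:quant-source} applies) gives infinitely many instances, which proves the corollary.

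There is essentially no obstacle beyond bookkeeping. The one point to state carefully is that the configuration LP of \eqref{eq:abstract-config-lp} for the reference instance is exactly the LP whose gap is claimed, with the same clients, facilities and metric. We should also note that \cref{lem:symmetric-lp}, used inside \cref{prop:abstract-config-lp}, requires only finite distances, which truncation guarantees.
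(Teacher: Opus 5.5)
Your proposal is correct and follows the paper's proof essentially verbatim. The paper also takes the reference instance of the lifted symmetric instance (with $R=1$) and gets matroid-rank facility costs from the integrality of $R\SparseCost$. It then combines \eqref{eq:sfl-input-gap} with the replicated-optimum identity and the configuration-LP bound for the reference instance to obtain a gap of at least $(\log d)/8$, with $\log\log|\mathcal C|=O(\log d)$; allowing any $R$ polynomial in $d$ is a harmless generalization.
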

\begin{proof}
Take the replicated instances of \cref{sec:replication} for the \kl{lifted instance} $(\widetilde {\mathcal U},\mathcal V,\dist,\SparseCost)$ with $R=1$.
The reference instance keeps the lifted clients and the metric, and
its facility cost function $\AbstractRefCost$ is a matroid rank function by \itemref{prop:abstract-instances}{item:abstract-matroid}.
By~\eqref{eq:sfl-input-gap}, Lemma~\itemref*{lem:abstract-optima}{item:abstract-ref-opt}
and Proposition~\itemref*{prop:abstract-config-lp}{item:abstract-lp} give
\begin{equation}\label{eq:sfl-config-abstract}
 \operatorname{ConfLP}_{\mathrm{ref}}\leq\AbstractInputOpt\leq4m,
 \qquad
 \AbstractRefOpt=\AbstractSymOpt
 \geq\frac{m\log d}{2}.
\end{equation}
Thus the gap is at least $(\log d)/8$. The instance has
$|\mathcal C|=dm=d^{2d+7}$ clients by \eqref{eq:quant-parameters}, so $\log\log|\mathcal C|=O(\log d)$.
\end{proof}

\section*{AI methodology statement}
ChatGPT~6 Astra (OpenAI) found a complete but hard-to-read proof, in
early September 2026, that no constant-factor approximation is possible
in the value-oracle model. Over the following weeks, the authors
verified and simplified this argument in conversations with Claude
Opus~5.5 (Anthropic) and ChatGPT~6 Astra. The improvement to the tight
$\Omega(\log\log|\mathcal C|)$ lower bound of \cref{thm:quantitative}
came out of a collaboration in these conversations: the authors pointed
out constructions that were suboptimal, and the models then found the
improved constructions. The authors used these tools to
draft and edit parts of the manuscript. The authors verified the
correctness and originality of all content, including all proofs and
references, and take full responsibility for it.

\begingroup
\small
\bibliographystyle{alpha}
\newcommand{\etalchar}[1]{$^{#1}$}

\endgroup

\clearpage
\appendix
\crefalias{section}{appendix}
\crefalias{subsection}{subappendix}
\counterwithin{theorem}{section}
\counterwithin{lemma}{section}
\counterwithin{corollary}{section}
\counterwithin{proposition}{section}
\counterwithin{fact}{section}
\counterwithin{definition}{section}
\counterwithin{remark}{section}
\counterwithin{question}{section}

\section{Computational hardness}\label{app:computational}

In this appendix, we convert the oracle lower bound into computational
hardness when the facility cost function is given as a circuit,
using the technique of Dobzinski and
Vondr\'ak~\cite{DobzinskiVondrak2012}.
We remark that the main result of~\cite{DobzinskiVondrak2012} is stated in terms of submodular maximization (i.e., the problem of finding a single set maximizing a given submodular function). As submodular facility location does not fall into this category, this result does not apply in a black box fashion.

The main goal of this section is to establish the following theorem.

\begin{restatable}{theorem}{ThmComputationalSymmetryGap}[Computational symmetry-gap theorem]\label{thm:computational-symmetry-gap}
	Fix a symmetric assignment instance
	$(\mathcal C_0,\mathcal F,\dist,\AbstractInputCost,\AbstractGroup)$
	whose facility costs and finite connection costs are rational and
	whose orbits all have size $q\in\mathbb Z_{\geq2}$. Let
	$\alpha\in\mathbb R_{\geq1}$ and $\nu\in\mathbb Q_{>0}$ satisfy
	\[
	\alpha(\AbstractInputOpt+\nu)<\AbstractSymOpt.
	\]
	Unless $\mathrm{NP}=\mathrm{RP}$, there is no randomized polynomial-time
	algorithm $\mathcal{A}$ with the following properties:
	\begin{enumerate}
		\item On any instance of submodular facility location, $\mathcal{A}$ terminates in polynomial time.
		\item On instances that arise as replications of $(\mathcal C_0,\mathcal F,\dist,\AbstractInputCost,\AbstractGroup)$ with matroid-rank facility
		costs represented by polynomial-size circuits, $\mathcal{A}$ achieves an approximation guarantee of $\alpha$ with constant success probability.
	\end{enumerate}
\end{restatable}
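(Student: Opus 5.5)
We follow the Dobzinski--Vondr\'ak paradigm: replace the uniformly random hidden partition $\AbstractLabelMap$ by one encoded through a code, so that the hidden function $\AbstractHidCost$ has a polynomial-size circuit. We then show that an algorithm achieving factor $\alpha$ on these circuit instances can be used to decide an NP-complete problem with one-sided error.

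Concretely, fix the replication factor $R$ and penalty $\AbstractPenaltyWeight=R\nu/|\mathcal C_0|$ as in \cref{sec:oracle-lower-bound}, with $R$ polynomial in the instance size and the integrality conditions of \cref{thm:symmetry-gap} met. We use the notions prepared in the preamble: a \kl{linear code} $\Code$ with \kl{encoder} $\Enc$ and an efficient \kl{list decoder} $\ListDec$. Through a \kl[encoder]{symbol map} $\SymbolMap$, each codeword $\Enc(x)$ describes a partition of every public block $\AbstractBlock(i)$ into $q=|\AbstractOrbit_i|$ copy classes. Here $q\ge2$ is needed so that each coordinate genuinely carries a label. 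Given a SAT formula $\phi$, we build the instance whose hidden partition is induced by $\Enc(x)$ for a satisfying assignment $x$ of $\phi$ (or the reference instance if none exists).

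The cost function is realized by a circuit that does not need $x$ itself. On input $C$, the circuit runs $\ListDec$ on the word read off from $C$'s imbalance pattern. For each candidate $x'$ in the resulting list, it checks whether $x'$ satisfies $\phi$. If some satisfying candidate $x'$ is found, the circuit outputs $\AbstractHidCost$ computed with the partition of $\Enc(x')$; otherwise it outputs $\AbstractRefCost$. Computing $\AbstractHidCost$ requires a submodular minimization over $S$, which is polynomial since $\AbstractInputCost$ is a fixed rational instance. Using \cref{lem:abstract-agreement}, this circuit computes exactly the hidden function of a matroid rank function. The argument is that the circuit deviates from $\AbstractRefCost$ only on non-well-balanced queries, and for those the list decoder recovers the label. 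This requires choosing the code so that any set $C$ that is not well-balanced with respect to $\Enc(x)$ is a corrupted codeword within list-decoding radius of $\Enc(x)$. We therefore set $\AbstractPenaltyWeight$ relative to the code's agreement parameter.

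The reduction then goes as follows. If $\phi$ is unsatisfiable, the circuit computes $\AbstractRefCost$, so by \itemref{lem:abstract-optima}{item:abstract-ref-opt} every assignment costs at least $R\AbstractSymOpt$. If $\phi$ is satisfiable, \itemref{lem:abstract-optima}{item:abstract-hid-opt} and the gap condition give an assignment of cost below $R\AbstractSymOpt/\alpha$. Running $\mathcal A$, and evaluating the cost of its output with the circuit, distinguishes the two cases with one-sided error, which yields $\mathrm{NP}=\mathrm{RP}$. Property (i), termination on all inputs, is what lets us run $\mathcal A$ on the unsatisfiable case safely. The costs are compared exactly, because all data is rational.

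The main obstacle is the coding step: designing $\Enc$, $\SymbolMap$ and $\ListDec$ so that ``not well-balanced'' really implies list-decodability, with $R$ only polynomial. The agreement notion here is an $\ell_\infty$ imbalance of copy counts inside blocks, not Hamming distance. I would try Reed--Solomon codes over a field of size about $q^{\mathrm{poly}}$, grouped per block, with Guruswami--Sudan list decoding. A first step is to show, by averaging over coordinates, that a large imbalance forces agreement above $1/q+\epsilon$ on a $1/\mathrm{poly}$ fraction of positions.
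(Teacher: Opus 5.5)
Your architecture matches the paper's: a code-induced partition $\pi_x$, a circuit that list-decodes each query, tests the candidates against $\phi$, and outputs $g_{\pi_{x'}}$ or $g_{\mathrm{ref}}$. Your cost-threshold test on $\mathcal A$'s output is also a valid substitute for the paper's test, which lists candidates from the returned clusters. There are, however, two genuine gaps.

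\textbf{First gap: no isolation step.} The claim that the circuit ``computes exactly the hidden function of a matroid rank function'' holds only when $\phi$ is unsatisfiable or has a \emph{unique} satisfying assignment. If $\phi$ has several, different queries can have different satisfying candidates in their lists. The circuit then splices together $g_{\pi_{x'}}(C)$ for a query-dependent $x'$. This function need not be submodular and is not a replicated instance, so hypothesis (ii) says nothing about $\mathcal A$. For example, $\mathcal A$ may return a whole-block assignment. Its clusters are well-balanced for every $\pi_{x'}$, so under the circuit it costs at least $R\,\mathrm{OPT}_{\mathrm{sym}}$, and your test rejects a satisfiable formula. The circuit cannot instead commit to one canonical solution, since finding one is itself hard. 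What you actually get is a one-sided-error algorithm for the promise problem ``unsatisfiable versus uniquely satisfiable''. You need Valiant--Vazirani isolation (random parity constraints yield, with probability at least $1/4$, a formula with a unique solution) to conclude $\mathrm{NP}=\mathrm{RP}$; this is exactly how the paper finishes. This multi-solution case is also where stopping $\mathcal A$ at its time bound from (i) is really needed, since for unsatisfiable $\phi$ the circuit is an honest SFL instance anyway.

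\textbf{Second gap: the coding step is left open, and the sketched route does not obviously work.} A query does not determine a received word. Within a block it only reveals a set $S\subseteq[n]\times\mathbb Z_q$ of (position, label) pairs. Not being well-balanced only says that the true labels fall in $S$ about $\delta n$ more often than the baseline $|S|/q$, where $\delta=\nu/|\mathcal C_0|$. If Reed--Solomon symbols over a field of size $Q'\ge n$ are simply reduced to the $q$ labels, the query carries no information about which symbol within a residue class was used. Any extracted word then agrees with $\mathrm{Enc}(x)$ on at most roughly a $q/Q'\le q/n$ fraction of positions, far below the Guruswami--Sudan threshold $\sqrt{k/n}$.

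The paper resolves this with five ingredients:
\begin{enumerate}
\item[(a)] a code over a \emph{constant} alphabet $\mathbb F_Q$, with $Q\ge q/\gamma$ a power of two, list-decodable up to $(1-1/Q)(1-\varepsilon)n$ errors (Guruswami--Sudan concatenated Reed--Solomon/Hadamard codes);
\item[(b)] encoding by cyclic shifts $f_x(i,t)=(t+\iota(y_i))\bmod q$ with $y=\mathrm{Enc}(x)$, so that every copy class has size exactly $R$;
\item[(c)] lifting $S$ to $[n]\times[Q]$ and applying the Dobzinski--Vondr\'ak rounding lemma, which outputs $nQ$ words, one of which agrees with $y$ on at least $(1+\gamma)n/Q$ positions;
\item[(d)] also processing complements of the sets, to catch imbalance of either sign;
\item[(e)] taking $R=Dn$ and splitting each block into $D$ chunks of length $n$, so that an imbalance above $\delta R$ forces one above $\delta n$ in some chunk.
\end{enumerate}
Without these or an equivalent, the implication ``not well-balanced for $\pi_x$ $\Rightarrow$ $x$ is in the decoded list'' is unproven. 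Both the circuit's correctness and your final test rest on that implication.
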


We point out that \cref{thm:computational-symmetry-gap} implies that unless $\mathrm{NP}=\mathrm{RP}$, SFL does not admit an $o(\log\log|\mathcal F|)$-approximation.
\begin{corollary}[Computational hardness for SFL]\label{cor:sfl-computational}
	Unless $\mathrm{NP}=\mathrm{RP}$, no randomized polynomial-time
	algorithm with constant success probability achieves approximation factor
	$o(\log\log|\mathcal F|)$ for SFL, even with a
	common matroid-rank facility cost function given by a polynomial-size
	evaluation circuit. In particular, SFL admits no constant-factor
	approximation in this representation.
\end{corollary}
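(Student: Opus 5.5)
The plan is to derive \cref{cor:sfl-computational} from \cref{thm:computational-symmetry-gap}, applied to a suitable lifted instance of \cref{sec:quantitative-seed}, with $d$ chosen as a function of the instance size.

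\textbf{Step 1: the symmetric instance.} Take the symmetric assignment instance $(\widetilde{\mathcal U},\mathcal V,\dist,\SparseCost,\AbstractGroup)$ of \cref{sec:sfl-abstract-application} for a fixed large $d$. Its orbits are the fibers, all of size $q=d\geq2$. The facility costs $\SparseCost$ are integers, and the distances lie in $\{i/d\}$, so all data are rational. By \eqref{eq:sfl-input-gap} we have $\AbstractInputOpt\leq4m$ and $\AbstractSymOpt\geq m\log d/2$. Choose $\nu=m$ and $\alpha=\alpha_0\log d$ with $\alpha_0$ small enough that $5\alpha m<m\log d/2$. Then the gap condition of \cref{thm:computational-symmetry-gap} holds.

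\textbf{Step 2: relate $\log d$ to $\log\log|\mathcal F|$.} We have $|\mathcal F|=m=d^{2d+6}$, so $\log\log|\mathcal F|=\Theta(\log d)$. Hence $\alpha=\Theta(\log\log|\mathcal F|)$.

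\textbf{Step 3: the reduction.} Suppose some polynomial-time algorithm $\mathcal A$ achieves factor $\beta(|\mathcal F|)=o(\log\log|\mathcal F|)$ with constant success probability on all SFL instances with circuit-represented matroid-rank costs. For large $d$ we then have $\beta(m)<\alpha_0\log d$. The replicated instances are SFL instances: the distances come from the truncated shortest-path metric with each block placed at one point, and \cref{rem:positive-distances} makes distances positive if that is needed. So $\mathcal A$ achieves factor $\alpha$ on the replications. We may also truncate $\mathcal A$ by a polynomial clock so that it terminates in polynomial time on every input, which gives property (i). This contradicts \cref{thm:computational-symmetry-gap} unless $\mathrm{NP}=\mathrm{RP}$.

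\textbf{Main obstacle.} \cref{thm:computational-symmetry-gap} is stated for one fixed symmetric instance, while the corollary needs $d\to\infty$. A single fixed $d$ only rules out a constant $\alpha$, not $o(\log\log|\mathcal F|)$.

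First, I would try reading the theorem's proof as uniform in the instance. The seed $G$ only needs to be polynomial-size relative to the replicated instance: with $R$ polynomial in $m$ and a circuit of size $\mathrm{poly}(dm)$ for $\SparseCost$, the reduction stays polynomial. The difficulty is that evaluating $\SparseCost$ involves minimizing over witnesses of size up to $2K$. However, $\SparseCost$ is a capped matroid-type minimum built from coverage plus modular terms, which admits a polynomial-time evaluation via submodular minimization. That evaluation only needs to be correct on the queried sets, so a circuit of size $\mathrm{poly}(dm)$ should exist.

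The existence of $G$ from \cref{lem:quant-source} is nonconstructive. This is handled by nonuniformity: RP-type conclusions usually need uniformity, so instead I would observe that the hardness is driven by the hidden partition, with $G$ as fixed advice. Alternatively, since $G$ satisfies its properties with probability $1-o(1)$, it can be sampled.

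If uniformity in $d$ fails, I would fall back on applying the theorem for each fixed $d$. This yields, for every constant $c$, hardness of factor $c$. That already gives the "in particular, no constant factor" part of the statement.
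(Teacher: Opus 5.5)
Steps 1--3 already form a complete proof, and it is the paper's proof. You choose one $d$ that depends on the algorithm's $\beta$, with $\beta(m)<\alpha_0\log d$; the paper uses $\log d/10$, which is your condition $5\alpha m<m\log d/2$. Replication leaves the facility set, and hence $|\mathcal F|=m$, unchanged. So on every replication of this fixed seed the algorithm's guarantee is the constant $\beta(m)<\alpha$, and \cref{thm:computational-symmetry-gap} applies with $\nu=m$ and $q=d$.

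The ``main obstacle'' you raise afterwards is not an obstacle, and you should not retreat to the constant-factor fallback. The corollary quantifies over algorithms first, so $d$ may depend on $\beta$, and no single reduction ever needs $d\to\infty$. This is exactly why the statement is phrased with $\log\log|\mathcal F|$ rather than $\log\log|\mathcal C|$. The paper notes that a $|\mathcal C|$-version would require the seed itself to grow.

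The workarounds you sketch would not go through as stated:
\begin{itemize}
\item $\SparseCost$ is not an unconstrained submodular minimum. Dropping admissibility $|\tilde V|\le2K$ changes the value; for instance, $\SparseObj(\tilde U,\widetilde{\mathcal V})=dm-(d-3)dm<0$ for every $\tilde U$. Whether the constrained minimum can be evaluated in polynomial time is precisely what the paper flags as unclear.
\item Hard-wiring $G$ as advice yields only a nonuniform conclusion, not $\mathrm{NP}=\mathrm{RP}$.
\end{itemize}
Finally, \cref{rem:positive-distances} is unnecessary, since SFL explicitly allows distinct clients at distance zero.
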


\begin{proof}
	Take the cyclic input of \cref{sec:sfl-abstract-application}.
	It has $m=d^{2d+6}$ facilities, orbits of size $d$, and, by
	\eqref{eq:sfl-input-gap},
	\[
	\AbstractInputOpt\leq4m,
	\qquad
	\AbstractSymOpt\geq\frac{m\log d}{2}.
	\]
	With $\nu=m$, \cref{thm:computational-symmetry-gap} excludes every
	factor $\alpha<\log d/10$. 
Suppose towards a contraction that there were a randomized polynomial-time algorithm for SFL that, with constant success probability, achieves an approximation guarantee of $\beta(|\mathcal F|)$ for some function $\beta(k)\in o(\log\log k)$. Since
	$\log\log m=\Theta(\log d)$, we have $\beta(m)<\log d/10$ for
	some sufficiently large $d$ (which is also large enough to meet the guarantees of \cref{lem:quant-source}). Fix this $d$ and consider the corresponding cyclic input instance. Replication with factor $R$ leaves the $m$
	facilities unchanged, so the algorithm's
	guarantee is $\beta(m)$. This contradicts
	\cref{thm:computational-symmetry-gap}.
\end{proof}
The proof also excludes a sufficiently small positive constant times
$\log\log |\mathcal F|$ for sufficiently large
instances. Expected-cost guarantees are excluded as well, with a change
in the constants, by Markov's inequality as in \cref{rem:expected}.

We remark that in order to get a lower bound of $\Omega(\log\log |\mathcal{C}|)$, we would need the entire instance to grow, rather than just the replication factor. However, while we show that arbitrarily large instances with a high symmetry gap exist, the proof is not fully constructive (e.g., \cref{lem:quant-source}) and it is not immediately clear whether evaluation oracles for the involved functions can be implemented in polynomial time.

The remainder of this appendix is dedicated to the proof of \cref{thm:computational-symmetry-gap}.
In \cref{sec:RP,sec:ValiantVazirani}, we give a brief introduction to the complexity class $\mathrm{RP}$ and discuss a seminal result by Valiant and Vazirani~\cite{ValiantVazirani1986}, which shows, in particular, that any problem in $\mathrm{NP}$ admits a randomized reduction to the relaxation of SAT where we only need to correctly distinguish unsatisfiable formulas and those with a unique satisfying assignment. 

Following \cite{DobzinskiVondrak2012}, this motivates the following strategy to prove \cref{thm:computational-symmetry-gap}: Our goal is to convert the algorithm $\mathcal A$ for submodular facility location promised by \cref{thm:computational-symmetry-gap} into a randomized algorithm for the relaxed version of SAT discussed above as follows:
Given a CNF formula $\varphi$ on $m$ variables, we will choose a sufficiently large replication factor $R(m)$ and encode candidate solutions $z\in\{0,1\}^m$ to $\varphi$ via partitions $\pi_z$ of the replicated client set.
Using results from coding theory, we will show that this encoding can be done in such a way that from a given query $C$, we can recover a polynomially-sized list $L\subseteq \{0,1\}^m$ containing all $z\in\{0,1\}^m$ with the property that $C$ is not \kl[abstract well balanced]{well-balanced} for $\pi_z$.
This allows us to construct a polynomial-time algorithm, which, given a client set $C$, computes $\AbstractRefCost(C)$, if $\varphi$ is unsatisfiable, and $g_{\pi_z}(C)$, if $z$ is the unique satisfying assignment for $\varphi$; the behavior if $\varphi$ has multiple satisfying assignments is left unspecified.
To this end, we simply compute the list of all $z$ such that $C$ is unbalanced for $z$, and test whether any of them satisfies $\varphi$. If yes, we output $g_{\pi_z}(C)$, otherwise, we output $\AbstractRefCost(C)$. This is certainly correct if $\varphi$ is unsatisfiable, or if $\varphi$ has a unique satisfying assignment, which happens to be contained in $L$. On the other hand, if the unique satisfying assignment of $\varphi$ does not appear in $L$, then $C$ is \kl[abstract well balanced]{well-balanced} for $\pi_z$ and $\AbstractRefCost$ and $g_{\pi_z}$ agree on $C$.

Using standard arguments, this algorithm can be transformed into a circuit of polynomial size. We now run the algorithm $\mathcal{A}$ on the replicated instance with facility cost function given by the circuit. Then, we compute a polynomial-size list of all $z\in\{0,1\}^m$ such that at least one cluster of the solution is not \kl[abstract well balanced]{well-balanced} for $\pi_z$, and check whether any of these satisfies $\varphi$. If $\varphi$ is unsatisfiable, we clearly won't find any solution. However, if $\varphi$ has a unique satisfying assignment $z$, the circuit implements $g_{\pi_z}$, and with constant probability, $\mathcal A$ returns an $\alpha$-approximate solution. At least one of the clusters must be unbalanced for $\pi_z$; otherwise, the solution would cost at least as much as under $\AbstractRefCost$, which is more than $\alpha$ times the hidden optimum. Hence, the list of candidate solutions we produce contains $z$.

In \cref{sec:coding_theory}, we introduce the tools from coding theory that we need in order to encode candidate solutions $z$ via partitions of the client set and efficiently recover $z$ from a query $C$ that is unbalanced for $\pi_z$.
In \cref{sec:construction_circuits}, we formalize the proof sketched above on how to obtain the desired circuit encoding the facility cost function.
Finally, \cref{sec:proof_computation_symmetry_gap} proves \cref{thm:computational-symmetry-gap}.

\subsection{The complexity class $\mathrm{RP}$}\label{sec:RP}
\begin{definition}[Complexity class $\mathrm{RP}$]
The complexity class $\mathrm{RP}$ (randomized polynomial time) contains all decision problems for which there exists a randomized Turing machine with the following properties:
\begin{enumerate}
	\item It terminates on every input instance after a polynomial number of steps and outputs either `yes' or `no'.
	\item If the input is a no-instance, then the output is always `no'.
	\item If the input is a yes-instance, then the output is `yes' with probability at least $\frac{1}{2}$.
\end{enumerate}
\end{definition}
It is known that $\mathrm{RP}\subseteq \mathrm{NP}$.
\begin{definition}[randomized reduction]
A randomized reduction from decision problem $\mathcal A$ to decision problem $\mathcal B$ is a randomized Turing machine with the following properties:
\begin{enumerate}
	\item Given an instance $A$ of $\mathcal A$, it computes, in polynomial time, an instance $f(A)$ of $\mathcal B$.
	\item If $A$ is a no-instance of $\mathcal A$, then $f(A)$ is a no-instance of $\mathcal B$.
	\item If $A$ is a yes-instance of $\mathcal A$, then with probability at least $\frac{1}{2}$, $f(A)$ is a yes-instance of $\mathcal B$.
\end{enumerate}
\end{definition}
Note that if there is a randomized reduction from decision problem $\mathcal A$ to decision problem $\mathcal B$ and $\mathcal B$ is in $\mathrm{RP}$, then so is $\mathcal A$. Indeed, executing the reduction twice, executing the algorithm for $\mathcal B$ twice on each of the outputs, and returning yes if at least one of the four executions returns yes decreases the probability of failure to $(1-\frac{1}{4})^4<0.32$.

Valiant and Vazirani~\cite{ValiantVazirani1986} have shown that any $\mathrm{NP}$-complete decision problem $\mathcal A$, which asks whether or not a solution exists, admits a randomized reduction to its relaxation where only no-instances (no solution) and yes-instances with a unique solution need to be distinguished correctly. In particular, they remark that the existence of a randomized polynomial-time algorithm for SAT that is allowed to give arbitrary answers on instances with at least two solutions would imply $\mathrm{RP}=\mathrm{NP}$. This result is an immediate consequence of \cite{ValiantVazirani1986}, but differs from the statements formally proven there in the sense that rather than allowing arbitrary answers, they allow the answers to be specified by an arbitrary (but fixed) Boolean predicate. For completeness, we explain how to derive the statement we need from the results in~\cite{ValiantVazirani1986}. 

\subsection{Satisfiability and unique solutions}\label{sec:ValiantVazirani}
A Boolean variable $x$ can take values $0$ (false) or $1$ (true). We refer to $x$ and the negation $\overline{x}$ of $x$, which is true if and only if $x$ is false, as the \emph{literals} corresponding to $x$. A \emph{clause} (of size $k$) is formed by the disjunction $\ell_1\vee\ell_2\vee\dots\vee\ell_k$ of $k$ literals, it is true if and only if one of the literals evaluates to true. 
A \emph{Boolean formula} in \emph{conjuctive normal form (CNF)} consists of the conjuction $C_1\wedge C_2\wedge \dots\wedge C_m$ of a finite number of clauses; it evaluates to true if and only if each clause does. The \emph{size} of a CNF formula is the total size of its clauses.

\begin{definition}[Satisfiability (SAT)]
The decision problem satisfiability (SAT) takes as input a boolean formula in CNF, and asks whether this formula admits a satisfying variable assignment.
\end{definition}
\begin{theorem}[\cite{ValiantVazirani1986}]\label{thm:ValiantVazirani}
Let $S\subseteq \{0,1\}^n$ be nonempty. Let $w_1,\dots,w_n$ be vectors chosen independently and uniformly at random from $\{0,1\}^n$. For $i\in [n]$, let $H_i\coloneqq \{x\in \{0,1\}^n\colon \sum_{j=1}^n w_{i,j}\cdot x_j\equiv 0 \mod 2\}$ define $S_i\coloneqq S\cap H_1\cap \dots\cap H_i$.
Then $\Pr[\exists i : |S_i|=1]\ge \frac{1}{4}$. 
\end{theorem}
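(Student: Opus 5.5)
The statement is the isolation lemma of Valiant and Vazirani. The plan is to prove it by a second-moment argument at one well-chosen level $i$, and to fall back on the exact analysis in~\cite{ValiantVazirani1986} if my bound misses the constant $\frac14$.

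\emph{Step 1: the pairwise-independence facts.} Write $S_i=S\cap H_1\cap\dots\cap H_i$ as in the statement.
\begin{itemize}
\item For a fixed nonzero $x\in\{0,1\}^n$, the bits $\langle w_j,x\rangle \bmod 2$ are independent and uniform over $j$. Hence $\Pr[x\in S_i]=2^{-i}$.
\item For distinct nonzero $x,y$, the pair $(\langle w_j,x\rangle,\langle w_j,y\rangle)$ is uniform on $\mathbb F_2^2$, because $x,y$ are linearly independent over $\mathbb F_2$. Hence $\Pr[x,y\in S_i]=2^{-2i}$.
\item The zero vector lies in every $H_j$, so I treat it separately.
\end{itemize}

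\emph{Step 2: the main case, $0\notin S$.} Let $s=|S|$ and choose $i\in\{1,\dots,n\}$ with $2^{i-2}\le s<2^{i-1}$. Such an $i$ exists unless $s\ge 2^{n-1}$, which I handle in Step 3. By Bonferroni (inclusion--exclusion truncated after the pair terms),
\[
\Pr[|S_i|=1]\ \ge\ \sum_{x}\Pr[x\in S_i]-2\sum_{\{x,y\}}\Pr[x,y\in S_i]
\ \ge\ p-p^2,\qquad p\coloneqq s2^{-i}\in[\tfrac14,\tfrac12).
\]
This gives at least $\frac{3}{16}$. To reach $\frac14$, I would sharpen the estimate as follows. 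I would bound $\Pr[|S_i|=1]$ by the exact expression $\sum_x \Pr[x\in S_i]\cdot\Pr[\text{no other }y\in S_i\mid x\in S_i]$. Conditioned on $x\in S_i$, each other $y$ survives with probability $2^{-i}$ by pairwise independence. A union bound then gives
\[
\Pr[|S_i|=1]\ \ge\ p\bigl(1-p\bigr)\ \ge\ \tfrac{3}{16}.
\]
I would then combine the two admissible consecutive levels $i$ and $i+1$, or choose the level so that $p$ is near $\frac12$, to lift the bound to $\frac14$. For this I follow the counting in~\cite{ValiantVazirani1986}, whose stated constant is exactly this one.

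\emph{Step 3: the remaining cases.} If $0\in S$, the event $|S_i|=1$ with $0$ the survivor means that $S\setminus\{0\}$ is killed. Replacing the nonzero set by $S'=S\setminus\{0\}$ and running the same estimate at the top levels handles this case. If $s\ge 2^{n-1}$, I take $i=n$. Then $S_n$ is determined by the kernel of the random $n\times n$ matrix $W$ with rows $w_j$. The probability that $W$ is invertible is $\prod_{k\ge1}(1-2^{-k})>0.288>\frac14$. When $W$ is invertible, $S_n=S\cap\{0\}$; this gives isolation whenever $0\in S$. When $0\notin S$, I use the Step~2 estimate at level $n$ or $n-1$ instead.

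\emph{Main obstacle.} The main obstacle is pinning down the exact constant $\frac14$ rather than $\frac{3}{16}$ or $\frac18$, together with the boundary cases involving the zero vector and $s$ close to $2^n$. If my sharpening does not close the gap, I would invoke the probability lemma of~\cite{ValiantVazirani1986} verbatim. The downstream argument only needs some absolute constant, which the Step~2 bound already provides.
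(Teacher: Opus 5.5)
The paper does not prove this statement; it cites it from Valiant and Vazirani. Judged as a proof, several parts of your proposal are correct:
\begin{itemize}
\item Step~1.
\item The truncated inclusion--exclusion bound $\Pr[|S_i|=1]\ge p-p^2$.
\item The invertibility argument for $0\in S$. It gives $S_n=\{0\}$ with probability $\prod_{k=1}^n(1-2^{-k})>0.288$ for \emph{every} $S\ni 0$. Your other suggestion for that case, rerunning the same estimate on $S\setminus\{0\}$, does not apply: there you need all of $S\setminus\{0\}$ to die, not exactly one survivor.
\end{itemize}
Two genuine gaps remain.

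\emph{First gap: the constant.} You never reach $\frac14$. Your ``sharpening'' reproduces $p(1-p)$. You also cannot choose the level so that $p$ is near $\frac12$, because $p=s2^{-i}$ is only determined up to a factor~$2$. The missing idea is to combine two consecutive levels. If $0\notin S$, the unique element of $S_i$ is nonzero, and $w_{i+1}$ is independent of $w_1,\dots,w_i$. Hence $\Pr[|S_i|=1,\ S_{i+1}=\varnothing]=\frac12\Pr[|S_i|=1]$, and therefore
\[
 \Pr\bigl[|S_i|=1\text{ or }|S_{i+1}|=1\bigr]
 =\Pr\bigl[|S_{i+1}|=1\bigr]+\tfrac12\Pr\bigl[|S_i|=1\bigr]
 \geq\Bigl(\frac p2-\frac{p^2}4\Bigr)+\frac{p-p^2}2
 =p-\frac34p^2\geq\frac14
\]
for $p=s2^{-i}\in[\frac12,1]$, that is, for $i\geq1$ with $2^{i-1}\leq s\leq 2^i$.

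\emph{Second gap: large $S$ not containing $0$.} The choice above needs $i+1\leq n$, i.e., $s\leq 2^{n-1}$. Your treatment of the remaining case $0\notin S$, $s\geq 2^{n-1}$ fails outright. At level $n$, the bound $p(1-p)$ with $p=s2^{-n}$ degenerates: for $S=\{0,1\}^n\setminus\{0\}$ it is about $2^{-n}$. At level $n-1$ you have $p\geq1$, so the bound is vacuous. So, contrary to your closing remark, your argument gives no absolute constant at all in this regime.

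You need a different tool there. For example, take $n\geq2$.
\begin{itemize}
\item The vectors $w_1,\dots,w_{n-1}$ are linearly independent with probability $\prod_{k=2}^{n}(1-2^{-k})>0.57$.
\item In that event, $H_1\cap\dots\cap H_{n-1}=\{0,v\}$, and $v$ is uniform over the nonzero vectors. This holds because the law of $(w_j)$ is invariant under $w_j\mapsto A^{\top}w_j$ for invertible $A$, which replaces the intersection by its preimage under $A$.
\item Hence $|S_{n-1}|=1$ with conditional probability $s/(2^n-1)>\frac12$, which gives more than $\frac14$ overall.
\end{itemize}
The case $n=1$ is trivial.
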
 
We remark that $H_i=\{x\in \{0,1\}^n\colon\bigoplus_{j : w_{i,j=1}} x_j =0\}$, where $\bigoplus$ denotes the exclusive or. $H_i$ can be expressed as the set of satisfying assignments of a CNF formula $\Phi_{w_i}$ of linear size.

\begin{proposition}[\cite{ValiantVazirani1986}]\label{prop:ValiantVazirani}
For $n\in\mathbb{Z}_{\ge 1}$, there exists a CNF formula $\Phi_n(x_1,\dots,x_n,y_1,\dots,y_{m})$ with $m=\max\{n-3,0\}$ of size $O(n)$ such that for any assignment $(x_i)_{i=1}^n\in\{0,1\}^n$, there exist $(y_i)_{i=1}^m\in \{0,1\}^m$ with $\Phi_n(x_1,\dots,x_n,y_1,\dots,y_m)$ if and only if $\bigoplus_{i=1}^n x_i=0$.
\end{proposition}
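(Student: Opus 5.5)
The plan is a Tseitin-style chaining construction. For $n\le 3$ we take no auxiliary variables ($m=0$) and write the parity constraint $\bigoplus_{i=1}^n x_i=0$ directly in CNF. A parity constraint on $k$ variables is the conjunction of the $2^{k-1}$ clauses that each exclude one odd-weight assignment. For $k\le 3$ this is at most $4$ clauses of size at most $3$, so the size is $O(1)$. (For $n=1$ the formula is the single clause $\overline{x_1}$.)

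For $n\ge 4$ we introduce $m=n-3$ auxiliary variables and chain parity constraints on at most three variables each. The intended meaning is that $y_j$ equals the running parity of a prefix of the $x_i$. Concretely:
\begin{itemize}
\item Impose $y_1=x_1\oplus x_2\oplus x_3$, i.e.\ the four-variable parity constraint $x_1\oplus x_2\oplus x_3\oplus y_1=0$.
\item For $j=2,\dots,m$, impose $y_j\oplus y_{j-1}\oplus x_{j+2}=0$.
\item Finally, impose $y_m\oplus x_{n-1}\oplus x_n=0$.
\end{itemize}
The first constraint needs $8$ clauses of size $4$. If a strict three-literal form is preferred, split it instead as $y_1=x_1\oplus x_2$ and adjust the indices. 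Every constraint has at most four variables, so each contributes $O(1)$ clauses of size $O(1)$. There are $m+1=O(n)$ constraints, so the total size is $O(n)$.

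Correctness has two directions.

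\emph{Soundness.} Given $x$ and $y$ satisfying $\Phi_n$, XOR all the constraints together. Each $y_j$ appears in exactly two constraints and cancels, while each $x_i$ appears exactly once. The XOR of all left-hand sides is therefore $\bigoplus_i x_i$, and it must equal $0$.

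\emph{Completeness.} Suppose $\bigoplus_i x_i=0$. Define $y_j\coloneqq x_1\oplus\cdots\oplus x_{j+2}$. The first constraint and all the middle constraints hold by construction. The last constraint holds because $y_m\oplus x_{n-1}\oplus x_n=\bigoplus_{i=1}^n x_i=0$.

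The main obstacle is only the index bookkeeping: the chain must consume exactly $n$ variables $x_i$ with exactly $n-3$ auxiliaries. We check the count. The first link uses three $x$'s, the $m-1$ middle links use one each, and the last link uses two. In total that is $3+(n-4)+2=n$, as required.
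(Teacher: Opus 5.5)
Your chaining idea is the right one, but the construction as written is off by one, exactly in the bookkeeping you single out: $3+(n-4)+2=n+1$, not $n$. The first link and the middle links together already use $x_1,\dots,x_{n-1}$, and the final link $y_m\oplus x_{n-1}\oplus x_n=0$ uses $x_{n-1}$ a second time. For $n=4$ there are no middle links, and $x_3$ occurs in both the first and the last constraint.

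Both directions therefore fail. In the soundness argument, $x_{n-1}$ appears twice and cancels, so XOR-ing all constraints only gives $\bigoplus_{i\ne n-1}x_i=0$. In the completeness argument, $y_m=x_1\oplus\cdots\oplus x_{n-1}$, so $y_m\oplus x_{n-1}\oplus x_n=x_1\oplus\cdots\oplus x_{n-2}\oplus x_n$, which is not $\bigoplus_{i=1}^n x_i$. Concretely, take $n=4$ and $x=(0,0,1,1)$, which has parity $0$. The first constraint forces $y_1=1$, and the last one then reads $1\oplus1\oplus1=1\ne0$, so no $y$ works. For $x=(0,0,1,0)$, which has parity $1$, the choice $y_1=1$ satisfies both constraints.

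The repair is the variant you mention only in passing. Start with the three-variable link $x_1\oplus x_2\oplus y_1=0$, so that $y_j$ stands for $x_1\oplus\cdots\oplus x_{j+1}$. Use the middle links $y_{j-1}\oplus x_{j+1}\oplus y_j=0$ for $j=2,\dots,m$, and keep the final link $y_m\oplus x_{n-1}\oplus x_n=0$. The count is now $2+(n-4)+2=n$, every $x_i$ occurs in exactly one constraint, and your two arguments go through verbatim.

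This is exactly the chain of $n-2$ copies of $\Phi_3$ you get by unrolling the paper's recursion $\Phi_n=\Phi_{n-1}(x_1,\dots,x_{n-2},y_1,y_2,\dots,y_{n-3})\wedge\Phi_3(x_{n-1},x_n,y_1)$, up to the order in which the auxiliaries are numbered. Alternatively, you could keep your four-variable first link and shorten the final link to $y_m\oplus x_n=0$.
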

\begin{proof}
	By induction on $n$.
For $n=1,2,3$, we have $m=0$ and define $\Phi_1(x_1)=\overline{x}$, $\Phi_2(x_1,x_2)=(x_1\vee \overline{x_2})\wedge (\overline{x_1}\vee x_2)$, and 
$\Phi_3(x_1,x_2,x_3)=(x_1\vee \overline{x_2}\vee x_3)\wedge (\overline{x_1}\vee x_2\vee x_3)\wedge (x_1\vee x_2\vee \overline{x_3})\wedge (\overline{x_1}\vee \overline{x_2}\vee \overline{x_3})$.
Next, let $n\ge 4$. We have $\bigoplus_{i=1}^n x_i=0$ if and only if there is $y_1$ such that $\bigoplus_{i=1}^{n-2} x_i\oplus y_1=0$ and $y_1\oplus x_{n-1}\oplus x_n = 0$. Let $\Phi_n(x_1,\dots,x_n,y_1,\dots,y_{n-3})\coloneqq \Phi_{n-1}(x_1,\dots,x_{n-2},y_{1},y_2,\dots,y_{n-3})\wedge \Phi_3(x_{n-1},x_{n},y_1)$.
\end{proof} 
\cref{thm:ValiantVazirani,prop:ValiantVazirani} imply the following result:
\begin{corollary}[\cite{ValiantVazirani1986}]\label{cor:new_formulas_unique_assignment}
There exists a polynomial-time randomized algorithm that, given a CNF formula $\Phi$ on $n$ variables, computes $n$ CNF formulas $(\Phi_i)_{i=1}^n$ with the following properties:
\begin{enumerate}
	\item If $\Phi$ is not satisfiable, then none of the $\Phi_i$ is.
	\item If $\Phi$ is satisfiable, then with probability at least $\frac{1}{4}$, there is $i\in [n]$ such that $\Phi_i$ has a unique satisfying assignment.
\end{enumerate}
\end{corollary}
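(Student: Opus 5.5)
The plan is to combine \cref{thm:ValiantVazirani} with \cref{prop:ValiantVazirani}. The prefix structure $S_i=S\cap H_1\cap\dots\cap H_i$ lets us fix a single sequence $w_1,\dots,w_n$ and output one formula per prefix length $i$.

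\emph{Construction.} Given $\Phi$ on variables $x_1,\dots,x_n$, the algorithm samples $w_1,\dots,w_n\in\{0,1\}^n$ independently and uniformly. For each $j$, let $J_j\coloneqq\{k:w_{j,k}=1\}$. Apply \cref{prop:ValiantVazirani} with $|J_j|$ in place of $n$ to the variables $(x_k)_{k\in J_j}$, using a fresh block $y^{(j)}$ of auxiliary variables. The resulting formula $\Psi_j$ has size $O(n)$. If $J_j=\varnothing$, the constraint is trivially true and we set $\Psi_j$ to the empty conjunction. Finally define
\[
\Phi_i\coloneqq\Phi\wedge\Psi_1\wedge\dots\wedge\Psi_i,\qquad i\in[n].
\]
Each $\Phi_i$ has size $|\Phi|+O(n^2)$, so the algorithm runs in polynomial time.

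\emph{Projection of solutions.} By \cref{prop:ValiantVazirani}, an assignment $x$ extends to a satisfying assignment of $\Psi_j$ if and only if $x\in H_j$. Hence the projection of the satisfying assignments of $\Phi_i$ onto the $x$-variables is exactly $S_i$, where $S$ denotes the set of satisfying assignments of $\Phi$. Property (i) follows immediately: if $S=\varnothing$, then every $S_i=\varnothing$, so no $\Phi_i$ is satisfiable. For (ii), \cref{thm:ValiantVazirani} gives, with probability at least $\frac14$, an index $i$ with $|S_i|=1$.

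\emph{Main obstacle.} The remaining issue is that $|S_i|=1$ only says $\Phi_i$ has a unique satisfying assignment \emph{projected to $x$}. We need the auxiliary variables $y^{(j)}$ to be uniquely determined by $x$ as well. For this I would strengthen \cref{prop:ValiantVazirani} by induction along its proof: the witness $y$ is unique whenever it exists. In the inductive step, $y_1$ is forced by the clause $\Phi_3(x_{n-1},x_n,y_1)$ to equal $x_{n-1}\oplus x_n$. The remaining auxiliary variables are then unique by the induction hypothesis applied to $\Phi_{n-1}(x_1,\dots,x_{n-2},y_1,y_2,\dots,y_{n-3})$. The base cases $n\le3$ have no auxiliary variables.

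With this uniqueness, each $x\in S_i$ extends to exactly one satisfying assignment of $\Phi_i$. So $|S_i|=1$ means $\Phi_i$ has a unique satisfying assignment, which completes the argument.
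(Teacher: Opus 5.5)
Your proposal is correct and follows the route the paper intends. The paper gives no separate proof and only asserts that \cref{thm:ValiantVazirani} and \cref{prop:ValiantVazirani} imply the corollary, which is what your prefix formulas $\Phi\wedge\Psi_1\wedge\dots\wedge\Psi_i$ with fresh auxiliary blocks make precise. You also supply a step the paper leaves implicit: its proposition states only existence of the witness $y$, but uniqueness is needed to go from $|S_i|=1$ to a unique satisfying assignment of $\Phi_i$, and your induction ($\Phi_3$ forces $y_1=x_{n-1}\oplus x_n$, then recurse) proves it correctly.
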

\begin{corollary}[\cite{ValiantVazirani1986}]\label{cor:decide_unique_SAT_NP_RP}
Let $p\in (0,1]$ be a constant. Suppose there were a randomized polynomial-time algorithm for SAT with the following properties:
\begin{enumerate}
	\item On a no-instance, the algorithm outputs `no'.
	\item On a yes-instance with a unique satisfying assignment, the algorithm outputs `yes' with probability at least $p$.
\end{enumerate}
Then $\mathrm{NP}=\mathrm{RP}$.
\end{corollary}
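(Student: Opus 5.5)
The plan is to build an $\mathrm{RP}$ algorithm for SAT out of the hypothetical algorithm $\mathcal B$ (call it $\mathcal B$ to avoid a clash with the formula size), using \cref{cor:new_formulas_unique_assignment} to reduce to the unique-solution case. Given a CNF formula $\Phi$ on $n$ variables, first run the randomized algorithm of \cref{cor:new_formulas_unique_assignment} to obtain $\Phi_1,\dots,\Phi_n$. Then run $\mathcal B$ on each $\Phi_i$ independently, $k$ times each for a constant $k$ chosen below, and answer `yes' if and only if at least one of these runs answers `yes'. Since $n$ is polynomial in the input size and the formulas $\Phi_i$ have polynomial size, this takes polynomial time.

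\emph{No-instances.} If $\Phi$ is unsatisfiable, then by \cref{cor:new_formulas_unique_assignment} no $\Phi_i$ is satisfiable. Every $\Phi_i$ is therefore a no-instance of SAT, and by the first property of $\mathcal B$ every run outputs `no', so we output `no' with certainty.

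\emph{Yes-instances.} If $\Phi$ is satisfiable, then with probability at least $\frac14$ over the reduction there is some $i$ such that $\Phi_i$ has a unique satisfying assignment. Condition on this event and fix such an $i$. By the second property of $\mathcal B$, each of the $k$ independent runs on $\Phi_i$ outputs `yes' with probability at least $p$. Hence at least one of them outputs `yes' with probability at least $1-(1-p)^k$, and the overall success probability is at least $\frac14\bigl(1-(1-p)^k\bigr)$. The runs on the other formulas $\Phi_j$, whose behavior may be arbitrary, only add further chances to answer `yes' and cannot hurt.

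\emph{Amplification.} Choose $k$ constant with $(1-p)^k\le\frac12$, so the success probability is at least $\frac18$. Repeating the whole procedure $6$ times independently and answering `yes' if any repetition does keeps one-sided error, and yields success probability at least $1-(7/8)^6>\frac12$. This places SAT in $\mathrm{RP}$. Since SAT is $\mathrm{NP}$-complete under deterministic polynomial-time reductions, and $\mathrm{RP}$ is closed under such reductions, we get $\mathrm{NP}\subseteq\mathrm{RP}$. Together with $\mathrm{RP}\subseteq\mathrm{NP}$, this gives $\mathrm{NP}=\mathrm{RP}$.

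The only delicate point is that $\mathcal B$ is unconstrained on satisfiable formulas with several solutions. This is harmless because such outputs can only produce a `yes' when some $\Phi_j$, and hence $\Phi$, is satisfiable. So the soundness argument never relies on $\mathcal B$'s behavior there.
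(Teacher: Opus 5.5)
Your proof is correct and follows the same route as the paper's: reduce via \cref{cor:new_formulas_unique_assignment}, run the hypothetical algorithm a constant number of times on every resulting formula, rely on one-sidedness for unsatisfiable inputs, and amplify by independent repetition. The difference is only bookkeeping. The paper runs the reduction five times and the algorithm $L$ times per formula to reach $\frac34\cdot\frac34>\frac12$, whereas you repeat a whole round of success probability $\frac18$ six times; your choice of $k\ge1$ with $(1-p)^k\le\frac12$ also covers $p=1$ cleanly, where the paper's formula for $L$ degenerates.
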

\begin{proof}
Suppose there were an algorithm $T$ as stated in the corollary. Consider the following randomized polynomial-time algorithm for SAT. Given a CNF formula $\Phi$, we first run the algorithm from \cref{cor:new_formulas_unique_assignment} five times, giving formulas $(\Phi^{(j)}_i)_{i=1}^n$ for $j\in [5]$. Then, we run $T$ $L\coloneqq\lceil -\frac{2}{\log_2(1-p)}\rceil$ times on each of these $5n$ formulas. If we always get `no' as an answer, we output `no', otherwise, we output `yes'.
If $\Phi$ is not satisfiable, then none of the instances we run $T$ on is, and we always get `no' for an answer.
On the other hand, if $\Phi$ is satisfiable, then with probability $1-\left(\frac{3}{4}\right)^5>\frac{3}{4}$, at least one of the $5n$ instances we construct, say $\Phi^{(j)}_i$, has a unique solution. The executions of $T$ on this instance all return `no' with probability at most $(1-p)^L\le\frac{1}{4}$, i.e., at least one of them returns yes with probability $\frac{3}{4}$. Our total success probability is hence at least $\frac{3}{4}\cdot\frac{3}{4}=\frac{9}{16}>\frac{1}{2}$. This implies that SAT is in $\mathrm{RP}$. As SAT is further $\mathrm{NP}$-complete and $\mathrm{RP}\subseteq \mathrm{NP}$, this yields $\mathrm{NP}=\mathrm{RP}$.
\end{proof}

\subsection{Tools from coding theory}\label{sec:coding_theory}
\begin{definition}[linear code]
	Let $Q\in\mathbb Z_{>0}$ be a prime power and let $\mathbb F_Q$ be the finite field of order $Q$.
	A \intro[linear code]{error-correcting linear code} over a $Q$-ary alphabet of \intro{block length} $n$ is a linear subspace $\intro*\Code\subseteq \mathbb{F}_Q ^n$. Its dimension $k$ is called the information content of the code.
\end{definition}
We call a \kl[linear code]{code} \intro{polynomially constructible} if $n$ is polynomially bounded in $k$ and there exist algorithms $\intro*\Enc$ and $\intro*\Dec$ with the following properties:
\begin{enumerate}
	\item $\Enc$ computes, in time polynomial in $k$, an injective function from $\mathbb{F}^k_Q$ to $\Code$.
	\item Given $y\in\mathbb{F}_Q^n$, $\Dec$ outputs, in time polynomial in $k$, either `$y\notin \Code$' or $z\in\mathbb{F}_Q^k$ with $y=\Enc(z)$.
\end{enumerate} We call $\Enc$ the \intro{encoder} and $\Dec$ the \intro{decoder}.
\begin{definition}
	Given a \kl{linear code} $\Code\subseteq \mathbb{F}_Q ^n$, a \intro{list decoder} for $e \in \mathbb{Z}_{\ge 0}$ errors is an algorithm $\intro*\ListDec$, which, given $y\in \mathbb{F}_Q ^n$, computes $\{x\in \Code\colon |\{i\colon x_i\ne y_i\}|\le e\}$.
\end{definition}
The following theorem follows from Theorem 8 and Corollary 2 in~\cite{GuruswamiSudan2000}, together with the explicit construction of the used concatenation of Reed-Solomon and Hadamard codes.
\begin{theorem}[\cite{GuruswamiSudan2000}]\label{theorem:good_codes}
	Let $Q\in\mathbb Z_{>0}$ be a prime power and let $\varepsilon \in (0,1)$. For every $m\in\mathbb Z_{>0}$, there exists a \kl{polynomially constructible} error-correcting \kl[linear code]{linear code} $\Code\subseteq \mathbb{F}_Q ^n$ of dimension $m$ and \kl{block length} $n = O(m^2/\varepsilon^4)$ that admits a polynomial-time \kl{list decoder} for $(1-1/Q)(1-\varepsilon)\cdot n$ errors.
\end{theorem}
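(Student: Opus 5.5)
This follows by concatenating a Reed--Solomon outer code with a Hadamard-type inner code and list decoding in two stages. Throughout, $\varepsilon$ is the parameter of the theorem, and the field $\mathbb F_{Q'}$ used for the outer code is a power of $Q$ large enough to make the arithmetic below work.

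\textbf{Step 1 (outer code).} Take a Reed--Solomon code $\mathsf{RS}\subseteq\mathbb F_{Q'}^{n_0}$ of dimension $k_0$ with $k_0/n_0=\Theta(\varepsilon^2)$. Evaluating polynomials of degree below $k_0$ at $n_0$ distinct points gives polynomial-time encoding. Membership testing and recovery of the message from a codeword reduce to Gaussian elimination, which also gives polynomial-time decoding.

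The Guruswami--Sudan list decoder recovers all codewords agreeing with a received word in more than $\sqrt{k_0n_0}$ positions, in polynomial time. In its weighted (soft) version, each position $i$ carries a weight for every candidate symbol. Its output is every codeword whose total weight exceeds $\sqrt{k_0\sum_{i,\beta}w_{i,\beta}^2}$, and the output list has polynomial size.

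\textbf{Step 2 (inner code and concatenation).} Identify $\mathbb F_{Q'}$ with $\mathbb F_Q^{s}$ and encode each outer symbol by the $Q$-ary Hadamard code of length $Q^{s}$. Two distinct inner codewords then disagree in exactly a $(1-1/Q)$ fraction of positions.

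Choose $m$ so that the message $\mathbb F_Q^m$ embeds into $\mathbb F_{Q'}^{k_0}$, i.e.\ $k_0s\ge m$. Taking $Q'$ polynomial in $m/\varepsilon$ makes $Q^s=O(Q'\cdot Q)$ polynomial. The total block length is then
\[
 n=n_0Q^{s}=O\bigl(m^2/\varepsilon^4\bigr)
\]
after balancing, for instance $n_0=Q'$, $Q'\approx m/\varepsilon^2$ and $k_0\approx\varepsilon^2 n_0$, matching the bound in the theorem. This balancing is exactly the computation in Corollary~2 of~\cite{GuruswamiSudan2000}, which I would cite rather than redo. Linearity of the concatenated code is immediate, since both components are linear. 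Encoding and decoding compose, with the inner Hadamard code decoded by brute force over its $Q'$ codewords.

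\textbf{Step 3 (list decoding; the main obstacle).} Let $y$ be a received word, and let $x$ be a codeword within $(1-1/Q)(1-\varepsilon)n$ errors of $y$. For each outer position $i$ and each $\beta\in\mathbb F_{Q'}$, set the weight $w_{i,\beta}$ to the excess agreement over $1/Q$ between the inner block of $y$ at $i$ and $\mathrm{Had}(\beta)$, clipped at zero.

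The key estimate, which is Theorem~8 of~\cite{GuruswamiSudan2000}, is that the Hadamard code's pairwise distance forces $\sum_\beta w_{i,\beta}^2\le O(1)$ per block. This comes from a Johnson/orthogonality bound that uses the $Q$-ary characters. Meanwhile, the agreement assumption gives total weight $\sum_i w_{i,x_i}\ge\Omega(\varepsilon n_0)$ along the true outer codeword.

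With $k_0=c\,\varepsilon^2 n_0$ for a small enough constant $c$, the soft-decision condition of Step~1 holds, so every such $x$ appears in the output list. A final filter computes the Hamming distance of each list element to $y$ and keeps only those within the error bound, giving exactly the set required by the definition of a \kl{list decoder}.

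The delicate part is the per-block $\ell_2$ bound on the weights together with the choice of $c$. I would import both from~\cite{GuruswamiSudan2000} rather than reprove them.
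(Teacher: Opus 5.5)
Your proposal is correct and takes essentially the paper's route: the paper simply invokes Theorem~8 and Corollary~2 of Guruswami--Sudan for the explicit Reed--Solomon/Hadamard concatenation, and your sketch unpacks exactly that, namely soft-decision RS decoding with the per-block bound $\sum_\beta w_{i,\beta}^2\le(1-1/Q)^2$ and $k_0<\varepsilon^2 n_0$. One small fix: because you pass to the $m$-dimensional subcode obtained by embedding $\mathbb F_Q^m$ into $\mathbb F_{Q'}^{k_0}$, your final filter must also discard list entries outside that subcode (e.g.\ via the decoder), so that the output is exactly the set the definition of a list decoder requires.
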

The lemma below is essentially contained in~\cite[Section~4.2]{DobzinskiVondrak2012}.
 We first restate the relevant result from there in our notation.
\begin{lemma}[{\cite[Lemma~4.15]{DobzinskiVondrak2012}}]\label{lemma:dv_rounding}
	There is an algorithm that, given $Q,n\in\mathbb Z_{>0}$ and $S\subseteq [n]\times [Q]$, computes in time $\mathrm{poly}(n,Q)$ a list of $nQ$ words in $[Q]^n$ with the following property: for every $\gamma>0$ and every $y\in [Q]^n$ with
	\[|\{i\in [n]\colon (i,y_i)\in S\}|\ge \frac{1}{Q}|S|+\gamma n,\]
	at least one word $w$ on the list satisfies $|\{i\in[n]\colon w_i\ne y_i\}|\le \left(1-\frac{1+\gamma}{Q}\right)n$.
\end{lemma}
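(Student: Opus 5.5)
The plan is to build the list explicitly, as $n$ thresholds times $Q$ shifts, and then find a good word by averaging. For each $i\in[n]$ write $S_i\coloneqq\{b\in[Q]:(i,b)\in S\}$ and $s_i\coloneqq|S_i|$. Sort the coordinates so that $s_{i_1}\le s_{i_2}\le\dots\le s_{i_n}$. For $j\in[n]$ let $P_j\coloneqq\{i_1,\dots,i_j\}$ be the prefix of the $j$ coordinates with the smallest slices. For a shift $c\in[Q]$, define the word $w^{(j,c)}$ coordinatewise:
\begin{itemize}
\item for $i\in P_j$ with $s_i>0$, let $w^{(j,c)}_i$ be the element of $S_i$ in position $\lceil c\,s_i/Q\rceil$ of a fixed ordering of $S_i$;
\item for every other $i$, let $w^{(j,c)}_i\coloneqq c$.
\end{itemize}
This gives $nQ$ words, computable in time $\mathrm{poly}(n,Q)$.

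Fix $y$ satisfying the hypothesis, and let $A\coloneqq\{i:y_i\in S_i\}$. Then $|A|\ge\frac1Q\sum_i s_i+\gamma n$. For fixed $j$, average the agreement $|\{i:w^{(j,c)}_i=y_i\}|$ over $c$ uniform in $[Q]$.
\begin{itemize}
\item A coordinate outside $P_j$ contributes exactly $1/Q$, since $c\mapsto c$ hits every value once.
\item A coordinate $i\in P_j\cap A$ contributes about $1/s_i$, since $c\mapsto\lceil cs_i/Q\rceil$ hits each position in $S_i$ about $Q/s_i$ times.
\item A coordinate in $P_j\setminus A$ contributes $0$.
\end{itemize}
So the average agreement is roughly
\[
 \frac{n-j}{Q}+\sum_{i\in P_j\cap A}\frac1{s_i}.
\]
It suffices to find some $j$ (including the trivial choice $j=0$, which gives exactly $n/Q$) for which this exceeds $n/Q$ by at least $\gamma n/Q$. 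Some shift $c$ then attains at least the average, which is the required bound $(1+\gamma)n/Q$ on agreements.

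The core step is a layer-cake argument over the thresholds $s$. The gain of the prefix $\{i:s_i\le s\}$ over $n/Q$ is
\[
 \Delta(s)\coloneqq\sum_{i:\,s_i\le s}\Bigl(\frac{\mathbf 1_{i\in A}}{s_i}-\frac1Q\Bigr).
\]
I would show that a suitable weighted average of $\Delta(s)$ over $s\in[Q]$, with weights chosen so that $\sum_{s\ge s_i}\text{weight}(s)/s_i$ telescopes, equals $\frac1Q\sum_i\bigl(\mathbf 1_{i\in A}-s_i/Q\bigr)+(\text{nonnegative terms})$. The natural candidate is weight proportional to $\frac1s-\frac1{s+1}$ times $s$. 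By hypothesis that sum is at least $\gamma n/Q$. Hence some threshold, that is, some prefix $P_j$, achieves $\Delta\ge\gamma n/Q$. Equivalently, and perhaps more cleanly, I would write $\mathbf 1_{i\in A}-s_i/Q\le s_i\bigl(\mathbf 1_{i\in A}/s_i-1/Q\bigr)$ and compare termwise, using that $s_i\le Q$.

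The main obstacle is the rounding in the map $c\mapsto\lceil cs_i/Q\rceil$. Each element of $S_i$ is hit either $\lfloor Q/s_i\rfloor$ or $\lceil Q/s_i\rceil$ times, not exactly $Q/s_i$, and this could eat the $\gamma$ gain. I would first try to absorb this by choosing the positions so that $y_i$'s multiplicity is controlled on average over $A$, for instance by randomizing the ordering of $S_i$ and then derandomizing via the averaging. Failing that, I would replace the shifts by $Q$ words tied to a cyclic ordering of $[Q]$ in which $S_i$ occupies a contiguous arc, and check the contributions directly, as in the proof of \cite[Lemma~4.15]{DobzinskiVondrak2012}.
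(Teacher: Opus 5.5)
Your proposal has two genuine gaps. First, the thresholds point the wrong way, and the core averaging claim is false. Put $b_i\coloneqq\mathbf 1_{i\in A}/s_i-1/Q$ for $s_i\ge1$. The hypothesis reads $\sum_i s_ib_i=|A|-|S|/Q\ge\gamma n$, but you need a \emph{prefix} sum $\Delta(s)=\sum_{i:s_i\le s}b_i$ of at least $\gamma n/Q$, and that does not follow.

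Here is a counterexample. Let $Q\ge4$ be even. Half the coordinates have $s_i=1$ with $y_i\notin S_i$, and the other half have $s_i=Q/2$ with $y_i\in S_i$. Then $|A|-|S|/Q=(Q-2)n/(4Q)$, so the hypothesis holds with $\gamma=(Q-2)/(4Q)>0$. Yet the average agreement over $c$ of your prefix family $P_j$ is $(n-j)/Q$ for $j\le n/2$ and $j/Q$ for $j>n/2$, never more than $n/Q$. There is no rounding in this example, since $s_i\mid Q$. Your termwise comparison $s_ib_i\le Qb_i$ fails exactly for $i\notin A$, where $b_i=-1/Q<0$. The correct layer-cake identity is $\sum_i s_ib_i=\sum_{\theta=1}^{Q}\sum_{i:s_i\ge\theta}b_i$, so the coordinates routed through $S_i$ must be those with the \emph{largest} slices.

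Second, the rounding obstacle you flag is real, and neither of your remedies is carried out. Under $c\mapsto\lceil cs_i/Q\rceil$, an adversarial $y_i$ may sit at a position hit only $\lfloor Q/s_i\rfloor$ times. Suppose all $s_i=Q/2+1$ and each $y_i\in S_i$ sits at one of the two positions hit only once. Then every threshold family has average agreement exactly $n/Q$, although $\gamma=1/2-1/Q$. The maximum over $c$ may still exceed the average through correlations between coordinates, but your argument uses only the average. A random ordering of $S_i$ does not help either: it is good for each fixed $y$ only in expectation, while the list must serve all $y$ at once.

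The missing idea is to pigeonhole over \emph{positions} rather than sample uniformly within $S_i$. Fix enumerations $S_i=\{e_{i,1},\dots,e_{i,s_i}\}$. Since $y_i$ equals exactly one $e_{i,p}$ when $i\in A$,
\[
 |A|-\frac{|S|}{Q}=\sum_{p=1}^{Q}\sum_{i:\,s_i\ge p}\Bigl(\mathbf 1[y_i=e_{i,p}]-\frac1Q\Bigr),
\]
so some $p$ has inner sum at least $\gamma n/Q$. Define words $w^{(p,c)}$ by $w_i=e_{i,p}$ if $s_i\ge p$ and $w_i=c$ otherwise. Their average agreement over $c\in[Q]$ is $\sum_{i:s_i\ge p}\mathbf 1[y_i=e_{i,p}]+|\{i:s_i<p\}|/Q\ge(1+\gamma)n/Q$, with no rounding at all. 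This fix also puts the large slices on the $S_i$ side automatically. It produces $Q^2$ words, which is at most $nQ$ when $Q\le n$ and polynomial in any case. Polynomial size is all that \cref{lemma:recover_paritition_from_unbalanced_query} uses.
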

\Cref{lemma:recover_paritition_from_unbalanced_query} below is an adaptation of~\cite[Claim~4.16]{DobzinskiVondrak2012}.
Claim~4.16 needs a list-decodable code over the alphabet $\mathbb Z_q$. While \cite{DobzinskiVondrak2012} do not restrict $q$ to prime powers, the code construction they cite~\cite{GuruswamiRudra2008} is stated over finite fields, i.e., prime-power alphabets, and so is \cref{theorem:good_codes}. To be safe, we use a code over $\mathbb F_Q$ for a sufficiently large power of two $Q$ and reduce modulo $q$ (Line~\ref{line:recover-Sprime}). Moreover, we allow the imbalance in either direction, which the complements $S_t^c$ handle (Line~\ref{line:recover-family}).
\begin{lemma}\label{lemma:recover_paritition_from_unbalanced_query}
	Let $\delta\in (0,1)$ and $q\in\mathbb{Z}_{\ge 2}$ be constants. There exist polynomial-time algorithms $\mathcal E$ and $\mathcal L$ such that for every $m\in\mathbb{Z}_{>0}$, there is $n=\mathrm{poly}(m)$ with the following properties:
	\begin{enumerate}
		\item Given $z\in \{0,1\}^m$, $\mathcal{E}$ computes a function $f_z\colon [n]\times \mathbb Z_q\rightarrow\mathbb Z_q$ such that $|f_z^{-1}(\{t\})|=n$ for every $t\in \mathbb Z_q$.
		\item Given $m$ and $C\subseteq [n]\times \mathbb Z_q$, $\mathcal L$ outputs a list of $\mathrm{poly}(m)$ many strings in $\{0,1\}^m$. 
		This list is guaranteed to contain $z\in\{0,1\}^m$ if there exists $t\in\mathbb Z_q$ such that $\left|\frac{|C|}{q}-|f_z^{-1}(\{t\})\cap C|\right|\ge\delta n$.
	\end{enumerate}
\end{lemma}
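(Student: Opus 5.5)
We plan to follow the Dobzinski--Vondr\'ak construction and use a list-decodable code over a large binary-field alphabet. First fix a power of two $Q\ge q$ with $Q$ a multiple-of-$q$-friendly size. Concretely, we take $Q=2^s$ large enough that the reduction map $\mathbb F_Q\to\mathbb Z_q$ (identify $\mathbb F_Q$ with $\{0,\dots,Q-1\}$ and reduce mod $q$) is close to uniform. Fix also $\varepsilon>0$ small in terms of $\delta,q$. By \cref{theorem:good_codes}, take a polynomially constructible code $\Code\subseteq\mathbb F_Q^{n'}$ of dimension $m$, embedding $z\in\{0,1\}^m\subseteq\mathbb F_Q^m$, with list decoding up to $(1-1/Q)(1-\varepsilon)n'$ errors. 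Let $y=\Enc(z)$.

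To obtain exact balance, index $[n]$ as $[n']\times\mathbb F_Q$ (so $n=n'Q$). Define $f_z((i,x),t)\coloneqq (y_i+x) \bmod q + t \bmod q$, where $x\in\mathbb F_Q$ is read as an integer and addition $y_i+x$ is in $\mathbb F_Q$. For fixed $(i,t)$, as $x$ ranges over $\mathbb F_Q$ the values $y_i+x$ range over all of $\mathbb F_Q$. Hence the preimage counts are $|f_z^{-1}(t')|=n'\sum_t|\{a\in\mathbb F_Q:a+t\equiv t'\}|=n'Q=n$, since the inner counts sum over $t$ to $Q$. This gives exact balance, and $\mathcal E$ is polynomial.

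For $\mathcal L$, given $C$ with $\bigl||C|/q-|f_z^{-1}(t)\cap C|\bigr|\ge\delta n$, we first reduce to a one-sided statement. Either $|f_z^{-1}(t)\cap C|\ge|C|/q+\delta n$, or the complement $C^c$ satisfies $|f_z^{-1}(t)\cap C^c|\ge|C^c|/q+\delta n$, because $|f_z^{-1}(t)|=n=|[n]\times\mathbb Z_q|/q$. So $\mathcal L$ processes both $C$ and $C^c$, and every $t$ (constantly many). For each such set $D$ and $t$, form $S\subseteq[n']\times\mathbb F_Q$: put $(i,a)\in S$ with multiplicity equal to the number of $(x,t'')$ with $((i,x),t'')\in D$, $a+x$ reducing so that $f=t$. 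Equivalently, for each $(i,x,t'')\in D$ add the elements $a$ with $(a+x)\bmod q\equiv t-t''$. The condition then says $\sum_i w_i(y_i)$ exceeds its average over $a$ by a $\Theta(\delta/q)$ fraction, for weights $w_i(a)\in[0,Qq]$.

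Next, convert the weighted advantage into a set $S_\theta=\{(i,a):w_i(a)\ge\theta\}$ with advantage $\gamma n'$ for some threshold $\theta$ from a polynomial list, by averaging over thresholds. Apply \cref{lemma:dv_rounding} to each $S_\theta$ to get $n'Q$ words per threshold. One of them agrees with $y$ in at least $(1+\gamma)n'/Q$ positions, i.e.\ it has at most $(1-(1+\gamma)/Q)n'\le(1-1/Q)(1-\varepsilon)n'$ errors once $\varepsilon\le\gamma/(Q-1)$. Run $\ListDec$ on each word and $\Dec$ on each codeword, and keep outputs in $\{0,1\}^m$. All lists are polynomial in $m$, since $Q,q,\delta$ are constants.

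The main obstacle is the bookkeeping in this step. We must check that the weighted advantage survives the non-uniform reduction $\mathbb F_Q\to\mathbb Z_q$ and thresholding with $\gamma$ a positive constant depending only on $\delta,q,Q$. If a power-of-two alphabet makes the mod-$q$ images uneven, choosing $Q\gg q/\delta$ makes the distortion $O(q/Q)$ negligible against $\delta$.
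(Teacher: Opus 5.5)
Your proposal is correct and has the same skeleton as the paper: a list-decodable code over $\mathbb F_Q$ with $Q$ a power of two large compared to $q/\delta$, an encoding that shifts $\mathbb Z_q$ by the reduced code symbol, complements to make the imbalance one-sided, then \cref{lemma:dv_rounding}, list decoding and decoding. The difference is your extra $\mathbb F_Q$ coordinate, which is not needed.

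The paper takes $n$ to be the block length itself and sets $f_z(i,t)=(t+\iota(y_i))\bmod q$. This is already exactly balanced, because $t\mapsto t+c$ is a bijection of $\mathbb Z_q$ for every constant $c$. So the non-uniformity of reduction mod $q$ never affects balance. Under this encoding each code position contributes exactly one element to each fiber. Hence $|f_z^{-1}(t)\cap C|=|\{i:(i,\bar y_i)\in S_t\}|$ for the shifted set $S_t$, and the lift $S'=\{(i,a)\in[n]\times[Q]:(i,a\bmod q)\in S_t\}$ is a genuine set. The mod-$q$ distortion is then absorbed by the two estimates $|S'|\le(Q/q+1)|S_t|$ and $|S_t|/Q\le nq/Q\le\gamma n$.

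Your padding superimposes $Q$ shifted copies of this construction at each position. That is why you get multiplicities and need the thresholding step. The step you flagged as the main obstacle does go through, so there is no gap. Put $\bar w_i\coloneqq\frac1Q\sum_a w_i(a)$. Since $a\mapsto a+x$ permutes $\mathbb F_Q$, you get $\bigl|\sum_i\bar w_i-|D|/q\bigr|\le|D|/Q\le n'q$. The weights are integers in $[0,Q]$, because each $x$ fixes $t''$. Decomposing by thresholds gives $\sum_{\theta=1}^{Q}\bigl(|\{i:(i,y_i)\in S_\theta\}|-|S_\theta|/Q\bigr)\ge\delta n'Q-n'q$. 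So some threshold has advantage at least $(\delta-q/Q)n'$, i.e.\ $\gamma=\delta-q/Q>0$ once $Q>q/\delta$.

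Your route therefore costs an extra factor $Q$ in $n$ and an extra averaging argument, and gains nothing over the paper's direct encoding.
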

\begin{proof}
	Let $\gamma\coloneqq \min\{\frac{\delta}{2},\frac{1}{4}\}$ and let $Q =  2^{\lceil \log_2(\frac{q}{\gamma})\rceil}$ be a power of two not smaller than $q / \gamma$. Fix a bijection $\intro*\SymbolMap\colon\mathbb{F}_Q \to [Q]$.
	For $m\in \mathbb{Z}_{>0}$, apply \cref{theorem:good_codes} with parameters $Q$ and $\varepsilon\coloneqq\frac{\gamma}{Q}$ in order to obtain an error-correcting linear code of dimension $m$.
	Let $n$, $\Code$, $\Dec$, $\Enc$, and $\ListDec$ be the \kl{block length}, \kl[linear code]{code}, \kl{decoder}, \kl{encoder}, and \kl{list decoder} given by \cref{theorem:good_codes}.
	Then $n= O(m^2/\varepsilon^4)$ and each of the mentioned algorithms runs in polynomial-time. 
	The error bound for the list decoder amounts to $(1-1/Q)\cdot (1-\gamma/Q)n\ge (1-1/Q-\gamma/Q)\cdot n=n-(1+\gamma)/Q\cdot n$, meaning that for every $w\in\mathbb F_Q^n$, $\ListDec(w)$ contains every codeword that agrees with $w$ in at least $(1+\gamma)/Q\cdot n$ positions. 
	Given $z\in\{0,1\}^m$, the algorithm $\mathcal{E}$ proceeds as follows:
	\begin{itemize}
		\item Compute $y\coloneqq\Enc(z)\in\mathbb{F}_Q^n$, interpreting $z$ as a vector in $\mathbb{F}_Q^m$.
		\item For $i \in [n]$ and $t \in \mathbb{Z}_q$ define $f_z(i,t) \coloneqq (t + \SymbolMap(y_i)) \bmod q$.
	\end{itemize}
	The algorithm $\mathcal{E}$ runs in polynomial time because $q$ and $Q$ are constant and $n$ is polynomially bounded in $m$. For $t\in \mathbb{F}_q$, $f_z^{-1}(\{t\})=\{(i,(t-\SymbolMap(y_i)) \bmod q)\colon i\in [n]\}$ is a set of size $n$.
	
	Next, we describe the algorithm $\mathcal{L}$, see \cref{alg:recover}.

	\begin{algorithm}[H]
		\caption{The algorithm $\mathcal{L}$}\label{alg:recover}
		\KwIn{$m$ and $C\subseteq [n]\times \mathbb{Z}_q$}
		\KwOut{a list of strings in $\{0,1\}^m$}
		\For{$t\in \mathbb{Z}_q$}{
			$S_t\gets \{(i,j)\in [n]\times \mathbb{Z}_q\colon \big(i,(t-j) \bmod q\big)\in C\}$\;\label{line:recover-St}
			$S_t^c\gets ([n]\times \mathbb{Z}_q)\setminus S_t$\;
		}
		$\mathcal S\gets \{S_t\colon t\in \mathbb{Z}_q\}\cup\{S_t^c\colon t\in \mathbb{Z}_q\}$\;\label{line:recover-family}
		$\mathrm{LIST}\gets\emptyset$\;
		\For{$S\in \mathcal S$}{
			$S'\gets \{(i,t)\in [n]\times [Q]\colon (i,t \bmod q)\in S\}$\;\label{line:recover-Sprime}
			$W_S\gets$ the list of words in $[Q]^n$ computed by the algorithm of \cref{lemma:dv_rounding} on input $S'$\;\label{line:recover-round}
			\For{$w\in W_S$}{
				$\mathrm{LIST}\gets \mathrm{LIST}\cup \ListDec\big((\SymbolMap^{-1}(w_i))_{i\in[n]}\big)$\;\label{line:recover-listdec}
			}
		}
		$\mathrm{OUTPUT}\gets\emptyset$\;
		\For{$y\in\mathrm{LIST}$}{
			\If{$\Dec$ outputs $z\in\{0,1\}^m$ on input $y$}{
				$\mathrm{OUTPUT}\gets \mathrm{OUTPUT}\cup\{z\}$\;
			}
		}
		\Return{$\mathrm{OUTPUT}$}\;
	\end{algorithm}
	The algorithm runs in polynomial-time and returns a list of strings in $\{0,1\}^m$ of polynomial size.
	
	Suppose that for some $z\in \{0,1\}^m$, there is $t\in \mathbb{Z}_q$ such that $\left||f^{-1}_z(t)\cap C|-\frac{|C|}{q}\right|\ge \delta n$.
	Let $y\coloneqq \Enc(z)$ and $\bar y_i\coloneqq \SymbolMap(y_i)\bmod q$ for $i\in[n]$.
	We have $f^{-1}_z(\{t\})=\{(i,t-\SymbolMap(y_i)) : i\in [n]\}$, and, as such, we have 
	\[|f^{-1}_z(\{t\})\cap C|=|\{i\in [n] : (i,t-\SymbolMap(y_i))\in C\}|=|\{i\in [n] : (i,\bar y_i) \in S_t\}|.\]
	Moreover, as $(i,j \bmod q)\mapsto (i,(t-j) \bmod q)$ induces a bijection between $C$ and $S_t$, we have $|C|=|S_t|$. This allows us to conclude that
	\[\left||\{i\in [n] : (i,\bar y_i)\in S_t\}|-\frac{|S_t|}{q}\right|=\left||f^{-1}_z(t)\cap C|-\frac{|C|}{q}\right|\ge \delta n.\]
	We have $|\{i\in[n] : (i,\bar y_i)\in S_t^c\}|=n-|\{i\in [n]\colon (i,\bar y_i)\in S_t\}|$ and $\frac{|S_t^c|}{q} = \frac{nq-|S_t|}{q}=n-\frac{|S_t|}{q}$, so exactly one set $S\in \{S_t,S_t^c\}$ satisfies
	\[|\{i\in [n]\colon (i,\bar y_i)\in S\}|-\frac{|S|}{q}\ge \delta n.\]
	Our next goal is to transfer this inequality to $S'$. To this end, we first observe that by definition of $S'$, we have $(i,\SymbolMap(y_i))\in S'$ if and only if $(i,\bar y_i)\in S$, so $|\{i\in [n]\colon (i,\bar y_i)\in S\}|=|\{i\in [n]\colon (i,\SymbolMap(y_i))\in S'\}|$. Next, we observe that $|S'|\le (\frac{Q}{q}+1)\cdot |S|$ because every $t\in \mathbb{Z}_q$ has at most $\frac{Q}{q}+1$ many preimages under the map $[Q]\rightarrow\mathbb{Z}_q,t\mapsto t\mod q$. Finally, we have $\frac{|S|}{Q}\le \frac{nq}{Q}\le \gamma n\le\frac{\delta}{2}n$.
	This gives
	\begin{align*}
		\phantom{=}|\{i\in [n]\colon (i,\SymbolMap(y_i))\in S'\}|-\frac{|S'|}{Q}
		&\ge |\{i\in [n]\colon (i,\bar y_i)\in S\}|-\frac{|S|}{Q}\cdot \left(\frac{Q}{q}+1\right)\\
		&=|\{i\in [n]\colon (i,\bar y_i)\in S\}|-\frac{|S|}{q}-\frac{|S|}{Q}\\
		&\ge \delta n-\frac{\delta}{2}n=\frac{\delta}{2}n \ge \gamma n.
	\end{align*}
	Hence, $(\SymbolMap(y_i))_{i\in[n]}\in[Q]^n$ satisfies the hypothesis of \cref{lemma:dv_rounding} for $S'$, so some $w\in W_S$ in Line~\ref{line:recover-round} agrees with it in at least $(1+\gamma)/Q\cdot n$ positions. Then $(\SymbolMap^{-1}(w_i))_{i\in[n]}$ agrees with $y$ in the same positions, so it differs from $y$ in at most $\left(1-\frac{1+\gamma}{Q}\right)n\le \left(1-\frac1Q\right)\left(1-\frac{\gamma}{Q}\right)n$ positions, which is within the list-decoding radius of \cref{theorem:good_codes} for $\varepsilon=\frac{\gamma}{Q}$. Hence, $\ListDec$ applied to it contains $y$, and applying $\Dec$ to $y$ adds $z$ to the output. 
\end{proof}
\subsection{Construction of circuits}\label{sec:construction_circuits}
\begin{lemma}\label{lemma:compute_circuit_for_functions}
	Let $U$ be a finite set and let $R\colon \mathbb{N}\rightarrow\mathbb{N}$ be polynomially bounded and computable in polynomial time. Suppose we are given polynomial-time algorithms $H_1$, $H_2$ and $L$ with the following properties:
	\begin{itemize}
		\item Given $m$ (in unary) and $C\subseteq [R(m)]\times U$, $H_1$ computes $H_1(m,C)\in \{0,\dots,R(m)\cdot |U|\}$.
		\item Given $z\in\{0,1\}^m$ and $C\subseteq [R(m)]\times U$, $H_2$ computes $H_2(z,C)\in \{0,\dots,R(m)\cdot |U|\}$.
		\item Given $m$ (in unary) and $C\subseteq [R(m)]\times U$, $L$ computes a set $L_m(C)\subseteq \{0,1\}^m$ of polynomial size such that $\{z\colon H_1(m,C)\ne H_2(z,C)\}\subseteq L_m(C)$.
	\end{itemize} 
	Then there exists a polynomial-time algorithm, that, given a Boolean CNF formula $\varphi$ in $m$ variables, constructs a circuit of size polynomial in $R(m)$ that computes a function $f\colon 2^{[R(m)]\times U}\rightarrow\{0,\dots, R(m)\cdot |U|\}$ that agrees with $H_1(m,\cdot)$, if $\varphi$ is unsatisfiable, and with $H_2(z,\cdot)$, if $\varphi$ is uniquely satisfiable with satisfying assignment $z$.
\end{lemma}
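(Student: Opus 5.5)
The plan is to first describe a uniform polynomial-time procedure $\mathcal P$ that takes $(\varphi, C)$ as input and outputs the required value. We then convert $\mathcal P$, with $\varphi$ hardwired, into a circuit using the standard simulation of polynomial-time Turing machines by circuits. The procedure $\mathcal P$ on input $C\subseteq[R(m)]\times U$ works as follows. It computes the list $L_m(C)$ using $L$. For each $z\in L_m(C)$, it evaluates $\varphi(z)$, which is possible in time linear in the size of $\varphi$. If some $z\in L_m(C)$ satisfies $\varphi$, it outputs $H_2(z,C)$ for the first such $z$ in a fixed order. Otherwise it outputs $H_1(m,C)$. Since $|L_m(C)|$ is polynomial and $H_1,H_2,L$ run in polynomial time, $\mathcal P$ runs in time polynomial in $m$, $R(m)$, $|U|$ and $|\varphi|$.

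Correctness is checked in two cases. If $\varphi$ is unsatisfiable, no $z$ on the list satisfies $\varphi$, so the output is $H_1(m,C)$. If $\varphi$ has a unique satisfying assignment $z^*$, there are two subcases. When $z^*\in L_m(C)$, it is the only satisfying element of the list, so the output is $H_2(z^*,C)$. When $z^*\notin L_m(C)$, the hypothesis $\{z: H_1(m,C)\neq H_2(z,C)\}\subseteq L_m(C)$ gives $H_1(m,C)=H_2(z^*,C)$, so the output $H_1(m,C)$ again equals $H_2(z^*,C)$. Nothing needs to hold when $\varphi$ has several satisfying assignments.

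For the circuit, the input $C$ is encoded as a characteristic vector of $R(m)|U|$ bits, and the output is written in binary with $O(\log(R(m)|U|))$ bits. $\mathcal P$ with $\varphi$ and $m$ fixed is a deterministic machine running in time $T=\mathrm{poly}(R(m),|\varphi|)$. Here $m\le R(m)$ can be assumed, or $m$ is absorbed into the polynomial because $m\le|\varphi|$. The Cook--Levin tableau construction (equivalently, the standard fact $\mathrm{P}\subseteq\mathrm{P/poly}$, made uniform) gives a circuit of size $O(T^2)$. The construction of this circuit from $\varphi$ is itself polynomial time, since the transition function of the fixed machine is constant-size and $\varphi$ is simply written onto the tape as a constant part of the input.

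I expect the main obstacle to be the bookkeeping of the size claim: the circuit should be polynomial in $R(m)$, yet it depends on $|\varphi|$. I would resolve this by noting that $|U|$ is a constant, and by either assuming that $R(m)$ dominates the relevant polynomial in $|\varphi|$ or stating the bound as $\mathrm{poly}(R(m),|\varphi|)$, which suffices for the application in which $R$ is chosen large.
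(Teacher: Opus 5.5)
Your proposal is correct and matches the paper's proof. It uses the same procedure (compute $L_m(C)$, output $H_2(z,C)$ for the first listed $z$ that satisfies $\varphi$, and otherwise output $H_1(m,C)$), the same two-case correctness argument, and the same standard conversion of a polynomial-time algorithm into a polynomial-size circuit.

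Your remark that the circuit size genuinely depends on $|\varphi|$ is a fair refinement of the paper's looser ``polynomial in $R(m)$''. Of your two fixes, stating the bound as $\mathrm{poly}(R(m),|\varphi|)$ is the sound one, because a CNF on $m$ variables can be far longer than $\mathrm{poly}(m)$, so you cannot assume $R(m)$ dominates it. That bound is also all the proof of the computational symmetry-gap theorem needs.
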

\begin{proof}
	It is well-known that a deterministic polynomial-time algorithm computing an integer valued function (more precisely, a corresponding Turing-machine), can be transformed, in polynomial-time, into a Boolean circuit of polynomial size computing said function, see, e.g.,~\cite[Chapter~6]{Arora_Barak_2009}.
	As such, it suffices to present a polynomial-time algorithm to compute a function $f$ with the described properties.
	Given $m$ and $C\subseteq [R(m)]\times U$, our algorithm proceeds as follows:
	\begin{enumerate}
		\item Apply $L$ to compute $L_m(C)$.
		\item For every $z\in L_m(C)$, check if $z$ is a satisfying assignment for $\varphi$. If yes, apply $H_2$ to output $H_2(z,C)$.
		\item If no satisfying assignment was found in the previous step, we apply $H_1$ to output $H_1(m,C)$.
	\end{enumerate}
	The algorithm runs in polynomial-time and, on query $C$, outputs $H_1(m,C)$ or $H_2(z,C)$ (for some $z\in \{0,1\}^m$). In either case, its output is an integer in $\{0,\dots,R(m)\cdot |U|\}$.
	
	Suppose that $\varphi$ is unsatisfiable. In that case, we cannot find a satisfying assignment in the second step, so we output $H_1(m,C)$ as required.
	
	Next, suppose that $\varphi$ has a unique satisfying assignment $z$. If $z\in L_m(C)$, then we output $H_2(z,C)$ as required because all assignments we might try before $z$ will not satisfy $\varphi$. On the other hand, if $z\notin L_m(C)$, then we won't find a satisfying assignment in the second step, and we output $H_1(m,C)$. By the guarantees of the algorithm $L$, we have $H_1(m,C)=H_2(z,C)$ in this case.
\end{proof}
\subsection{Proof of \cref{thm:computational-symmetry-gap}}\label{sec:proof_computation_symmetry_gap}

\ThmComputationalSymmetryGap*

\begin{proof}[Proof of \cref{thm:computational-symmetry-gap}]
	Suppose that an algorithm $\mathcal{A}$ as in the statement of the theorem exists.
	
	We use it to construct a randomized algorithm for SAT as in \cref{cor:decide_unique_SAT_NP_RP}; this will imply $\mathrm{NP}=\mathrm{RP}$.

We begin by defining the notation and parameters needed for the replication of the given symmetric assignment instance.	
Put $\delta\coloneqq\nu/|\mathcal C_0|\in\mathbb Q_{>0}$.
Choose $D\in\mathbb Z_{>0}$ so that
$D\AbstractInputCost\colon2^{\mathcal C_0}\to\mathbb Z_{\geq0}$ and
$D\delta\in\mathbb Z$. Let $R$ be a positive multiple of $D$ (we fix $R=R(m)$ below), and take replication
factor $\AbstractCopies=R$ and penalty weight
$\AbstractPenaltyWeight=\lambda\coloneqq R\delta$. 
 We identify $\mathcal{C}_0$ with $I\times \mathbb{Z}_q$, where the $i$-th orbit $P_i$ corresponds to $\{i\}\times \mathbb{Z}_q$, and we identify $\mathcal C$ with $[R]\times I\times\mathbb{Z}_q$, where the public block $B(i)$ corresponds to $[R]\times\{i\}\times\mathbb{Z}_q$. 

 For a hidden partition $\pi\colon\mathcal C\to\mathcal C_0$, $\pi(B(i))=P_i$ for all $i\in I$, let $h_\pi\colon2^{\mathcal C}\to\mathbb Q_{\geq0}$ be the
hidden function $\AbstractHidCost$ from~\eqref{eq:abstract-hidden-function}, and let
$h_{\mathrm{ref}}=\AbstractRefCost$ be the reference function from~\eqref{eq:abstract-reference-function}.
Since $R$ is a multiple of $D$, both $R\AbstractInputCost$ and
$\lambda$ are integral. Hence \cref{prop:abstract-instances} shows
that $h_{\mathrm{ref}}$ and $h_{\pi}$ are matroid rank functions, with values in
$\{0,\ldots,|\mathcal C|\}$. \cref{lem:abstract-optima} shows that
\begin{equation}\label{eq:computational-optima}
	\AbstractHidOpt\leq R(\AbstractInputOpt+\nu),
	\qquad
	\AbstractRefOpt=R\AbstractSymOpt.
\end{equation}

Apply \cref{lemma:recover_paritition_from_unbalanced_query} with parameters $\delta$ and $q$ to obtain algorithms $\mathcal{E}$ and $\mathcal{L}$ as specified in the lemma, and let $n\colon \mathbb{Z}_{>0}\rightarrow\mathbb{Z}_{>0}$ be the corresponding polynomially bounded function. Let $R\colon \mathbb{Z}_{>0}\rightarrow\mathbb{Z}_{>0}$ be given by $R(m)=D\cdot n(m)$.
Given a CNF formula with $m$ variables, we will choose the replication factor to be $R(m)$.
For $z\in\{0,1\}^m$, let $f_z$ be the function computed by $\mathcal{E}$ on input $z$, and define the hidden partition $\pi_z\colon \mathcal{C}\cong [R(m)]\times I\times\mathbb Z_q\rightarrow \mathcal{C}_0\cong I\times\mathbb Z_q$ by $\pi_z(a,i,t)\coloneqq (i,f_z(a \mod n(m),t))$. Note that $\pi_z$ maps $B(i)$ to $P_i$ and moreover, the preimage of a base client $(i,t)\in \mathcal C_0$ contains exactly $|f_z^{-1}(t)|\cdot D=R(m)$ clients from $\mathcal C$.

Next, we explain how to, given a CNF formula $\varphi$, obtain a circuit of polynomial size that computes $h_{\mathrm{ref}}$, if $\varphi$ is unsatisfiable, and $h_{\pi_z}$, if $\varphi$ has a unique satisfying assignment $z$. Write $n\coloneqq n(m)$ and $R\coloneqq R(m)$.
We now describe the algorithms $H_1$, $H_2$ and $L$ needed for \cref{lemma:compute_circuit_for_functions}.
 \begin{description}
 	\item[$H_1$:] Given $m$ and $C\subseteq [R(m)]\times \mathcal{C}_0$, $H_1$ computes $h_{\mathrm{ref}}(C)$. Note that this can indeed be done in polynomial-time: The symmetric assignment instance $(\mathcal C_0,\mathcal F,\dist,\AbstractInputCost,\AbstractGroup)$ is fixed in \cref{thm:computational-symmetry-gap} and does not depend on $\varphi$, so $|\mathcal C_0|$ is a constant and as such, we can, in constant time, write down the complete value table of the function $\AbstractInputCost$. Moreover, $|I|$ is a constant, so the minimum in \eqref{eq:abstract-reference-function} can be evaluated in polynomial time.
 	\item[$H_2$:] Given $z\in \{0,1\}^m$ and $C\subseteq [R(m)]\times \mathcal{C}_0$, $H_2$ applies $\mathcal{E}$ to compute $f_z$, and hence $\pi_z$, and then computes $h_{\pi_z}(C)$, which can be done in polynomial time by evaluating \eqref{eq:abstract-hidden-function} as $|\mathcal C_0|$ is constant.
 	\item[$L$:] Given $m$ and $C\subseteq [R(m)]\times \mathcal{C}_0$, $L$ computes a polynomial-size set $L_m(C)\supseteq\{z\colon H_1(m,C)\ne H_2(z,C)\}$. By \cref{lem:abstract-agreement}, it suffices that $L_m(C)$ contains all $z\in \{0,1\}^m$ such that $C$ is not \kl[abstract well balanced]{well-balanced} for $\pi_z$ (with parameter $\AbstractPenaltyWeight=\delta R$). For $i\in I$, let $\widetilde{C}_i\coloneqq \{(a,j)\in [R(m)]\times\mathbb{Z}_q\colon (a,i,j)\in C\}$. Consider a base client $u=(i,t)\in\mathcal C_0$. We have
	\[\AbstractBlockAvg_u(C)=\frac{|C\cap B(i)|}{|P_i|}=\frac{|\widetilde{C}_i|}{q}\text{ and}\]
 	\[\AbstractCopyCount_u(C)=|C\cap \pi_z^{-1}(u)|=|\{(a,j)\in\widetilde{C}_i\colon  f_z(a\mod n,j)=t\}|.\] As such, $C$ is not \kl[abstract well balanced]{well-balanced} for $\pi_z$ if and only if there exists $(i,t)\in \mathcal C_0$ with
 	\begin{equation}
		\left| \frac{|\widetilde{C}_i|}{q}-|\{(a,j)\in\widetilde{C}_i : f_z(a\bmod n,j)=t\}|\right| > \delta R=\delta nD.
 \end{equation} 
 For $d\in [D]$, let $\widetilde{C}_{i,d}\coloneqq \{(r,j)\in[n]\times \mathbb{Z}_q\colon ((d-1)n+r,j)\in\widetilde{C}_i\}$. Note that $|\widetilde{C}_i|=\sum_{d\in [D]}|\widetilde{C}_{i,d}|$ and $|\{(a,j)\in\widetilde{C}_i : f_z(a\bmod n,j)=t\}|=\sum_{d\in [D]} |\widetilde{C}_{i,d}\cap f_z^{-1}(t)|$. Hence, if $C$ is not \kl[abstract well balanced]{well-balanced} for $\pi_z$, then there exists $d\in [D]$ such that 
 	\[\left| \frac{|\widetilde{C}_{i,d}|}{q}-|\widetilde{C}_{i,d}\cap f_z^{-1}(t)|\right|> \delta n,\] i.e., $\widetilde{C}_{i,d}\subseteq [n]\times \mathbb{Z}_q$ is $\delta n$-unbalanced for the partition induced by $f_z$. Hence, we can obtain a polynomially-sized list including all such vectors $z$ by applying the algorithm $\mathcal{L}$ to all of the sets of the form $\widetilde{C}_{i,d}$, which can be computed from $C$ in polynomial time.
 \end{description}
 We are now ready to describe our algorithm for SAT. Given a CNF formula $\varphi$ in $m$ variables, we apply \cref{lemma:compute_circuit_for_functions} with the algorithms $H_1$, $H_2$ and $L$ described above, to obtain a circuit $\Lambda$ of polynomial size that computes a function $f\colon 2^{\mathcal C}\rightarrow\{0,\dots,|\mathcal C|\}$ which agrees with $h_{\mathrm{ref}}$, if $\varphi$ is unsatisfiable, and with $h_{\pi_z}$, if $\varphi$ is uniquely satisfiable with satisfying assignment $z$. In either of these cases, the circuit computes a matroid rank function.
 
 Now, we apply the randomized algorithm $\mathcal A$ for submodular facility location to the instance, with facility cost function given by the circuit $\Lambda$. We stop $\mathcal A$ after polynomially many steps, as guaranteed for instances of submodular facility location, and output `no' if it has not terminated by then. This only matters if $\varphi$ has several satisfying assignments, since otherwise $\Lambda$ computes a matroid rank function. Let the output be clusters $(C_f)_{f\in \mathcal{F}}$. We apply the algorithm $L$ to each cluster $C_f$ and output `yes' if and only if some $z\in\bigcup_{f\in\mathcal F}L_m(C_f)$ satisfies $\varphi$.
 
 If $\varphi$ is unsatisfiable, no candidate satisfies $\varphi$, so we output `no'. If $\varphi$ has multiple satisfying assignments, there is nothing to prove. The interesting case appears when $\varphi$ has exactly one satisfying assignment $z$. In this case, $\Lambda$ encodes the matroid rank function $h_{\pi_z}$ and, hence, with constant probability, the solution $(C_f)_{f\in\mathcal{F}}$ returned by $\mathcal{A}$ is an $\alpha$-approximation of the optimum and by \eqref{eq:computational-optima}, we can bound its cost by
 \[
 \alpha\AbstractHidOpt
 \leq\alpha R(\AbstractInputOpt+\nu)
 <R\AbstractSymOpt=\AbstractRefOpt.
 \]  In particular, the solution is cheaper than the optimum with respect to $h_{\mathrm{ref}}$. But this means that for some cluster $C_f$, we must have $h_{\pi_z}(C_f)\ne h_{\mathrm{ref}}(C_f)$, i.e., $H_2(z,C_f)\ne H_1(m,C_f)$. By the guarantee of $L$, we have $z\in L_m(C_f)$, so we output `yes'.
 \end{proof}

\clearpage
\end{document}